\newcommand{\Obj}{\mathrm{Obj}}
\newcommand{\J}{\mathcal{J}}
\newcommand{\budgetinstance}{\ensuremath{\mathcal{I}^B}}
\newcommand{\flowenergyinstance}{\ensuremath{\mathcal{I}^{FE}}}
\newcommand{\subsetsuminstance}{\ensuremath{\mathcal{I}^S}}
\newcommand{\eps}{\varepsilon}
\newcommand{\BS}{\text{BS}}
\newcommand{\idling}{\emptyset}
\newcommand{\K}{\mathcal{K}}
\title{Refining the Complexity Landscape of Speed Scaling: Hardness and Algorithms}
\titlerunning{Refining the Complexity Landscape of Speed Scaling: Hardness and Algorithms}
\author{Antonios Antoniadis}{University of Twente, the Netherlands}{a.antoniadis@utwente.nl}{https://orcid.org/0000-0003-2152-7883}{}
\author{Denise Graafsma}{University of Twente, the Netherlands}{d.f.graafsma@utwente.nl}{https://orcid.org/0009-0009-1510-7509}{}
\author{Ruben Hoeksma}{University of Twente, the Netherlands}{r.p.hoeksma@utwente.nl}{https://orcid.org/0000-0002-6553-7242}{}
\author{Maria Vlasiou}{University of Twente, the Netherlands}{m.vlasiou@utwente.nl}{https://orcid.org/0000-0002-0457-2925}{}
\authorrunning{Antonios Antoniadis, Denise Graafsma, Ruben Hoeksma, and Maria Vlasiou}
\keywords{energy-efficient algorithms, scheduling, flow-time minimization, linear program, NP-hard, speed scaling}
\begin{document}
\maketitle
\begin{abstract}
    We study the computational complexity of scheduling jobs on a single speed-scalable processor with the objective of capturing the trade-off between the (weighted) flow time and the energy consumption. This trade-off has been extensively explored in the literature through a number of problem formulations that differ in the specific job characteristics and the precise objective function. 
    Nevertheless, the computational complexity of four important problem variants has remained unresolved and was explicitly identified as an open question in prior work (see~\cite{ComplexitySpeedScaling}). In this paper, we settle the complexity of these variants. 
    
    More specifically, we prove that the problem of minimizing the objective of total (weighted) flow time plus energy is NP-hard for the cases of (i) unit-weight jobs with arbitrary sizes, and (ii)~arbitrary-weight jobs with unit sizes. These results extend to the objective of minimizing the total (weighted) flow time subject to an energy budget and hold even when the schedule is required to adhere to a given priority ordering. 
    
    In contrast, we show that when a completion-time ordering is provided, the same problem variants become polynomial-time solvable. The latter result highlights the subtle differences between priority and completion orderings for the problem. 

\end{abstract}

\newpage
\setcounter{page}{1}

\section{Introduction}

Energy is a fundamental and scarce resource that powers every aspect of society. Data centers are among the largest energy consumers globally~\cite{iea}. As the demand for digital services continues to grow, so does the need for more sustainable computing, which in turn calls for the integration of energy efficiency considerations into the design and analysis of algorithms, alongside classical computational resources such as time and space. Within this context,  energy management techniques play a central role in optimizing the trade-off between energy consumption and quality of service. One of the most extensively studied such techniques 
in the algorithmic literature is \emph{speed scaling}. Speed scaling grants the operating system the ability to dynamically adapt the processor speed, allowing higher speeds when performance is critical at the expense of a higher energy consumption, and lower speeds when energy savings are prioritized, for example during off-peak times.

To formalize this trade-off, consider the following underlying scheduling problem: Given a set of~$n$ preemptable jobs, each job~$j$ is associated with a \emph{release time}~$r_j$ denoting the earliest time the job can begin being processed, a \emph{processing volume}~$v_j$ denoting the number of CPU cycles required for processing the job, and a \emph{weight}~$w_j$ denoting the relative importance of the job.
The jobs are to be processed on a speed-scalable processor and the power-consumption corresponding to each allowable speed is given by a power function~$P$. 
The set of different allowable speeds can be discrete or continuous (in the latter case,~$P$ is usually considered to be a continuous convex function of the speed). When the processor runs at a \emph{speed}~$s$, it can process~$s$ units of volume per unit of time. One can think of the speed as the CPU frequency. Naturally, the energy consumption of the processor is defined by the integral over time of~$P(s(t))$, where~$s(t)$ denotes the speed of the processor at time~$t$. A schedule
determines which job is processed at what speed at each time~$t$, so that each job~$j$ is fully processed after its respective release time~$r_j$. As a QoS objective, we consider the total (weighted) \emph{flow time} (also known as \emph{response time}) of a schedule, that is, the (weighted) sum over all jobs of the difference between the time their processing is completed and the time they were released. In other words, the total (weighted) flow time of a schedule is the total accumulated time that jobs spend in the system. 
We are interested in the tradeoff between the energy consumption and the total (weighted) flow time, which we capture by considering as an objective function the sum of energy and total (weighted) flow~time.\footnote{Note that the desired trade-off between energy and flow can be represented by scaling the power function~$P$ accordingly.}

We adopt the quintuple notation~$\star$-$\star\star\star\star$ from~\cite{ComplexitySpeedScaling} to describe different variants of the problem related to this work: The first entry denotes whether the objective is the total (weighted) \emph{F}low plus \emph{E}nergy or the total (weighted) flow time subject to an energy \emph{B}udget (think, e.g., of a battery-powered device). The second entry distinguishes whether the flow is \emph{I}ntegral or \emph{F}ractional\footnote{In this work we only consider the, more common, integral flow -- but keep the tuple entry nevertheless for reasons of consistency.}, the third whether the speed is \emph{C}ontinuous or \emph{D}iscrete, the fourth whether jobs are \emph{W}eighted or \emph{U}nweighted (i.e., all weights are one), and the fifth and final entry whether the sizes are \emph{A}rbitrary or \emph{U}nit.

The combination of speed scaling with a flow time objective  was first explored by Pruhs et al.~\cite{PruhsUW08} who presented a polynomial-time algorithm for the B-ICUU variant of the problem.  The combined objective, which is also the focus of this work, was introduced  by Albers and Fujiwara~\cite{AlbersF07}. They studied the unit-size job setting and proposed a polynomial-time algorithm based on dynamic programming for FE-ICUU, which can also be extended to the energy-budget variant. With respect to the fractional flow objective, Antoniadis et al.~\cite{AntoniadisBCKNP17} showed that FE-FDWA is solvable in polynomial time by an incremental algorithm. On the other hand, the variants~$\star$-I$\star$WA are known to be $\mathsf{NP}$-hard already in the fixed-speed setting~\cite{LABETOULLE1984245}, while Megow and Verschae~\cite{MegowV18} show the $\mathsf{NP}$-hardness of B-IDWA. Finally, Barcelo et al.~\cite{ComplexitySpeedScaling}  settle the computational complexity of variants B-IDUA, B-IDWU, FE-IDUU, and FE-FCWA: The first two are $\mathsf{NP}$-hard whereas the latter two are solvable in polynomial time.
Four variants remained open: FE-IDUA, FE-ICUA, FE-IDWU, FE-ICWU (see also~\cite{ComplexitySpeedScaling} and the open problem discussion in~\cite{barcelo}). Our first main contribution is to resolve those cases by proving that all four variants are $\mathsf{NP}$-hard.

Our $\mathsf{NP}$-hardness results further suggest that simultaneously optimizing the speed(s) at which each job is processed and the order at which jobs are processed is inherently difficult. For the unweighted variants, if the average speed at which each job is processed is given, then an optimal schedule is easily constructed by prioritizing the jobs according to the shortest remaining processing time rule (SRPT). 
However, we show that determining the optimal speeds is $\mathsf{NP}$-hard even when given a priority ordering -- for most variants of the problem even if this priority ordering corresponds to an optimal schedule. In contrast,  as our second main contribution, we also show that a completion ordering is sufficient for efficiently computing optimal speed schedules. We note that such a result for FE-IDUA was claimed before by Barcelo et al.~\cite{ComplexitySpeedScaling} as an extension of an algorithm for FE-IDUU, which was only analyzed for the case of two speeds. They claim that, for this~$2$-speed algorithm, ``it is straightforward to generalize it to~$k$ speeds'', but we provide counterexamples for the two, in our opinion, most natural generalizations (see Section~\ref{sec:natural-generalizations} for more details). Our algorithm on the other hand uses a different, LP-based approach. Furthermore, our results can be extended to the budget version of each variant as well.

\paragraph*{Our contribution}

 Our first main result is showing in Section~\ref{sec:hardness} that FE-IDUA and FE-IDWU are $\mathsf{NP}$-hard.
 \begin{restatable}{theorem}{UAhard}
 \label{thm:UA}
    FE-IDUA is $\mathsf{NP}$-hard.
 \end{restatable}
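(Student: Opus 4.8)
The plan is to reduce from \textsc{Subset Sum}: given positive integers $a_1,\dots,a_m$ with sum $\Sigma$ and a target $T\le\Sigma$, decide whether some $A\subseteq\{1,\dots,m\}$ has $\sum_{i\in A}a_i=T$. From such an instance $\subsetsuminstance$ I would build, in polynomial time, an FE-IDUA instance $\flowenergyinstance$ together with a value $\kappa$ (and a prescribed priority order), so that $\flowenergyinstance$ admits a schedule of cost at most $\kappa$ if and only if $\subsetsuminstance$ is a yes-instance. Since FE-IDUU is solvable in polynomial time, the reduction must genuinely exploit that sizes are arbitrary, and the natural way is to encode the $a_i$ as job volumes. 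The same gadget, with the relevant energy read as a hard budget instead of being added to the objective, should simultaneously yield the companion hardness of B-IDUA through an intermediate instance $\budgetinstance$.

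Concretely, I would use a speed set with a cheap ``slow'' speed and a strictly less efficient ``fast'' speed (doubling throughput more than doubles power, as for any strictly convex power function), and two groups of jobs. The first is a group of $m$ ``item jobs'' $J_1,\dots,J_m$, released and prioritized so that every sensible schedule runs them consecutively in the given order, with $J_i$ of volume proportional to $a_i$ so that running $J_i$ fast rather than slow shortens its processing by exactly $a_i$ time units while raising its energy by an amount also proportional to $a_i$. A schedule is then described, up to normalization, by the set $A$ of item jobs run fast: the item block finishes $\sigma:=\sum_{i\in A}a_i$ units earlier than the all-slow schedule, and the energy it consumes grows linearly in $\sigma$ with some positive slope $\mu$. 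The second group consists of auxiliary ``deadline'' jobs released at a reference time tuned so that they incur only a fixed flow when $\sigma\ge T$, but an extra $\lambda$ units of flow per unit of shortfall $T-\sigma$ when $\sigma<T$, with $\lambda>\mu$. As a function of $\sigma$ the cost then behaves like $\mu\sigma+\lambda\max\{0,T-\sigma\}+\mathrm{const}$, which is uniquely minimized at $\sigma=T$; that minimum is realized by an actual schedule if and only if some subset sums to exactly $T$, so we take $\kappa$ equal to it. Since the deadline jobs can be given top priority and the item block already has a fixed order, the construction respects a prescribed priority order (which in the yes-case can even be taken to agree with an optimal schedule).

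Correctness has an easy direction --- a subset summing to $T$ gives the schedule above, of cost exactly $\kappa$ --- and a converse resting on a normal-form lemma: an optimal schedule of $\flowenergyinstance$ idles only when nothing is pending, processes jobs in the prescribed order, and runs each item job entirely at a single speed (this last point by a convexity/exchange argument, since under a convex power function splitting a job's volume across speeds never beats committing to one speed). An optimal schedule is therefore encoded by a set $A$, and substituting into the cost formula forces $\sigma=T$.

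The main obstacle is ensuring that flow and energy depend on the schedule \emph{only} through $\sigma=\sum_{i\in A}a_i$, and not through which item jobs are fast or in which positions. Running the item jobs back-to-back makes the block's own contribution to the flow objective sensitive to the positions of the fast jobs, and this position-dependent term can be of order $\Theta(m\Sigma)$ --- potentially exponential in the bit-length of $\subsetsuminstance$, hence large enough to overwhelm the per-unit ``signal'' (which is only polynomially large, being produced by polynomially many deadline jobs) that is meant to detect $\sigma=T$. The volumes of the item and deadline jobs, the release and reference times, the priorities, and the two power values thus all have to be chosen so that this noise is canceled exactly, or at least provably dominated, while every parameter stays polynomially bounded; this balancing act --- together with the normal-form lemma and making it compatible with the priority-ordering strengthening --- is where the real work lies.
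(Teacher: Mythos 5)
Your proposal reduces directly from \textsc{SubsetSum}, whereas the paper reduces from B-IDUA with two speeds, which Barcelo et al.\ already proved $\mathsf{NP}$-hard. The advantage of the paper's route is precisely that it sidesteps the obstacle you correctly identify at the end of your write-up: by keeping the original B-IDUA jobs unchanged, all of the hard-to-control combinatorics of flow versus shrinkage among the ``item'' jobs is delegated to the known hardness, and the only thing the new reduction must accomplish is a \emph{simulation of the energy budget} inside the flow-plus-energy objective. That simulation is achieved by one filler job of large volume $v_{n+1}=I(\sigma_1)+(s_2+1)V+Y$ released at time~$0$ and $n$ small deadline jobs of volume $Y+1$ released far in the future, together with a power level chosen so that the shrinking energy $\tilde\Delta_1$ lies strictly between $n+1$ and $n+2$; this forces exactly $V-Y$ processing time on the original jobs (Lemma~\ref{lemma:reduction_Y_on_old}), and the restriction of any optimal FE schedule to those jobs is then shown to be optimal for B-IDUA (Lemma~\ref{lemma:fe-reduction}).

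Your plan, by contrast, has a genuine gap at its central step, which you yourself flag: running the item jobs back to back makes the item block's flow contribution depend on \emph{which} items are sped up and in which positions, producing a noise term of order $\Theta(m\Sigma)$ that can be exponential in the input's bit length, while your deadline-job signal is only polynomial. You write that canceling or dominating this noise ``is where the real work lies,'' but you supply no mechanism that does so; and the normal-form and priority-compatibility claims are similarly asserted rather than argued. Worth noting as well: the Barcelo et al.\ B-IDUA gadget does \emph{not} run item jobs consecutively -- it uses job packages released far enough apart that distinct packages never interact, precisely to eliminate the positional cross-talk you describe. So a back-to-back item block is the wrong starting structure even if one insisted on a direct \textsc{SubsetSum} reduction. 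As written, the proof is incomplete at exactly the point you identify as the real work.
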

 \begin{restatable}{theorem}{WUhard}
    FE-IDWU is $\mathsf{NP}$-hard.
    \label{thm:WU}
 \end{restatable}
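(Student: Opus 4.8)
The plan is to reduce from Subset Sum, following the blueprint of the proof of Theorem~\ref{thm:UA} but adapting it to the unit-size setting: the numbers of the Subset Sum instance are now encoded into job \emph{weights} (and the weighted flow they induce), and the auxiliary gadget jobs are re-tuned accordingly.

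Concretely, given an instance \subsetsuminstance{} of Subset Sum with items $a_1,\dots,a_m$ and target $T$, I would build an FE-IDWU instance \flowenergyinstance{} with two discrete speeds --- a slow, energy-cheap speed $s_\ell$ and a fast, energy-expensive speed $s_h$ --- so that a unit job processed at $s_\ell$ occupies time $d_\ell:=1/s_\ell$ and costs energy $\eps_\ell:=P(s_\ell)/s_\ell$, and similarly $d_h<d_\ell$, $\eps_h>\eps_\ell$ at the fast speed. The instance contains one unit-size \emph{item job} $I_i$ per value $a_i$, whose weight depends on $a_i$ and whose release time confines it to a window in which the only real decision is whether to run it at $s_\ell$ or at $s_h$; running $I_i$ fast will correspond to placing $a_i$ in the selected set. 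On top of these come a few heavily weighted \emph{anchor/enforcer jobs}, released at carefully chosen times, whose purpose is purely structural: their large weights make any idling while work is pending, or any reordering of the item jobs, prohibitively expensive, and the time at which an enforcer is forced to start turns the total amount of sped-up work into an objective term with a ``kink'' at exactly the target. The threshold $B$ on (weighted) flow plus energy is chosen so that a yes-instance of \subsetsuminstance{} admits a schedule of cost exactly $B$; reading off the energy-versus-flow split of this threshold instance also yields the B-IDWU claim, and taking the enforced order to be the one used by an optimal schedule yields the ``given priority ordering'' strengthening.

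The forward direction is routine: a subset summing to $T$ gives the obvious schedule (run exactly the selected item jobs fast, keep the canonical order, never idle), whose cost one computes to be exactly $B$. The backward direction carries the real weight: one must show that every schedule of cost at most $B$ is \emph{canonical}, i.e.\ it (i)~never idles while an unfinished released job exists --- immediate, since idling only postpones completions without lowering energy; (ii)~processes each unit job in a single uninterrupted block run entirely at one of the two speeds --- a convexity exchange argument, using that over the two available speeds the energy--duration tradeoff of a unit job is the line segment between $(d_h,\eps_h)$ and $(d_\ell,\eps_\ell)$, so that, together with the objective's linear dependence on completion times, an endpoint is always at least as good, even after accounting for how one job's duration shifts later jobs; and (iii)~obeys the order and positions dictated by the enforcers --- standard exchange arguments exploiting their dominating weights. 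A canonical schedule is then described precisely by the set $S$ of item jobs run fast, and its cost is an affine function of $\sum_{i\in S}a_i$ plus a nonnegative deviation penalty that vanishes iff $\sum_{i\in S}a_i=T$, so cost at most $B$ forces $\sum_{i\in S}a_i=T$; since the deviation is by an integer amount, no-instances incur cost at least $B+\gamma$ for a fixed $\gamma>0$, which establishes soundness. I expect the main obstacle to be the combination of (a)~normalizing arbitrary continuous-time, fractional-speed, possibly preemptive schedules to canonical form and (b)~engineering the enforcer gadget so that the Subset Sum value enters the objective with exactly this ``kink'' shape despite unit sizes forcing all duration information to flow through the weights; the standard remedy for (a) is a chain of local exchange and smoothing steps, none of which increases the objective.
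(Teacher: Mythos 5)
Your high-level plan---reduce from Subset Sum and add gadget jobs to simulate a budget constraint in the flow-plus-energy objective---is the right idea and matches the paper's strategy. However, there are two genuine gaps that make the proposal unsound as written.

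First, the encoding of $a_i$ does not create a Subset Sum structure. You propose one unit-size item job per element, with weight encoding $a_i$, and say ``running $I_i$ fast will correspond to placing $a_i$ in the selected set.'' But for a unit-size job the time saved by running fast instead of slow is $1/s_\ell - 1/s_h$, which is the \emph{same} for every job and independent of $a_i$. A budget-simulating constraint therefore collapses to ``speed up at most $k$ of the item jobs,'' and the optimal choice is to greedily pick the $k$ jobs with the largest marginal weighted-flow benefit---a threshold/knapsack-with-equal-sizes problem, not Subset Sum. The paper avoids this by reusing the Barcelo package construction: each element $a_i$ becomes a package $\J_i$ of $m+1$ unit jobs, and the release-time offset $\alpha_i = a_i/(2a_1^2)$ makes the time shrinkage obtainable from speeding up package $\J_i$ proportional to $a_i$. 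The value of $a_i$ must enter the \emph{timing} structure, not only the weights; with unit sizes that can only be achieved through release-time geometry across several jobs, not through a single job.

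Second, the soundness argument is hand-waved exactly where the real difficulty lies. You assert that ``the deviation is by an integer amount, so no-instances incur cost at least $B+\gamma$,'' and that the enforcers' ``dominating weights'' pin down the order via standard exchanges. Both claims conflict with a constraint you yourself would face: in the unit-size setting you cannot use one large-volume job to force timing (as the FE-IDUA reduction does), so the budget-simulating gadget must consist of \emph{many low-weight} unit jobs. But low-weight gadget jobs precisely do \emph{not} have dominating weights, so their completion-time position relative to the item jobs is genuinely ambiguous, and the extra flow they contribute is a non-integral perturbation to the objective. The paper flags this as the central obstacle and resolves it not by pinning down the order, but by choosing gadget weights (e.g.\ $1/(32m^5)$) small enough that the resulting uncertainty in the optimal objective is at most $1/(16m)$, strictly smaller than the YES/NO gap of $1/(8m)$ inherited from Barcelo et al.'s B-IDWU analysis. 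Without such a quantitative argument, the ``canonical form'' normalization you invoke has no clear reason to exist, and the claimed integer separation does not hold. In short, your proposal needs (i) the package-based encoding so the Subset Sum value enters via timing, and (ii) an explicit bound showing the gadget-induced objective perturbation stays below the Barcelo gap, rather than an integrality argument.
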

 The proof of Theorem~\ref{thm:UA} consists of a reduction from the budget variant of the problem B-IDUA with two available speeds, which is known to be $\mathsf{NP}$-hard~\cite{ComplexitySpeedScaling}.
 Via an intricate construction, we are able to simulate an energy-budget constraint within the flow-plus-energy setting for any such two-speed B-IDUA instance. At a high level, the idea is to create an additional job of comparatively large volume along with a substantial number of smaller jobs which are released significantly after all other jobs. In an optimal schedule, these latter small jobs must be processed immediately upon release so as to keep the total flow time low. 
 This forces the large-volume job to complete before the release of the smaller jobs, and in turn,  a specific amount of volume from the original B-IDUA instance needs to be run at speed~$s_2$ thereby simulating the original budget constraint. Our reduction gives further insight into the relationship between the corresponding budget and flow-energy trade-off variants of the~problem.

Note that the above reduction crucially depends on a job of comparatively large volume, something that the FE-IDWU variant does not allow for. As a consequence, the proof of Theorem~\ref{thm:WU} requires a more technically involved argument. The reduction in this case is directly from \textsc{SubsetSum}: instead of introducing a large-volume job, we introduce a large number of low-weight unit-size jobs. However, this causes the challenge that the completion-time ordering of an optimal schedule becomes difficult to predict: small changes in speed might affect the order significantly. We overcome this challenge by exploiting the gap between YES and NO instances in the original proof of the NP hardness of B-IDWU in~\cite{ComplexitySpeedScaling}: By carefully setting the weight of these additional jobs, we are able to simulate an energy-budget constraint while at the same time not affecting the total (weighted) flow of the original jobs by too much. 

As a corollary, both $\mathsf{NP}$-hardness results can be extended to the respective continuous-speed problems. Therefore, Theorems~\ref{thm:UA} and~\ref{thm:WU} resolve the complexity of the last four unresolved variants and complete the computational complexity landscape of flow-plus-energy and flow under an energy-budget~problems.  

For each of the considered problem variants, any algorithm must address two decisions for each interval of time: determine which job to run and select a speed at which to run it. We dive deeper into the complexity of the problem by exploring how the availability of different types of information regarding which job to run affects the computational complexity of the problem. We note that given an optimal speed profile (or equivalently, the average speed of each job under an optimal schedule), two types of such information are sufficient to efficiently compute an optimal schedule: a \emph{priority ordering} (specifying which job among the available ones should be processed) and a \emph{completion-time ordering} (specifying the exact order in which the jobs should complete).

We show that our $\mathsf{NP}$-hardness reductions can be extended to show that all considered problem variants remain $\mathsf{NP}$-hard, even if schedules are required to adhere to a given priority ordering. We discuss in more detail and prove these extensions in Section~\ref{sec:hardness-priority}.

At first glance, being given a completion-time ordering might seem as informative as being given a priority ordering. However, somewhat surprisingly, we show that all considered variants become solvable in polynomial time when a completion-time ordering is given.

We extend the quintuple notation by the suffix -P or -C, indicating that feasible schedules must adhere to a given priority ordering or completion-time ordering, respectively. In Section~\ref{sec:lp}, we obtain the following result regarding completion-time orderings.

 \begin{restatable}{theorem}{LPresult}
    There is a polynomial-time algorithm that computes optimal schedules for all~$\star$-ID$\star\star$-C variants. 
    \label{thm:lp-intro}
 \end{restatable}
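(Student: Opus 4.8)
The plan is to fix the given completion‑time ordering once and for all, relabel the jobs $1,\dots,n$ so that the schedule must complete job $i$ no later than job $i+1$, and then express ``best schedule consistent with this ordering'' as the optimum of a polynomial‑size linear program, the same LP template serving every variant.

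A first, routine observation removes the speed‑selection difficulty: the processor has a fixed set of speeds $s_1<\dots<s_k$ with powers $P(s_1),\dots,P(s_k)$, and over any time interval we may time‑share between speeds, so for the purposes of volume processed and energy spent only the \emph{total time spent at each speed} matters. Accordingly the LP will use, for every job $i$, every speed level $\ell$, and every one of the $O(n)$ elementary intervals $I_1,\dots,I_m$ into which the sorted distinct release times partition $[\min_j r_j,\infty)$ (truncated at an a priori polynomial bound on the optimal makespan), a variable $y_{i,\ell,q}\ge 0$ recording the total time job $i$ runs at speed $s_\ell$ inside $I_q$. Three families of ``hard'' constraints are immediate: $y_{i,\ell,q}=0$ whenever $I_q$ begins before $r_i$; $\sum_{i,\ell} y_{i,\ell,q}\le |I_q|$ for every interval; and $\sum_{q,\ell} s_\ell\,y_{i,\ell,q}=v_i$ for every job, so that each job is fully processed within its release window.

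The crux is to express the completion times, and hence the weighted flow, linearly in the $y$'s. For this I would first prove a structural lemma: there is an optimal schedule in \emph{canonical form}, meaning that given the allocation $(y_{i,\ell,q})$ the processor, in each interval and whenever it has work, runs the released, not‑yet‑finished job of smallest completion‑order index, only ``working ahead'' on a later job while no earlier unfinished job is available and idling only once every released job is complete. The justification is a pair of exchange arguments: reordering the processing within the active time changes no speeds, hence no energy, but serving the smaller index first only pulls its completion earlier; and working ahead rather than idling only reduces the speed (hence energy) needed later. In a canonical schedule each job occupies, inside every interval, a single contiguous slot, which — by convexity of $P$ — may be taken to run at one speed or a time‑share of the two adjacent speeds bracketing its volume‑to‑length ratio, so the $C_i$ are determined by the $y$'s. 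I would then augment the LP with completion‑time variables $C_1\le\dots\le C_n$, $C_i\ge r_i$, and linear inequalities forcing each $C_i$ to be at least the canonical finishing time of job $i$ — morally $C_i\ge(\text{start of the last interval touched by job }i)+\sum_{i'\le i}\sum_\ell y_{i',\ell,q}$ for the relevant $q$ — together with the per‑interval bookkeeping of ``time spent on jobs $\le i$'' needed to make these valid without knowing in advance which interval contains $C_i$; the canonical structure is exactly what makes the bounds simultaneously correct and tight. The objective is then linear: minimize $\sum_i w_i(C_i-r_i)+\sum_{i,\ell,q}P(s_\ell)\,y_{i,\ell,q}$ for the FE‑variants, or minimize $\sum_i w_i(C_i-r_i)$ subject to $\sum_{i,\ell,q}P(s_\ell)\,y_{i,\ell,q}\le B$ for the B‑variants; unweighted versus weighted and unit versus arbitrary sizes are only the choices $w_i\equiv 1$ resp.\ $v_i\equiv 1$, so a single LP covers all eight $\star$-ID$\star\star$-C cases.

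Finally I would verify correctness in both directions — every ordering‑respecting schedule induces a feasible LP point of equal cost, and every feasible LP point yields a genuine schedule of cost at most the LP value, obtained by processing within each interval the jobs in completion order for their allotted times and speeds — and note that the LP has $O(n^2k)$ variables and polynomially many constraints of polynomial bit‑size, hence is solvable in polynomial time. I expect the structural lemma together with the linear encoding of the completion times to be the main obstacle: a job's completion time depends on which elementary interval it falls into, and there are in principle exponentially many monotone ways to distribute the $n$ completions among the $O(n)$ intervals; it is the canonical‑form argument — with its case analysis of work‑ahead versus forced idling and of fully versus partially used intervals — that collapses this blow‑up and lets the right constraints be written linearly.
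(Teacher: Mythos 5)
Your high-level plan agrees with the paper: fix the completion ordering, write an LP in which the speeds are convexified away and the completion (flow) times appear linearly, and solve it. But there is a genuine gap at exactly the point you flag as ``the main obstacle,'' and you never actually close it. You describe, only ``morally,'' a constraint of the form $C_i\ge (\text{start of last interval touched by }i)+\sum_{i'\le i}\sum_\ell y_{i',\ell,q}$, and say the canonical-form lemma plus per-interval bookkeeping make this valid and tight. That is not a linear constraint (it references an unknown interval $q$), and the sketched fix does not work as stated: the busy period ending at $C_i$ can span many elementary intervals, so the completion time is not ``start of the last interval plus work done in it,'' and tracking cumulative per-interval work on jobs $\le i$ reintroduces the case explosion you are trying to avoid.

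The missing idea, which the paper supplies, is that you do not need to know where $C_j$ falls. For every candidate left endpoint of the busy period that could end at $C_j$, namely every $j'\preceq j$ with $r_{j'}\le r_j$, one writes
\[
\hat C_j \;\ge\; r_{j'} \;+\; \sum_{\substack{j''\preceq j\\ r_{j''}\ge r_{j'}}} x_{j''}\,,
\]
where $x_{j''}=\sum_i \lambda_{j''}^i x_{j''}^i$ is the total processing time of $j''$. Each such inequality is always valid (those jobs must be fully processed inside $[r_{j'},\hat C_j]$ because they are released after $r_{j'}$ and must finish by $\hat C_j$), and for the true start of the busy period it is tight. This family of $O(n^2)$ constraints replaces the entire interval/canonical-form machinery and is the crux of the proof. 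Once you have it, your remaining structure (convexified speeds, ordering constraint on the $\hat C_j$, linear objective, two directions of equivalence) matches the paper, and your interval variables $y_{i,\ell,q}$, the per-interval capacity constraints, and the a priori makespan bound all become unnecessary overhead: the paper's LP uses only the $\lambda_j^i$ and the $\hat C_j$, and needs no horizon truncation. As written, though, your proposal identifies the hard step but does not solve it.
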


In addition to providing efficient algorithms, Theorem~\ref{thm:lp-intro} allows for a better understanding of the interplay between deciding on the job order and on the processor speed at each time. As a consequence, our results provide more insight into where the complexity of the problem stems from.
 In particular, when combined with our previous $\mathsf{NP}$-hardness results, Theorem~\ref{thm:lp-intro} directly implies that for~$\star$-IDUA,~$\star$-IDWU and~$\star$-IDWA it is $\mathsf{NP}$-hard to compute an optimal completion-time ordering.

 The algorithm implied by Theorem~\ref{thm:lp-intro} is based on a novel LP formulation for FE-ID-$\star\star$-C that effectively makes use of the structure imposed by the completion-time ordering.  The LP formulation can be adapted to solve the respective budget variants B-ID$\star\star$-C as well, with a simple modification in the objective function and the addition of one constraint.

\paragraph*{Further related work}

Speed scaling was introduced to the algorithmic literature thirty years ago by Yao, Demers and Shenker~\cite{YaoDS95}. Their seminal work focused on minimizing the total energy consumption of a schedule subject to deadline constraints and considered both the offline and the online settings. Since then, a plethora of different speed scaling settings have been considered, one of which is that of flow times as a QoS -- motivated by the fact that, in many practical situations, jobs are not labeled with deadlines. 

In addition to the results already discussed, the problem has been extended to multiprocessors (see~\cite{Bunde06}, which also shows that exact optimal B-ICWU schedules cannot be computed even on a machine 
that can do real arithmetic operations and root computations). The online setting of different variations has also been widely explored; see~\cite{AlbersF07,BansalPS07,GuptaIKMP12,LamLTW13, BansalC09} for some examples.

For a more extensive overview of the literature on energy-efficient scheduling, we refer the reader to the surveys by Albers~\cite{Albers10-survey} and Irani and Pruhs~\cite{IraniP05}.

\section{Formal Problem Description and Notation}
Consider~$n$ \emph{jobs}~$1,\ldots,n$ that have to be scheduled on a single speed-scalable processor. Each job~$j$ is associated with a corresponding \emph{release time}~$r_j\in\mathbb{Q}_{\geq0}$, a \emph{processing volume}~$v_j\in\mathbb{Q}_{\geq0}$, and \emph{weight}~$w_j\in\mathbb{Q}_{\geq0}$. The processor is equipped with~$k+1$ distinct nonnegative rational \emph{speeds}:~$s_0< \dots < s_k$, with corresponding nonnegative rational \emph{power consumptions}~$P_0<\dots < P_k$. If the processor runs at speed~$s_i$ for~$\tau$ time units, it processes a volume of~$\tau\cdot s_i$ and consumes an amount~$\tau\cdot P_i$ of energy. We assume that the processor idles when it uses the lowest speed (that is~$s_0=0$), that~$P_0=0$, and w.l.o.g., that $\min_j\{r_j\}=0$.

A \emph{schedule} is defined as~$\sigma=(J,S)$. 
Here,~$J\left(t\right)\in \{\idling, 1,\dots, n\}$ specifies which job is being processed at time~$t$, where~$\idling$ indicates the processor is idle. Any job~$j$ can only be processed no later than its release time~$r_j$. In other words for all~$j$, and~$t\in [0,r_j)$,~$J(t)\neq j$. Preemption of jobs is allowed. Similarly,~$S\left(t\right)\in \{0, 1,\dots,k\}$ specifies at what speed the server runs at time~$t$. For~$i\in \{1,\dots, k\}$,~$S(t)=i$ indicates that the processor runs at speed~$s_i$. 
We define
\begin{align*}
    X_j(\sigma) := \{t: J(t) = j \wedge S(t)>0\}\,.
\end{align*}
Note that~$X_j(\sigma)$ is the union of disjoint intervals; it consists of exactly the times during which job~$j$ is processed under schedule~$\sigma$. We use~$x_j(\sigma)$
to denote the total duration during which~$j$ runs under~$\sigma$ and~$\chi(\sigma) = \sum_{j=1}^n x_j(\sigma)$ to denote the total amount of time during which~$\sigma$ is executing some job. We refer to~$x_j(\sigma)$ as the \emph{processing time} of~$j$ in~$\sigma$. We say that a schedule is \emph{feasible} if every job is fully processed no earlier than its release time. That is if it satisfies the following for all~$j\in\{1,\dots,n\}$:
\begin{enumerate}
    \item~$\int_{t\in X_j(\sigma)}{s_{S(t)}dt}=v_j$, and 
    \item we have~$t\geq r_j$, for all~$t\in X_j(\sigma)$.       
\end{enumerate}

The \emph{completion time} of~$j$ in~$\sigma$, denoted by~$C_j(\sigma)$, is the maximum~$t$ in~$X_j(\sigma)$. The \emph{flow time} of~$j$ in~$\sigma$ is defined as~$F_j\left(\sigma\right):=C_j\left(\sigma\right)-r_j$ 
and the total (weighted) flow time of~$\sigma$ is 
\[
    F\left(\sigma\right):=\sum_{j=1}^{n}{w_j F_j\left(\sigma\right)}\,.
\] 
The total energy consumption of~$\sigma$ is naturally defined as the power of~$\sigma$ integrated over time: $E\left(\sigma\right):=\int_{t=0}^{\infty}P_{S\left(t\right)}dt$.

For a schedule~$\sigma$, we say that~$j$ runs at speed~$s_j(\sigma) := \frac{v_j}{x_j(\sigma)}$ (with slight abuse of notation). Note that~$s_j(\sigma)$ refers to the average processing speed of~$j$ under~$\sigma$. To distinguish this from the speeds~$s_1,\dots, s_k$ that the processor is allowed to run at as per the problem statement, we will often refer to the latter as the given or allowed speeds. 
For any~$i\in \{1,\dots k\}$, we denote by~$x_j^i = v_j/s_i$ and~$E_j^i =x_j^i P_i$, the processing time and energy consumption of~$j$ respectively under the assumption that~$j$ is processed solely at the given speed~$s_i$.
It is an easy observation that for any~$j$ and any feasible schedule~$\sigma$,~$x_j(\sigma)$ and~$E_j(\sigma)$ can be expressed as a convex combination of the~$x_j^i$'s and~$E_j^i$'s:
\begin{observation}\label{obs:proc_times_conv_comb}
    For any job~$j$ and~$\lambda_j^1,\dots, \lambda_j^k\geq 0$ with~$\sum_{i=1}^k \lambda_j^i=1$, there is a schedule~$\sigma$ with~$x_j(\sigma) = \sum_{i=1}^k \lambda_j^i x_j^i$ and~$E_j(\sigma) =  \sum_{i=1}^k \lambda_j^i E_j^i$. 
\end{observation}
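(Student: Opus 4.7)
The plan is to give a direct constructive proof by exhibiting a schedule that splits the processing of job $j$ into $k$ disjoint sub-intervals, one for each allowed positive speed, whose durations are dictated by the coefficients $\lambda_j^i$. Concretely, for each $i\in\{1,\dots,k\}$, I would allocate to job $j$ an interval of length $\lambda_j^i \cdot x_j^i = \lambda_j^i \cdot v_j/s_i$ during which the processor runs at speed $s_i$ and executes job $j$. These intervals can be placed consecutively starting at any time no earlier than $r_j$ (and, if other jobs are present, we can simply shift everyone to a sufficiently late time window so that feasibility with respect to release times is trivially maintained).

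The first step is to verify that this construction actually completes job $j$, i.e., that feasibility condition (1) is met. The total volume processed during these sub-intervals equals
\[
    \sum_{i=1}^{k} \lambda_j^i \cdot x_j^i \cdot s_i
    \;=\; \sum_{i=1}^{k} \lambda_j^i \cdot \frac{v_j}{s_i} \cdot s_i
    \;=\; v_j \sum_{i=1}^{k} \lambda_j^i
    \;=\; v_j,
\]
using the assumption that the $\lambda_j^i$ sum to one.

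The next step is to read off the two claimed equalities. By the definition of $x_j(\sigma)$ as the total duration during which $j$ is being processed, the construction immediately yields $x_j(\sigma)=\sum_{i=1}^k \lambda_j^i x_j^i$. For the energy consumed on job $j$, summing $P_i$ times the duration spent at speed $s_i$ on $j$ gives
\[
    E_j(\sigma) \;=\; \sum_{i=1}^k \lambda_j^i\, x_j^i \cdot P_i \;=\; \sum_{i=1}^k \lambda_j^i\, E_j^i,
\]
by the definition $E_j^i = x_j^i P_i$. This completes the construction.

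There is really no substantive obstacle: the argument is purely a witness construction, and convexity of the power function plays no role because we are only asserting the achievability of convex combinations, not any inequality. The only mild care needed is to ensure that the chosen sub-intervals can be placed after $r_j$ and do not overlap with the processing of other jobs, which is trivially handled by placing the construction on an empty portion of the time axis.
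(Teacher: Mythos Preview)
Your proposal is correct and follows essentially the same constructive approach as the paper: process job~$j$ at speed~$s_i$ for duration~$\lambda_j^i x_j^i$, then verify that the total volume, processing time, and energy come out right. The paper's proof additionally remarks on the converse direction (extracting the $\lambda_j^i$'s from a given schedule via $\lambda_j^i = y_j^i(\sigma)/x_j^i$), but for the direction actually stated in the observation your argument matches theirs.
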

\begin{proof}
    Processing an amount~$v_j$ of volume at a speed of~$s_i$, requires~$x_j^i$ time and~$E_j^i$ energy. Processing job~$j$ at speed~$s_i$ for~$\lambda_j^i x_j^i$ time results in~$\lambda_j^i v_j$ volume being processed at an energy of~$\lambda_j^i E_j^i$. Let~$y_j^i(\sigma)$ be the amount of time~$\sigma$ processes job~$j$ at speed~$s_i$. To be more precise:~$ y_j^i(\sigma) = \left|\{t: J(t) = j, S(t) = i\}\right|$. The result is obtained by setting~$\lambda_j^i =  y_j^i(\sigma)/x_j^i$.
\end{proof}
In general, one can obtain any convex combination~$x_j(\sigma) = \sum_{i=1}^k \lambda_j^i x_j^i$ with~$E_j(\sigma) =  \sum_{i=1}^k \lambda_j^i E_j^i$ by using each given speed~$s_i$ for~$\lambda_j^ix_j^i$ time during~$X_j(\sigma)$. Suppose that for a processing time of~$x_i$, rather than running at speed~$s_i$, we use a combination of two different given speeds~$s_q$ and~$s_r$. If this results in a lower energy consumption, then we would always prefer to use this combination of~$s_q$ and~$s_r$ and never use~$s_i$. Hence, to avoid any given speeds being superfluous, we have the following.

\begin{observation}\label{obs:no_conv_comb}
    Let~$i\in \{2,\dots,k-1\}$,~$q\in\{1,i-1\}$, and~$r\in\{i+1,k\}$. For a job~$j$, let~$\lambda\in(0,1)$ be such that~$x_j^i= \lambda x_j^q+ (1-\lambda)x_j^r$. Then~$E_j^i<\lambda E_j^q+ (1-\lambda)E_j^r$.
\end{observation}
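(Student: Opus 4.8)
The plan is to prove the inequality by a short direct computation, after first reducing it to a job-independent statement about the given speeds and powers. Both the hypothesis $x_j^i=\lambda x_j^q+(1-\lambda)x_j^r$ and the desired conclusion are homogeneous in $v_j$: substituting $x_j^\ell=v_j/s_\ell$ and $E_j^\ell=v_jP_\ell/s_\ell$ (and noting that a job with $v_j=0$ is vacuous, so we may take $v_j>0$), the hypothesis becomes $\tfrac1{s_i}=\tfrac\lambda{s_q}+\tfrac{1-\lambda}{s_r}$ and the conclusion becomes $\tfrac{P_i}{s_i}<\lambda\tfrac{P_q}{s_q}+(1-\lambda)\tfrac{P_r}{s_r}$. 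Since $q<i<r$ gives $s_q<s_i<s_r$, the number $\mu:=\tfrac{s_r-s_i}{s_r-s_q}$ lies in $(0,1)$ and satisfies $s_i=\mu s_q+(1-\mu)s_r$; it is the coefficient witnessing $s_i$ as a convex combination of $s_q$ and $s_r$.

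The heart of the argument is a one-line identity linking $\lambda$ and $\mu$. Clearing denominators in the hypothesis gives $\tfrac{s_qs_r}{s_i}=s_q+\lambda(s_r-s_q)$, whence $\lambda=\tfrac{s_q(s_r-s_i)}{s_i(s_r-s_q)}=\tfrac{s_q}{s_i}\mu$ and, symmetrically, $1-\lambda=\tfrac{s_r}{s_i}(1-\mu)$ (this also re-confirms $\lambda\in(0,1)$). Hence $\tfrac\lambda{s_q}=\tfrac\mu{s_i}$ and $\tfrac{1-\lambda}{s_r}=\tfrac{1-\mu}{s_i}$, so
\[
  \lambda E_j^q+(1-\lambda)E_j^r \;=\; v_j\Bigl(\tfrac\lambda{s_q}P_q+\tfrac{1-\lambda}{s_r}P_r\Bigr) \;=\; \frac{v_j}{s_i}\bigl(\mu P_q+(1-\mu)P_r\bigr).
\]
It then remains to invoke the (strict) convexity of the power function at the given speed $s_i$: since $s_i=\mu s_q+(1-\mu)s_r$ with $\mu\in(0,1)$ and $s_i$ is not a superfluous speed, $P_i<\mu P_q+(1-\mu)P_r$. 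Multiplying by $v_j/s_i>0$ yields $E_j^i=\tfrac{v_jP_i}{s_i}<\tfrac{v_j}{s_i}(\mu P_q+(1-\mu)P_r)=\lambda E_j^q+(1-\lambda)E_j^r$, as claimed.

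The only genuinely delicate point — and precisely where the discussion preceding the observation does its work — is the strict-convexity input $P_i<\mu P_q+(1-\mu)P_r$. This is exactly the assertion that $s_i$ is not superfluous: it cannot be replaced, for the same processing time, by a $\mu$-mix of $s_q$ and $s_r$ at no greater energy. It is also indispensable: if the three points $(s_q,P_q),(s_i,P_i),(s_r,P_r)$ were collinear, then $\tfrac{P_\ell}{s_\ell}$ would be an affine function of $\tfrac1{s_\ell}$ over $\ell\in\{q,i,r\}$, forcing equality rather than strict inequality. Under the standing assumption that no given speed is superfluous (equivalently, that the power function is strictly convex on the given speeds), the computation above goes through verbatim. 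A self-contained way to see that this assumption suffices: if instead $E_j^i\ge\lambda E_j^q+(1-\lambda)E_j^r$, then Observation~\ref{obs:proc_times_conv_comb}, applied with $\lambda_j^q=\lambda$, $\lambda_j^r=1-\lambda$ and all other coefficients $0$, produces a schedule in which $j$ attains processing time $x_j^i$ using only speeds $s_q,s_r$ at energy $\lambda E_j^q+(1-\lambda)E_j^r\le E_j^i$, so $s_i$ could be removed — contradicting non-redundancy, and hence the strict inequality holds.
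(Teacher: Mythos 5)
Your proposal is correct, and its logical core coincides with the paper's: the paper's entire proof is the two-sentence contrapositive you give at the end, namely that if $E_j^i\geq\lambda E_j^q+(1-\lambda)E_j^r$ held, one could always replace speed $s_i$ by the $(\lambda,1-\lambda)$-mix of $s_q$ and $s_r$ at equal processing time and no greater energy (and, since $x_j^i=\lambda x_j^q+(1-\lambda)x_j^r$ is equivalent to the job-independent identity $\frac{1}{s_i}=\frac{\lambda}{s_q}+\frac{1-\lambda}{s_r}$, this replacement works uniformly for every job), so $s_i$ would be superfluous. What you add beyond the paper is the explicit change of variables $\lambda=\frac{s_q}{s_i}\mu$, $1-\lambda=\frac{s_r}{s_i}(1-\mu)$ with $\mu=\frac{s_r-s_i}{s_r-s_q}$, which translates the condition into strict convexity of the power--speed curve, $P_i<\mu P_q+(1-\mu)P_r$; this computation is correct and is a genuinely useful clarification, since it makes precise that the "observation" is not derivable from the stated model assumptions ($s_0<\dots<s_k$, $P_0<\dots<P_k$ alone do not rule out collinearity of $(s_q,P_q),(s_i,P_i),(s_r,P_r)$, which would force equality) but is rather a without-loss-of-generality non-redundancy convention, exactly as the sentence preceding the observation in the paper indicates. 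Your handling of the degenerate case $v_j=0$ is slightly glib but harmless, as the problem variants require $v_j>0$.
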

\begin{proof}
    If~$E_j^i\geq\lambda E_j^q+ (1-\lambda)E_j^r$, then rather than ever using speed~$s_i$, we can use the corresponding combination of~$ s_q$ and~$ s_r$, which gives the same processing time, but with the same or lower energy consumption. Since~$x_j^i= \lambda x_j^q+ (1-\lambda)x_j^r$ implies~$\frac{1}{s_i}= \lambda \frac{1}{s_q}+ (1-\lambda)\frac{1}{s_r}$, this then holds for every job, hence~$s_i$ will be superfluous. 
\end{proof}

The following observation follows from Observation \ref{obs:no_conv_comb} and shows us that while processing a job, we either use one given speed, or two consecutive speeds~$s_i$ and~$s_{i+1}$.

\begin{restatable}{observation}{obsConv}\label{obs:convex_combination_speeds}
Consider a schedule~$\sigma$ and job~$j$ with~$s_i\leq s_j(\sigma)\leq s_{i+1}$. Let~$\lambda\in[0,1]$ be such that~$x_j(\sigma) =  \lambda x_j^{i} + (1-\lambda) x_j^{i+1}$, then~$E_j(\sigma) =\lambda E_j^{i} + (1-\lambda)E_j^{i+1}$.
\end{restatable}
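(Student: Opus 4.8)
The plan is to prove Observation~\ref{obs:convex_combination_speeds} by arguing that in any schedule~$\sigma$, using only the two consecutive given speeds~$s_i$ and~$s_{i+1}$ bracketing the average speed~$s_j(\sigma)$ is without loss of generality, so that the convex-combination identity for the energy follows immediately from the construction in Observation~\ref{obs:proc_times_conv_comb}. First I would invoke Observation~\ref{obs:proc_times_conv_comb} in the reverse direction: since~$\sigma$ processes job~$j$ using some multiset of the given speeds, write~$x_j(\sigma) = \sum_{\ell=1}^k \lambda_j^\ell x_j^\ell$ and~$E_j(\sigma) = \sum_{\ell=1}^k \lambda_j^\ell E_j^\ell$ with the~$\lambda_j^\ell \geq 0$ summing to one. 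The average speed being~$s_j(\sigma) = v_j / x_j(\sigma)$ with~$s_i \le s_j(\sigma) \le s_{i+1}$ translates, via~$x_j^\ell = v_j/s_\ell$, into the statement that the point~$x_j(\sigma)$ on the real line lies between~$x_j^{i+1}$ and~$x_j^i$ (note the reversal of order since~$x_j^\ell$ is decreasing in~$\ell$).

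The key step is then to show that we may assume~$\lambda_j^\ell = 0$ for all~$\ell \notin \{i, i+1\}$. For any speed~$s_q$ with~$q < i$ that is used with positive weight, and any speed~$s_r$ with~$r > i$ (which must exist with positive weight, or else the average would be too slow), one can shift weight off of~$s_q$ and~$s_r$ onto an intermediate speed while keeping the total processing time~$x_j(\sigma)$ fixed; by Observation~\ref{obs:no_conv_comb} (applied to the intermediate speed as the ``$i$'' of that observation, with~$q$ and~$r$ as the bracketing indices), this strictly decreases the energy. Repeating this exchange argument drives all weight onto~$\{i, i+1\}$ without increasing energy, and in fact the strict inequality of Observation~\ref{obs:no_conv_comb} shows that any schedule \emph{not} already supported on these two speeds is strictly suboptimal in energy; hence the energy-minimizing choice of~$\lambda$'s consistent with the given~$x_j(\sigma)$ is supported on~$\{i,i+1\}$, which forces~$\lambda_j^i = \lambda$ and~$\lambda_j^{i+1} = 1-\lambda$ for the unique~$\lambda$ in the statement, and gives~$E_j(\sigma) = \lambda E_j^i + (1-\lambda) E_j^{i+1}$.

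The main obstacle I anticipate is the precise bookkeeping in the exchange argument: Observation~\ref{obs:no_conv_comb} as stated is about replacing a single speed~$s_i$ by a two-speed combination, so to use it in the direction I want (collapsing a spread-out distribution onto two consecutive speeds) I need to phrase it as ``any use of three or more distinct speeds, or two non-consecutive speeds, can be replaced by a combination of consecutive speeds with strictly smaller energy.'' This requires checking that the intermediate replacement stays within the allowed speed set~$\{s_1,\dots,s_k\}$ and that the induction on the number of distinct speeds used is well-founded. A cleaner alternative that sidesteps this is to note that the points~$(x_j^\ell, E_j^\ell)$ lie on a curve that is strictly convex when plotted against processing time (this is exactly what Observation~\ref{obs:no_conv_comb} asserts pointwise), so that for a fixed abscissa~$x_j(\sigma) \in [x_j^{i+1}, x_j^i]$ the lower envelope of convex combinations is attained on the segment joining~$(x_j^i, E_j^i)$ and~$(x_j^{i+1}, E_j^{i+1})$; since a non-superfluous speed set contains no point strictly above this envelope, the schedule~$\sigma$ — which we may assume optimal in its energy usage for the given processing times, as otherwise Observation~\ref{obs:proc_times_conv_comb} lets us improve it — must realize exactly~$E_j(\sigma) = \lambda E_j^i + (1-\lambda)E_j^{i+1}$. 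I would present the convexity/lower-envelope framing as the main line of argument and relegate the exchange-argument details to a sentence or two.
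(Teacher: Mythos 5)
Your proposal is correct and takes essentially the same approach as the paper: both reduce the claim to showing that, among convex combinations of the given speeds realizing $x_j(\sigma)$, the one supported on the two bracketing indices $\{i,i+1\}$ minimizes energy, using the strict inequality of Observation~\ref{obs:no_conv_comb} in an exchange/convexity argument, and both (the paper implicitly, you explicitly) take $\sigma$ to process $j$ energy-optimally for its processing time. The bookkeeping concern you flag is precisely what the paper handles by taking the extremal indices $q,r$ with positive weight and contracting toward an intermediate index.
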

\begin{proof}
    The proof can be found in Appendix \ref{app:problem_desc}.
\end{proof}

\paragraph*{Problem Variants.}
Recall that we adopt the quintuple notation~$\star$-$\star\star\star\star$ from~\cite{ComplexitySpeedScaling}. For completeness, we formally restate the relevant specific entries with the notation introduced above.  For the first entry, we consider two different objectives: either minimize~$F(\sigma) + E(\sigma)$, or minimize only~$F(\sigma)$ subject to an energy constraint~$E(\sigma)\leq B$ for some given budget~$B$. We refer to the former as the \emph{flow + energy} variant (\emph{FE}), and the latter as the \emph{budget} variant (\emph{B}). For the second entry, we only consider \emph{integral flow (I)}~$F(\sigma)$ as defined above, but the related concept of a \emph{fractional flow (F)} has been considered in the literature. For the third entry, the allowed speeds of the processor could either be \emph{discrete (D)} as defined above, or \emph{continuous (C)}, where the set of allowable speeds is~$[0,H)$ for some~$H>0$, or~$[0,\infty)$ and the power consumption is defined by a function~$P$ of the speed with~$P(0)=0$. Note that Observation~\ref{obs:convex_combination_speeds} implies that the discrete speed setting can be modeled by the continuous one with a piecewise linear power function~$P$ and the discrete speed setting is therefore a special case of the continuous setting (see also Theorem~7 in~\cite{ComplexitySpeedScaling}). For the fourth entry, in the \emph{unweighted} variant (\emph{U}), we have~$w_1=\dots=w_n=1$, whereas for the \emph{weighted} variant (\emph{W}) we only require that~$w_j\in\mathbb{Q}_{>0}$ for all~$j\in \{1,\dots,n\}$. Finally, for the fifth entry, in the \emph{unit size} variant (\emph{U}), we have~$v_1=\dots=v_n=1$, whereas in the \emph{arbitrary size} variant (\emph{A}) we only require that~$v_j\in\mathbb{Q}_{>0}$ for all~$j\in\{1,\dots, n\}$. Recall that we also consider each variant with a fixed priority (P) or completion (C) ordering and extend the quintuple notation by a suffix accordingly.

Note that the above definition of a schedule $\sigma=(J,S)$ applies to all described variants of the problem.

\subsection{Affection and Shrinking Energy}
In this section we introduce the notions of \emph{affection} and \emph{shrinking energy}.  Intuitively, the affection of a job measures the improvement in total flow resulting from processing that job in less time (i.e., at higher speed), while its shrinkage energy measures the energy needed to achieve this reduction in the processing time of the job. The affection and shrinking energy combined, provide information on which jobs (if any) one could accelerate in order to obtain an improved schedule. Similar definitions are used in~\cite{ComplexitySpeedScaling}, however, we do deviate from their notation, so as to be able to use it in a more general setting.

\begin{definition}[Affection]
Consider a schedule~$\sigma$.
\begin{itemize}
    \item If~$C_j(\sigma) \leq C_{j'}(\sigma)$ and~$C_j(\sigma)>r_{j'}$, then~$j$ \emph{affects}~$j'$.
    \item If~$j$ affects~$j'$ and~$j'$ affects~$j''$, then~$j$ affects~$j''$.
    \item The \emph{affection set} of~$j$ in~$\sigma$, denoted by~$K_j(\sigma)$, is the set of jobs affected by~$j$ in~$\sigma$. We define the \emph{(weighted) affection} as $\kappa_j(\sigma) := \sum_{j'\in K_j(\sigma)} w_{j'}$.
\end{itemize}  
\end{definition}

The~$\emph{lower affection}$, denoted~$\kappa^+(\sigma)$, is defined analogously to affection, but additionally allows~$C_j(\sigma)= r_{j'}$. Note that a job always (lower) affects itself. Intuitively, the affection of job~$j$ captures how much the flow will improve if we reduce the processing time of~$j$.

\begin{restatable}{observation}{obsAff}\label{obs:affection}
    Let~$\sigma$ be a schedule and consider the processing time~$x_j(\sigma)$ of~$j$ in~$\sigma$. Then there is an~$\varepsilon'>0$ such that for all~$0<\varepsilon\leq\varepsilon'$, the following holds.
    \begin{itemize}
        \item Let~$\sigma^{-\varepsilon}$ be a schedule with the same completion order as~$\sigma$, such that~$x_{j'}(\sigma^{-\varepsilon}) = x_{j'}(\sigma)$ for all~$j'\neq j$, and~$x_j(\sigma^{-\varepsilon}) = x_j(\sigma)-\varepsilon$. Then~$   F(\sigma^{-\varepsilon}) = F(\sigma)-\varepsilon\kappa_j(\sigma)$. 
        \item Let~$\sigma^{+\varepsilon}$ be a schedule with the same completion order as~$\sigma$, such that~$x_{j'}(\sigma^{+\varepsilon}) = x_{j'}(\sigma)$ for all~$j'\neq j$, and~$x_j(\sigma^{+\varepsilon}) = x_j(\sigma)+\varepsilon$. Then~$
        F(\sigma^{+\varepsilon}) = F(\sigma)+\varepsilon\kappa^+_j(\sigma)$.
    \end{itemize}
\end{restatable}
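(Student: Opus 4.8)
\textbf{Proof plan for Observation~\ref{obs:affection}.}

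The plan is to argue by induction on the structure of the affection relation, using the fact that for a sufficiently small perturbation $\varepsilon$ the completion order and, crucially, the set of jobs that are ``available but not yet completed'' at each relevant moment do not change. First I would fix $\sigma$ and its completion order, and choose $\varepsilon'$ small enough that: (i) shrinking (or growing) $x_j$ by at most $\varepsilon'$ does not alter the completion order; (ii) for every pair $j',j''$ with $C_{j'}(\sigma)\neq r_{j''}$, the strict inequality (either $C_{j'}(\sigma)>r_{j''}$ or $C_{j'}(\sigma)<r_{j''}$) is preserved under the perturbation; and (iii) no completion time that was strictly positive becomes non-positive. Such an $\varepsilon'$ exists because there are finitely many jobs, hence finitely many strict inequalities among completion times and release times, each of which is an open condition, and the completion times depend continuously (in fact piecewise-linearly) on the processing times when the completion order is held fixed.

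The key observation is the following: if we think of the schedule as ``slide everything after the shrink earlier by $\varepsilon$'', then a job $j''$ has its completion time decreased by exactly $\varepsilon$ precisely when $j''\in K_j(\sigma)$, and is unaffected otherwise. To make this precise for the $\sigma^{-\varepsilon}$ case, I would process the jobs in completion order and track the completion times. Job $j$ itself completes $\varepsilon$ earlier (its processing time dropped by $\varepsilon$ and nothing before it changed). For a job $j''$ completing after $j$: its completion time is determined by when the processor, working through the jobs in completion order subject to release-time constraints, finishes $j''$. If $j''$ is reachable from $j$ in the affection relation, then there is a chain $j = j_0, j_1, \dots, j_m = j''$ where each $j_\ell$ affects $j_{\ell+1}$, i.e.\ $C_{j_\ell}(\sigma) \le C_{j_{\ell+1}}(\sigma)$ and $C_{j_\ell}(\sigma) > r_{j_{\ell+1}}$; inductively each $C_{j_\ell}$ drops by $\varepsilon$, and since $C_{j_\ell}(\sigma) > r_{j_{\ell+1}}$ strictly, even after dropping by $\varepsilon \le \varepsilon'$ we still have $C_{j_\ell}(\sigma^{-\varepsilon}) > r_{j_{\ell+1}}$, so the processor never has to idle waiting for $j_{\ell+1}$ at the moment it would start it — the ``earlier by $\varepsilon$'' shift propagates. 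Conversely, if $j''$ is \emph{not} affected by $j$, then at the time $j$ completes in $\sigma$, either $j''$ is not yet released or the chain of processing leading to $j''$ does not pass through the interval $(r_{j''}, C_{j''}(\sigma)]$ in a way that is rigidly tied to $j$'s completion; more carefully, one shows the first job after $j$ in completion order that is \emph{not} in $K_j(\sigma)$ must satisfy $C_{j'}(\sigma) \le r_{j'}$ relative to the previous job, i.e.\ there is slack, and slack of magnitude bounded below by $\varepsilon'$ absorbs the $\varepsilon$ shift, so $C_{j''}(\sigma^{-\varepsilon}) = C_{j''}(\sigma)$. Summing $\sum_{j''} w_{j''}(C_{j''}(\sigma) - C_{j''}(\sigma^{-\varepsilon})) = \varepsilon \sum_{j'' \in K_j(\sigma)} w_{j''} = \varepsilon \kappa_j(\sigma)$ gives $F(\sigma^{-\varepsilon}) = F(\sigma) - \varepsilon \kappa_j(\sigma)$.

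The $\sigma^{+\varepsilon}$ case is symmetric, with the single subtlety that growing $x_j$ pushes later jobs \emph{later}, and now the relevant relation is the lower affection $\kappa_j^+$: a job $j''$ with $C_{j'}(\sigma) = r_{j''}$ for some intermediate $j'$ in the chain was ``just touching'' its release constraint, and pushing $j'$ later by $\varepsilon$ does force $j''$ later by $\varepsilon$ as well (whereas pulling $j'$ earlier would have created slack and left $j''$ unaffected — which is exactly why the shrink case uses $\kappa_j$ and the grow case uses $\kappa_j^+$). The main obstacle I anticipate is the bookkeeping in this propagation argument: one has to be careful that the ``shift everything by $\varepsilon$'' intuition is literally correct, i.e.\ that $\sigma^{-\varepsilon}$ and $\sigma^{+\varepsilon}$ as hypothesized in the statement genuinely exist as feasible schedules realizing exactly the claimed completion times, and that the chain-of-affection argument correctly identifies $K_j(\sigma)$ (resp.\ $K_j^+(\sigma)$) as precisely the set of shifted jobs — neither more (no spurious shifts) nor fewer (no missed shifts). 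This is where choosing $\varepsilon'$ as the minimum over all the relevant finite collection of strictly positive slacks and strict-inequality gaps does the work.
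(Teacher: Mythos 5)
Your approach is the same as the paper's: propagate the $\varepsilon$-shift along chains of affection using the strictness of the inequalities $C_{j'}(\sigma)>r_{j''}$ to pick a uniform $\varepsilon'$, and argue that the shift stops where there is slack. The paper's own proof is essentially a one-line version of your propagation argument (``reducing the processing time of~$j$ allows to also reduce the completion time of~$j'$ \ldots and by transitiveness\ldots''), so you are on the right track and supply the missing quantification.

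One piece of your bookkeeping is not quite right, though. You claim that ``the first job after $j$ in completion order that is not in $K_j(\sigma)$'' marks where the shift stops, implicitly treating $K_j(\sigma)$ as a contiguous block in completion order beginning at~$j$. That is false in general: take three jobs with $r_j=0$, $v_j=10$; $r_{j''}=10$, $v_{j''}=10$; $r_{j'}=5$, $v_{j'}=10$, processed back-to-back in the order $j,j'',j'$. Then $C_j=10\le r_{j''}$, so $j''\notin K_j(\sigma)$, but $C_j=10>r_{j'}=5$, so $j'\in K_j(\sigma)$ even though $j'$ completes \emph{after} $j''$. Shrinking $x_j$ by $\varepsilon$ indeed leaves $C_{j''}$ fixed and drops $C_{j'}$ by $\varepsilon$ (the freed interval $[10-\varepsilon,10]$ is used to process a piece of $j'$, which is already released, before $j''$ is released at time~$10$), so the Observation still holds --- but not via a literal ``slide everything after $j$ earlier by $\varepsilon$ until the first gap'' picture. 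The correct accounting is per-job: for each $j''\succ j$, $C_{j''}$ drops by $\varepsilon$ iff $j''\in K_j(\sigma)$, and showing this requires arguing about the reconstructed schedule (where is the freed $\varepsilon$ of processor time re-used, given release times and the fixed completion order, possibly with preemption) rather than a monotone shift. The chain-propagation half of your argument is fine; it is only the ``converse'' clause that needs this repair.
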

\begin{proof}
    The first statement follows by observing that when~$C_j(\sigma) \leq C_{j'}(\sigma)$ and~$C_j(\sigma)>r_{j'}$, 
    reducing the processing time of~$j$ allows to also reduce the completion time of~$j'$ (without changing $x_{j'}$), and by transitiveness the same holds for any~$j''$ with~$C_{j'}(\sigma) \leq C_{j''}(\sigma)$ and~$C_{j'}(\sigma)>r_{j''}$.
    The second statement can be argued analogously.
\end{proof}

\begin{definition}
    Let~$s_1,\dots, s_k$ be given speeds. We define $\Delta_i := \left(P_{i+1}s_i-P_i s_{i+1}\right)/\left(s_{i+1} - s_i\right)$
    for~$i\in\{1,\dots, k-1\}$, and~$\Delta_0=-\infty$, and~$\Delta_k := \infty$.
\end{definition}

\begin{definition}[Shrinking/expanding energy]
    Let~$\sigma$ be a schedule. For any job~$j$,
    \begin{itemize}
        \item if~$s_j(\sigma)\in[s_i,s_{i+1})$, the \emph{shrinking energy} of~$j$ in~$\sigma$ 
        is~$\Delta_j(\sigma):=\Delta_i$.
        \item if~$s_j(\sigma)\in(s_i,s_{i+1}]$, the \emph{expanding energy} of~$j$ in~$\sigma$ 
        is~$\Delta^+_j(\sigma):=\Delta_i$.
    \end{itemize}
    If~$s_j(\sigma)=s_k$, the \emph{shrinking energy} is~$\Delta_j(\sigma):=\Delta_k=\infty$, 
    while if~$s_j(\sigma)=s_1$ the \emph{expanding energy} of~$j$ in~$\sigma$ 
        is~$\Delta^+_j(\sigma):=\Delta_0=-\infty$.
\end{definition}
Note that if~$s_j(\sigma)\in(s_i,s_{i+1})$, then~$\Delta_j(\sigma)=\Delta_j^+(\sigma)$.

By Observation~\ref{obs:affection}, the (lower) affection of~$j$ quantifies the change in flow for small changes in the processing time of~$j$. The following lemma gives us an analogous result for the shrinking/expanding energy and the change in energy consumption.

\begin{restatable}{lemma}{lemmaDelta}\label{lemma:delta}
    Let~$\sigma$ be a schedule where job~$j$ has processing time~$x_j(\sigma)\in[x_j^{i},x_j^{i+1}]$, and let~$\eps>0$. Then, 
    \begin{itemize}
        \item if~$\eps\leq x_j(\sigma)-x_j^{i+1}$,
        the schedule~$\sigma^{-\varepsilon}$ with~$x_j(\sigma^{-\varepsilon}) = x_j(\sigma)-\varepsilon$ and~$x_{j'}(\sigma^{-\varepsilon}) = x_{j'}(\sigma)$ for all~$j'\neq j$ has total energy consumption equal to
            $E(\sigma^{-\varepsilon}) = E(\sigma)+\varepsilon\Delta_j(\sigma)$.
        \item if~$\eps\leq x_j^{i}-x_j(\sigma)$,
        the schedule~$\sigma^{+\varepsilon}$ with~$x_j(\sigma^{+\varepsilon}) = x_j(\sigma)+\varepsilon$ and~$x_{j'}(\sigma^{+\varepsilon}) = x_{j'}(\sigma)$ for all~$j'\neq j$ has total energy consumption equal to
            $E(\sigma^{+\varepsilon}) = E(\sigma)-\varepsilon\Delta^+_j(\sigma)$.
    \end{itemize}
\end{restatable}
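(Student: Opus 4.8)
The plan is to verify the first bullet (shrinking) in detail and note that the second (expanding) is entirely symmetric. Fix a job $j$ with $x_j(\sigma)\in[x_j^{i},x_j^{i+1}]$, so that its average speed satisfies $s_j(\sigma)\in[s_i,s_{i+1}]$ and, by definition, $\Delta_j(\sigma)=\Delta_i$ (using the convention $\Delta_j(\sigma)=\Delta_k=\infty$ when $s_j(\sigma)=s_k$, for which the hypothesis $\eps\le x_j(\sigma)-x_j^{i+1}=0$ makes the claim vacuous). First I would invoke Observation~\ref{obs:convex_combination_speeds}: we may assume $\sigma$ processes $j$ using only the two consecutive speeds $s_i$ and $s_{i+1}$, with $x_j(\sigma)=\lambda x_j^{i}+(1-\lambda)x_j^{i+1}$ and $E_j(\sigma)=\lambda E_j^{i}+(1-\lambda)E_j^{i+1}$ for the appropriate $\lambda\in[0,1]$. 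The key point is that leaving every other job's processing untouched means $E(\sigma^{-\eps})-E(\sigma)=E_j(\sigma^{-\eps})-E_j(\sigma)$, so the whole computation localizes to job $j$.

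Next I would carry out the local computation. Writing $E_j$ as a function of $x_j$ along the segment from $(x_j^{i+1},E_j^{i+1})$ to $(x_j^{i},E_j^{i})$, it is affine with slope
\[
    \frac{E_j^{i}-E_j^{i+1}}{x_j^{i}-x_j^{i+1}}
    = \frac{(v_j/s_i)P_i-(v_j/s_{i+1})P_{i+1}}{v_j/s_i-v_j/s_{i+1}}
    = \frac{P_i s_{i+1}-P_{i+1}s_i}{s_{i+1}-s_i}
    = -\Delta_i .
\]
Since $x_j(\sigma^{-\eps})=x_j(\sigma)-\eps$ and, by the hypothesis $\eps\le x_j(\sigma)-x_j^{i+1}$, the new processing time still lies in $[x_j^{i+1},x_j(\sigma)]\subseteq[x_j^{i+1},x_j^{i}]$ (recall $x_j^{i+1}<x_j^{i}$ as $s_i<s_{i+1}$), we remain on the same affine segment. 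Hence $E_j(\sigma^{-\eps})=E_j(\sigma)+(-\eps)\cdot(-\Delta_i)=E_j(\sigma)+\eps\Delta_i$, which gives $E(\sigma^{-\eps})=E(\sigma)+\eps\Delta_j(\sigma)$ as claimed. The realizability of such a $\sigma^{-\eps}$ (i.e., that there is a genuine feasible schedule achieving this $x_j$ while keeping all $x_{j'}$ and the completion order fixed) follows from Observation~\ref{obs:proc_times_conv_comb} together with Observation~\ref{obs:affection}, which already assumed exactly this kind of perturbed schedule exists.

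The second bullet is proved the same way: with $s_j(\sigma)\in(s_i,s_{i+1}]$ and $\eps\le x_j^{i}-x_j(\sigma)$, increasing $x_j$ by $\eps$ keeps us on the segment between $(x_j^{i},E_j^{i})$ and $(x_j^{i+1},E_j^{i+1})$ of slope $-\Delta_i=-\Delta_j^+(\sigma)$, so $E(\sigma^{+\eps})=E(\sigma)+\eps\cdot(-\Delta_j^+(\sigma))=E(\sigma)-\eps\Delta_j^+(\sigma)$; the boundary case $s_j(\sigma)=s_1$ is again vacuous since then $x_j^{i}-x_j(\sigma)=x_j^1-x_j(\sigma)\le 0$. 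I do not expect a genuine obstacle here; the only thing to be careful about is bookkeeping the direction of the inequalities $x_j^{i+1}<x_j^{i}$ versus $s_i<s_{i+1}$ and the matching of $\Delta_i$ to the right speed interval, together with checking that the two degenerate endpoint cases ($s_j(\sigma)=s_k$ and $s_j(\sigma)=s_1$) are made vacuous by the stated bound on $\eps$.
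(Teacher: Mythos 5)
Your proof is correct and follows essentially the same route as the paper: both rely on Observation~\ref{obs:convex_combination_speeds} to treat $E_j$ as affine in $x_j$ on $[x_j^{i+1},x_j^{i}]$ and then compute the slope, which you do directly while the paper phrases the same computation via the convex-combination coefficients $\lambda,\tilde\lambda$. Your explicit localization to job $j$ and the handling of the degenerate endpoints (which the paper leaves implicit by taking $\lambda>0$) are sound, so no further changes are needed.
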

\begin{proof}
    We only prove the first statement, as the second can be proven with an analogous argument. 
    
    Suppose~$\eps\leq x_j(\sigma)-x_j^{i+1}$ and let~$0<\lambda\leq 1$ be such that~$x_j(\sigma) = \lambda x_j^i + (1-\lambda)x_j^{i+1}$. By Observation~\ref{obs:convex_combination_speeds}, we have~$E_j(\sigma) = \lambda E_j^i + (1-\lambda) E_j^{i+1}$.
     Since~$x_j(\sigma^{-\varepsilon})\geq x_j^{i+1}$, there exists a~$\Tilde{\lambda}\in[0,\lambda)$ such that 
    $x_j(\sigma^{-\varepsilon}) = \Tilde{\lambda} x_j^i + (1-\Tilde{\lambda})x_j^{i+1}$,
    which gives us $E_j(\sigma^{-\varepsilon}) = \Tilde{\lambda} E_j^i + (1-\Tilde{\lambda}) E_j^{i+1}$.
    Then
   \[
        \varepsilon 
        = (\lambda-\Tilde{\lambda})\left(x_j^i - x_j^{i+1}\right)
        = (\lambda-\Tilde{\lambda})\left(\frac{v_j}{s_i} - \frac{v_j}{s_{i+1}}\right)  =\frac{v_j(\lambda-\Tilde{\lambda})}{s_i s_{i+1}}(s_{i+1}-s_i)\,,
   \]
    and we obtain that
    \begin{align*}
       E_j(\sigma^{-\varepsilon}) - E_j(\sigma) &= (\Tilde{\lambda}-\lambda)\left(E_j^i - E_j^{i+1}\right)
       = v_j(\lambda-\Tilde{\lambda})\left(\frac{P_i}{s_i} - \frac{P_{i+1}}{s_{i+1}}\right)\\
       &= \frac{v_j(\lambda-\Tilde{\lambda})}{s_i s_{i+1}} (P_{i+1}s_i-P_i s_{i+1})
       = \varepsilon \frac{P_{i+1}s_i-P_i s_{i+1}}{s_{i+1}-s_i} 
       = \varepsilon\Delta_j(\sigma)\,. \tag*{\qedhere}
    \end{align*}
\end{proof}

\begin{corollary}\label{cor:kappa_delta_balance}
    Let~$\sigma^*$ be an optimal schedule for FE-ID**.  Then for any job~$j$
    \begin{itemize}
        \item~$\kappa_j(\sigma^*) - \Delta_j(\sigma^*) \leq 0$, and
        \item~$\Delta_j^+(\sigma^*) - \kappa_j^+(\sigma^*) \leq 0$.
    \end{itemize}
\end{corollary}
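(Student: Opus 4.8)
The plan is a standard local-perturbation argument: if either inequality failed for some job $j$, then slightly shrinking (resp.\ expanding) the processing time of $j$ in $\sigma^*$ would strictly decrease $F+E$, contradicting optimality. Observation~\ref{obs:affection} and Lemma~\ref{lemma:delta} are tailored precisely to capture the first-order effect of such a perturbation on the flow and on the energy, respectively, so the work is just in combining them and in handling the two boundary cases where the perturbation is infeasible.

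For the first inequality, fix a job $j$. Since $v_j>0$ and $j$ is processed only at times with $S(t)>0$, its average speed $s_j(\sigma^*)$ is well defined and lies in $[s_1,s_k]$. If $s_j(\sigma^*)=s_k$, then $\Delta_j(\sigma^*)=\Delta_k=\infty$ while $\kappa_j(\sigma^*)$ is a finite sum of weights, so $\kappa_j(\sigma^*)-\Delta_j(\sigma^*)\le 0$ trivially. Otherwise $s_j(\sigma^*)\in[s_i,s_{i+1})$ for a unique $i\in\{1,\dots,k-1\}$, hence $x_j(\sigma^*)\in(x_j^{i+1},x_j^i]$ and in particular $x_j(\sigma^*)-x_j^{i+1}>0$. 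Choose $\varepsilon>0$ no larger than both the threshold $\varepsilon'$ of Observation~\ref{obs:affection} and $x_j(\sigma^*)-x_j^{i+1}$, and let $\sigma^{-\varepsilon}$ be the schedule with the same completion order as $\sigma^*$, with $x_{j'}(\sigma^{-\varepsilon})=x_{j'}(\sigma^*)$ for $j'\ne j$ and $x_j(\sigma^{-\varepsilon})=x_j(\sigma^*)-\varepsilon$; by the choice of $\varepsilon$ this schedule exists and is feasible. Then Observation~\ref{obs:affection} gives $F(\sigma^{-\varepsilon})=F(\sigma^*)-\varepsilon\kappa_j(\sigma^*)$, and Lemma~\ref{lemma:delta} gives $E(\sigma^{-\varepsilon})=E(\sigma^*)+\varepsilon\Delta_j(\sigma^*)$ (the energy statement depends only on processing times and speeds, not on the order). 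Adding, $F(\sigma^{-\varepsilon})+E(\sigma^{-\varepsilon})=F(\sigma^*)+E(\sigma^*)+\varepsilon\bigl(\Delta_j(\sigma^*)-\kappa_j(\sigma^*)\bigr)$, and optimality of $\sigma^*$ forces $\Delta_j(\sigma^*)-\kappa_j(\sigma^*)\ge 0$, which is exactly the first claim.

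The second inequality is entirely symmetric, using expansion in place of shrinking. If $s_j(\sigma^*)=s_1$, then $\Delta_j^+(\sigma^*)=\Delta_0=-\infty$ and $\Delta_j^+(\sigma^*)-\kappa_j^+(\sigma^*)\le 0$ trivially. Otherwise $s_j(\sigma^*)\in(s_i,s_{i+1}]$ for a unique $i$, so $x_j^i-x_j(\sigma^*)>0$; pick $\varepsilon>0$ at most $\varepsilon'$ and at most $x_j^i-x_j(\sigma^*)$, and let $\sigma^{+\varepsilon}$ be the corresponding expanded schedule. Observation~\ref{obs:affection} and Lemma~\ref{lemma:delta} then give $F(\sigma^{+\varepsilon})+E(\sigma^{+\varepsilon})=F(\sigma^*)+E(\sigma^*)+\varepsilon\bigl(\kappa_j^+(\sigma^*)-\Delta_j^+(\sigma^*)\bigr)$, and optimality yields $\kappa_j^+(\sigma^*)-\Delta_j^+(\sigma^*)\ge 0$.

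I do not expect a serious obstacle here; the points requiring care are (a) verifying that $s_j(\sigma^*)\in[s_1,s_k]$ so that the case split is exhaustive, (b) dealing with the two extreme-speed cases in which the perturbation is infeasible but the relevant shrinking/expanding energy is $\pm\infty$, making the inequality vacuous, and (c) choosing a single $\varepsilon$ small enough to satisfy simultaneously the hypotheses of Observation~\ref{obs:affection} and Lemma~\ref{lemma:delta} while keeping the perturbed schedule feasible with an unchanged completion order.
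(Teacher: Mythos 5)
Your proof is correct and fills in exactly the first-order perturbation argument that the paper intends when it says the corollary ``follows directly'' from Observation~\ref{obs:affection} and Lemma~\ref{lemma:delta}; the boundary cases $s_j(\sigma^*)=s_k$ and $s_j(\sigma^*)=s_1$ are handled as the definitions of $\Delta_k=\infty$ and $\Delta_0=-\infty$ anticipate. This is the same approach, merely spelled out.
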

\begin{proof}
    The result follows directly from Observation~\ref{obs:affection} and Lemma~\ref{lemma:delta}.
\end{proof}

\section{Hardness Results}
\label{sec:hardness}

\subsection{Hardness of FE-IDUA}\label{sec:hardness_FEUA}
We show that FE-IDUA is $\mathsf{NP}$-hard through a reduction from B-IDUA with two speeds, a problem shown to be $\mathsf{NP}$-hard by Barcelo et al.~\cite[proof of Theorem 1]{ComplexitySpeedScaling}. 

Consider an instance~$\mathcal{I}^B$ of B-IDUA with given speeds~$s_1<s_2$, power consumptions~$P_1<P_2$, budget~$B$, 
and~$n$ jobs with volumes~$v_1,\dots,v_n$ and release times~$r_1,\dots,r_n$. 
We assume w.l.o.g.\ that~$s_1=1$. 
The following notation is used in the construction of an instance~$\flowenergyinstance$, based on~$\budgetinstance$, that establishes the reduction from B-IDUA to FE-IDUA.
Let~$V= \sum_{j\in\{1,\dots, n\}} v_j$ be the total volume of jobs in~$\mathcal{I}^B$, and let~$Y := \frac{B - P_1V}{\Delta_1}$. Intuitively, a schedule that only utilizes~$s_1$ already consumes an amount~$P_1V$ of energy. 
Thus,~$Y$ denotes the maximum decrease of the total processing time that the budget allows, 
compared to such a schedule. 
Note that, w.l.o.g.,
\begin{align}\label{eq:budget_inbetween}
    P_1 V <  B < P_2\frac{V}{s_2}\,,
\end{align}
since, if this is not the case, then~$\mathcal{I}^B$ is either infeasible or trivial. 

Let~$\sigma^*$ be an optimal schedule for~$\mathcal{I}^B$, let~$\sigma_1$ be the SRPT schedule for~$\mathcal{I}^B$ that processes every job at a speed of exactly~$s_1$, let~$C_{\max}(\sigma_1)$ be the makespan of~$\sigma_1$, and let~$I(\sigma_1)=C_{\max}(\sigma_1)-V$ the total idle time (recall that~$\min\{r_j\} = 0$). 

The following observation follows directly from \eqref{eq:budget_inbetween} and Lemma~\ref{lemma:delta}.

\begin{observation}\label{obs:reduction_shrinkage}
    For any optimal schedule~$\sigma^*$ for~$\mathcal{I}^B$ it holds that $\chi(\sigma^*) = V - Y$.
\end{observation}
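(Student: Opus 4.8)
The plan is to show that any optimal schedule $\sigma^*$ for $\budgetinstance$ uses exactly the energy budget $B$ and that this pins down the total processing time. First I would argue that $\sigma^*$ exhausts the budget, i.e.\ $E(\sigma^*) = B$. Indeed, if $E(\sigma^*) < B$, then since by~\eqref{eq:budget_inbetween} we have $B < P_2 V/s_2$, the schedule is not running everything at the top speed $s_2$, so some job $j$ has $s_j(\sigma^*) \in [s_1, s_2)$, i.e.\ $x_j(\sigma^*) \in (x_j^2, x_j^1]$. By Lemma~\ref{lemma:delta}, shrinking the processing time of $j$ by a sufficiently small $\eps>0$ increases energy by $\eps\Delta_1$ (a finite positive quantity, since $\Delta_1 = (P_2 s_1 - P_1 s_2)/(s_2 - s_1)$ and $P_1 < P_2$, $s_1 < s_2$ give $P_2 s_1 > P_1 s_2$ by Observation~\ref{obs:no_conv_comb}-type reasoning, or directly from $P_2/s_2 < \ldots$), and by Observation~\ref{obs:affection} it decreases the total flow by $\eps\kappa_j(\sigma^*) > 0$ (a job always affects itself, so $\kappa_j(\sigma^*) \geq w_j = 1 > 0$). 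Choosing $\eps$ small enough that the extra energy spent still respects the budget yields a feasible schedule with strictly smaller flow, contradicting optimality. Hence $E(\sigma^*) = B$.

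Next I would translate the energy equality into a statement about $\chi(\sigma^*)$. Since there are only two speeds, by Observation~\ref{obs:convex_combination_speeds} each job $j$ runs partly at $s_1$ and partly at $s_2$; write $\alpha_j$ for the total time job $j$ runs at $s_2$. Then $x_j(\sigma^*) = x_j^1 - \alpha_j(1 - s_1/s_2)$ wait --- more cleanly, accounting volume: the volume processed at $s_2$ is $s_2\alpha_j$ and at $s_1$ is $v_j - s_2\alpha_j$, taking time $(v_j - s_2\alpha_j)/s_1$, so $x_j(\sigma^*) = (v_j - s_2\alpha_j)/s_1 + \alpha_j$ and the energy is $P_1(v_j - s_2\alpha_j)/s_1 + P_2\alpha_j$. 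Summing over $j$ and using $s_1 = 1$, the total energy is $P_1 V + \sum_j \alpha_j(P_2 - P_1 s_2)$ and the total processing time is $\chi(\sigma^*) = V - \sum_j \alpha_j(s_2 - 1)$. Setting the energy equal to $B$ gives $\sum_j \alpha_j(P_2 - P_1 s_2) = B - P_1 V$. Then $\chi(\sigma^*) = V - \sum_j\alpha_j(s_2-1) = V - \frac{B - P_1 V}{(P_2 - P_1 s_2)/(s_2 - 1)} = V - \frac{B - P_1 V}{\Delta_1} = V - Y$, using $\Delta_1 = (P_2 s_1 - P_1 s_2)/(s_2 - s_1) = (P_2 - P_1 s_2)/(s_2 - 1)$ with $s_1 = 1$.

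I expect the main (mild) obstacle to be the first step: carefully justifying that a small acceleration stays within budget and strictly improves the objective, which requires invoking both Observation~\ref{obs:affection} (flow decreases by $\eps\kappa_j$) and Lemma~\ref{lemma:delta} (energy increases by $\eps\Delta_1$, finitely) simultaneously for a common choice of $\eps$, and checking the sign $\Delta_1 > 0$, i.e.\ $P_2 > P_1 s_2$. The latter is exactly the condition that $s_2$ is not superfluous (Observation~\ref{obs:no_conv_comb} with the idling speed $s_0 = 0$, $P_0 = 0$): if $P_2 \le P_1 s_2 = P_1 s_2 + P_0\cdot 0$, one could simulate $s_1$ by combining $s_0$ and $s_2$ at no greater energy, so w.l.o.g.\ $\Delta_1 > 0$. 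Everything after that is the routine two-speed volume/energy bookkeeping sketched above; the only thing to watch is that the argument shows the budget is tight for \emph{every} optimal schedule, not merely some, which is what the statement claims.
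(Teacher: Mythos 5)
Your proof is correct and takes essentially the same route the paper's one-line justification (``follows directly from \eqref{eq:budget_inbetween} and Lemma~\ref{lemma:delta}'') points at: the right inequality in \eqref{eq:budget_inbetween} forces the budget to be exhausted in any optimal schedule, and the two-speed linear bookkeeping (equivalently, Lemma~\ref{lemma:delta}) then pins down $\chi(\sigma^*)$. Your worry about $\Delta_1>0$ is resolved even more cheaply than your argument suggests: the left inequality $P_1V<B$ combined with $B<P_2V/s_2$ already gives $P_1s_2<P_2=P_2s_1$, i.e.\ $\Delta_1>0$, so no appeal to superfluousness of $s_1$ is needed.
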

That is, an optimal schedule for~$\budgetinstance$ uses the complete budget~$B$ to shrink the total processing time to~$V-Y$.

We construct~$\mathcal{I}^{FE}$ as follows.
\begin{itemize}
    \item \emph{Jobs.}~$\mathcal{I}^{FE}$ has~$2n+1$ jobs: Jobs~$1$ to~$n$ are identical to the ones of~$\mathcal{I}^B$. I.e.,~$\tilde{r}_i=r_i$ and~$\tilde{v}_i=v_i$, for~$i=1,\dots, n$.  
    Job~$n+1$ has volume~$\tilde{v}_{n+1}=I(\sigma_1) + (s_2+1)V+Y$ and~$\tilde{r}_{n+1}=0$ and jobs~$n+2,\dots, 2n+1$ all have a volume of~$Y+1$ and a release time of~$C_{max}(\sigma_1)+(s_2+1)V$.
    \item \emph{Processor.} The processor has the same two speeds~$\tilde{s}_1=s_1$ and~$\tilde{s}_2=s_2$ as in~$\mathcal{I}^B$. 
    The corresponding power consumptions are given by~$\tilde{P}_1 = P_1$, and~$\tilde{P}_2=(n+3/2)(s_2-1)+P_1s_2$, and the associated shrinking energy~$\tilde{\Delta}_1=\frac{\tilde{P}_{2}\tilde{s}_1-\tilde{P}_1 \tilde{s}_{2}}{\tilde{s}_{2} - \tilde{s}_1}$.
\end{itemize}

\begin{figure}[t]
\includegraphics[width = 0.85\textwidth]{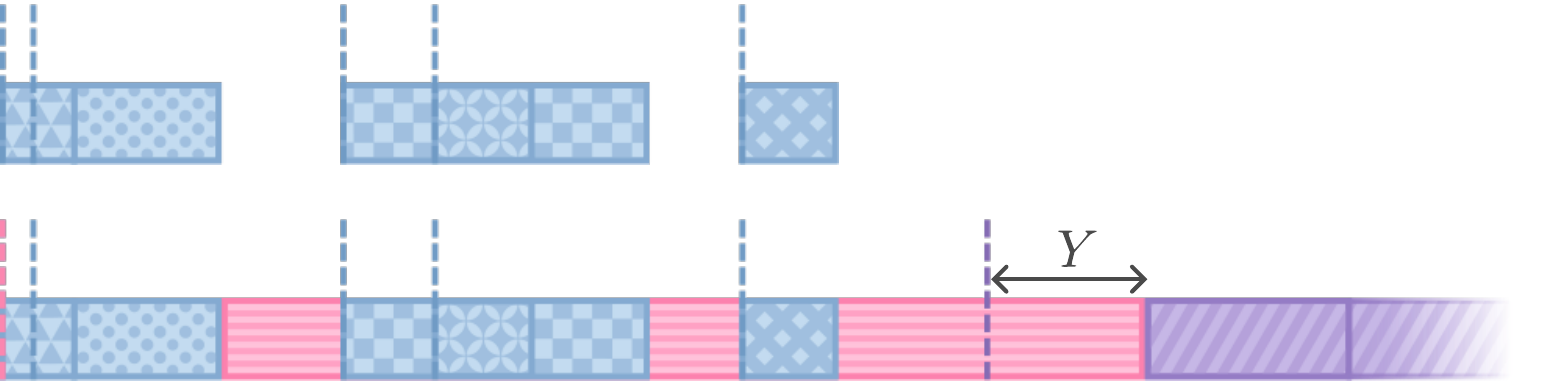}
\centering
\caption{An instance of B-IDUA (depicted at the top) is transformed into an instance of FE-IDUA (depicted at the bottom).}
\label{fig:reduction}
\end{figure} 

Figure \ref{fig:reduction} depicts how \flowenergyinstance\ is constructed from \budgetinstance. Note that \flowenergyinstance\ can be computed in polynomial time and that for the shrinking energy in~\flowenergyinstance, we have $n+1 <\Tilde{\Delta}_1 < n+2$.

The intuition behind the reduction is that in an optimal schedule~$\tilde{\sigma}^*$ for~$\mathcal{I}^{FE}$
exactly~$B$ energy is consumed to process jobs~$1,\dots, n$, and in turn these are processed ``independently'' of the other jobs. Therefore,~$\tilde{\sigma}^*$ produces an optimal schedule for~$\mathcal{I}^B$.
The schedule that~$\tilde{\sigma}^*$ produces for~$\mathcal{I}^B$ is simply~$\tilde{\sigma}^*$ restricted to jobs~$\{1,\dots,n\}$, which we define as~$\tilde{\sigma}^*_{\{1,\dots,n\}}=\left(\tilde{J}^*_{\{1,\dots,n\}},\tilde{S}^*_{\{1,\dots,n\}}\right)$, with
\[
    \left(\tilde{J}^*_{\{1,\dots,n\}}(t),\tilde{S}^*_{\{1,\dots,n\}}(t)\right) =  
        \begin{cases}
            \left(\tilde{J}^*(t),\tilde{S}^*(t)\right), &\text{if~$\tilde{J}^*(t)\in\{1,\dots,n\}$}\\
            (\emptyset,\emptyset), &\text{otherwise}
        \end{cases}\,.
\]

\begin{lemma}\label{obs:reduction_order}
    In an optimal solution for \flowenergyinstance, jobs~$n+1,\dots,2n+1$ complete last.
\end{lemma}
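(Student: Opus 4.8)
The plan is a proof by contradiction, driven by Corollary~\ref{cor:kappa_delta_balance} together with the fact (noted right after the construction) that $n+1<\tilde\Delta_1<n+2$, and by two elementary exchange facts for an optimal schedule $\tilde\sigma^*$. Write $T:=C_{\max}(\sigma_1)+(s_2+1)V=\tilde r_{n+2}$, and recall $\tilde r_i<C_{\max}(\sigma_1)<T$ for $i\le n$ and $\tilde r_{n+1}=0$. The two facts are: (i) if a job $j$ is processed entirely at speed $\tilde s_1$ in $\tilde\sigma^*$, then the processor is idle-free on $[\tilde r_j,C_j(\tilde\sigma^*))$ -- otherwise moving a short tail slice of $j$'s processing into an idle slot is energy-neutral (speed $\tilde s_1$ against idle) and strictly decreases $C_j$; and (ii) if, in addition, some job $g$ with $C_g(\tilde\sigma^*)>C_j(\tilde\sigma^*)$ is processed at some time $t_0\in[\tilde r_j,C_j(\tilde\sigma^*))$, then swapping a short tail slice of $j$'s (speed-$\tilde s_1$) processing with an equally long slice around $t_0$ is again energy-neutral -- each job keeps its total work and the two slices jointly carry the same two speeds -- strictly decreases $C_j$, and leaves $C_g$ unchanged because $g$ still runs past $C_j$; this contradicts optimality, so no such $g$ can exist.

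Now suppose the lemma fails; since all completion times are distinct, some job of $\{1,\dots,n\}$ completes after some job of $\{n+1,\dots,2n+1\}$. Let $j^*$ be the latest-completing job of $\{1,\dots,n\}$, so there is $\ell^*\in\{n+1,\dots,2n+1\}$ with $C_{\ell^*}(\tilde\sigma^*)<C_{j^*}(\tilde\sigma^*)$. The jobs completing no earlier than $j^*$ are $j^*$ together with at most $n$ of the jobs $n+1,\dots,2n+1$ ($\ell^*$ is excluded, and by the choice of $j^*$ no other job of $\{1,\dots,n\}$ qualifies), so $\kappa^+_{j^*}(\tilde\sigma^*)\le n+1<\tilde\Delta_1$; as the expanding energy in a two-speed instance is either $-\infty$ or $\tilde\Delta_1$, Corollary~\ref{cor:kappa_delta_balance} forces $s_{j^*}(\tilde\sigma^*)=\tilde s_1$, hence $x_{j^*}(\tilde\sigma^*)=v_{j^*}\le V<\tilde v_{n+1}/s_2\le x_{n+1}(\tilde\sigma^*)$. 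Next I would show $C_{n+1}(\tilde\sigma^*)<C_{j^*}(\tilde\sigma^*)$ (this is automatic when $\ell^*=n+1$): using (i), the idle-free interval $[\tilde r_{j^*},T)$ has length exceeding $(s_2+1)V>V$ and can carry only processing of jobs $1,\dots,n$ (total volume $V$) and of $n+1$, so $n+1$ is processed at some $t_0\in[\tilde r_{j^*},T)\subseteq[\tilde r_{j^*},C_{j^*}(\tilde\sigma^*))$; if $C_{n+1}(\tilde\sigma^*)>C_{j^*}(\tilde\sigma^*)$, fact (ii) with $g=n+1$ gives a contradiction. Once $C_{n+1}(\tilde\sigma^*)<C_{j^*}(\tilde\sigma^*)$ is known, the count of jobs completing no earlier than $j^*$ (now excluding $n+1$) shows that in fact every job of $\{n+2,\dots,2n+1\}$ completes after $j^*$, and applying the same $\kappa^+$-argument to each of those late jobs shows that they, like $j^*$, are all processed entirely at $\tilde s_1$. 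To close, one produces a late job processed before $C_{j^*}(\tilde\sigma^*)$ and completing after it, and invokes (ii) once more: this follows from a volume count on $[\tilde r_{j^*},T)$ and $[T,C_{j^*}(\tilde\sigma^*))$, using that $\tilde v_{n+1}=I(\sigma_1)+(s_2+1)V+Y$ equals exactly the idle time of the B-IDUA optimum inside $[0,T]$, so that jobs $1,\dots,n+1$ leave a volume deficit of only $Y$ relative to $[0,T]$ and the remaining capacity of $[T,C_{j^*}(\tilde\sigma^*))$ cannot be filled without running some late job there.

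A separate handling is needed in the subcase where $C_{j^*}(\tilde\sigma^*)\le T$ (possible only when $\ell^*=n+1$), because then no late job is even released before $C_{j^*}(\tilde\sigma^*)$ and fact (ii) has no candidate $g$. Here I would instead use a non-local relocation exchange: since $j^*$ runs at $\tilde s_1$ and $[\tilde r_{j^*},C_{j^*}(\tilde\sigma^*))$ is idle-free and carries only jobs $1,\dots,n$ and $n+1$, relocating $j^*$'s whole processing to $[\tilde r_{j^*},\tilde r_{j^*}+v_{j^*}]$ and compacting the rest of that interval's work into $[\tilde r_{j^*}+v_{j^*},C_{j^*}(\tilde\sigma^*))$ is energy-neutral and, after the volume bounds above are used to control how much each other job of $\{1,\dots,n,n+1\}$ is delayed, strictly decreases the total flow -- again a contradiction. (One may also shorten parts of the argument by the standard observation that an optimal schedule may be assumed to follow SRPT on its own processing times.)

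The main obstacle is exactly the volume/idle bookkeeping in the last two paragraphs: making rigorous "some later-completing job is processed before $C_{j^*}(\tilde\sigma^*)$'' (and the corresponding relocation estimate) requires handling non-contiguous processing of $j^*$, of $n+1$ and of the late jobs, as well as the two regimes in which $C_{n+1}(\tilde\sigma^*)$ lies before or after $T$ and the possibility that $n+1$ is run partly at $\tilde s_2$. It is precisely the chosen constants of the construction -- $\tilde v_{n+1}$ equal to the idle time of the B-IDUA optimum inside $[0,T]$, and $\tilde\Delta_1=n+3/2\in(n+1,n+2)$ pinning the speeds of $j^*$ and of the late jobs -- that make these counts and speed deductions go through, and verifying that every subcase closes is the delicate part.
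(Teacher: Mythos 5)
Your opening moves are sound: the two exchange facts are valid (provided the speeds travel with the swapped job slices, as you intend), and the deduction that the latest-completing job $j^*$ of $\{1,\dots,n\}$ must run at speed $\tilde s_1$ --- via $\kappa^+_{j^*}(\tilde\sigma^*)\le n+1<\tilde\Delta_1$ and Corollary~\ref{cor:kappa_delta_balance} --- is precisely the mechanism the paper itself uses elsewhere (Lemma~\ref{lem:slow-jobs}, Lemma~\ref{lemma:reduction_Y_on_old}). The count showing that $n+1$ must be processed somewhere in $[\tilde r_{j^*},T)$ when $C_{j^*}(\tilde\sigma^*)>T$ is also correct. But the argument does not close. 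First, the claim that ``the count of jobs completing no earlier than $j^*$ shows that every job of $\{n+2,\dots,2n+1\}$ completes after $j^*$'' is a non sequitur: a cardinality bound on the lower affection set says nothing about \emph{which} late jobs complete after $j^*$, and nothing in the hypotheses prevents several of them from completing before it. Second, and decisively, the whole contradiction hinges on exhibiting a job $g$ with $C_g(\tilde\sigma^*)>C_{j^*}(\tilde\sigma^*)$ that is processed inside $[\tilde r_{j^*},C_{j^*}(\tilde\sigma^*))$, and this is exactly the step you defer to ``a volume count'' and flag as the main obstacle. That count is not routine: the interval $[T,C_{j^*}(\tilde\sigma^*))$ can perfectly well be filled by late jobs that complete \emph{before} $C_{j^*}(\tilde\sigma^*)$ (for instance $\ell^*$ itself), by job $n+1$, and by the remaining processing of $j^*$, so no capacity argument alone produces the needed $g$. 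What is actually required is a two-sided SRPT-type exchange that trades a late job completing early against $j^*$ completing late, whereas your facts (i)--(ii) only cover the one-sided situation in which one job's completion time drops while no other completion time moves. The subcase $C_{j^*}(\tilde\sigma^*)\le T$ is likewise only gestured at. As written, this is a plan whose hardest steps remain open.

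For contrast, the paper's proof avoids all of this bookkeeping: fixing the processing times $x_j(\tilde\sigma^*)$, the flow-optimal processing order is SRPT; since $x_{n+1}(\tilde\sigma^*)\ge\tilde v_{n+1}/s_2>V\ge x_j(\tilde\sigma^*)$ for every $j\le n$, job $n+1$ is never preferred by SRPT over a pending job of $\{1,\dots,n\}$ and hence completes after all of them, and jobs $n+2,\dots,2n+1$ are only released at $r_{n+2}$, after which the conclusion is immediate. You mention SRPT only parenthetically at the end as a possible shortcut; it is in fact the intended argument, and completing your exchange-based route rigorously would essentially amount to re-deriving SRPT optimality from scratch.
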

\begin{proof}
    First consider the schedule where job~$n+1$ is processed at speed~$s_2$, while all other jobs 
    are scheduled at speed~$s_1$. From the fixed-speed version of the problem---and ignoring 
    energy cost---we know that the optimal order to process the jobs in that case is by using 
    the SRPT rule~\cite{schrage1968}. By construction, job~$n+1$ has larger processing time than 
    any job~$j$ in~$\{1,\dots,n\}$ and, therefore at time~$r_j$, completes later than each. For any schedule, 
    we can argue the same. Note also that, in turn,~$r_j$ for~$j\in\{n+2,\dots,2n+1\}$ is larger than the 
    last completion time of the jobs in~$\{1,\dots,n\}$ in any schedule. Thus, the lemma holds.
\end{proof}

\begin{figure}[t]
\includegraphics[width = 0.85\textwidth]{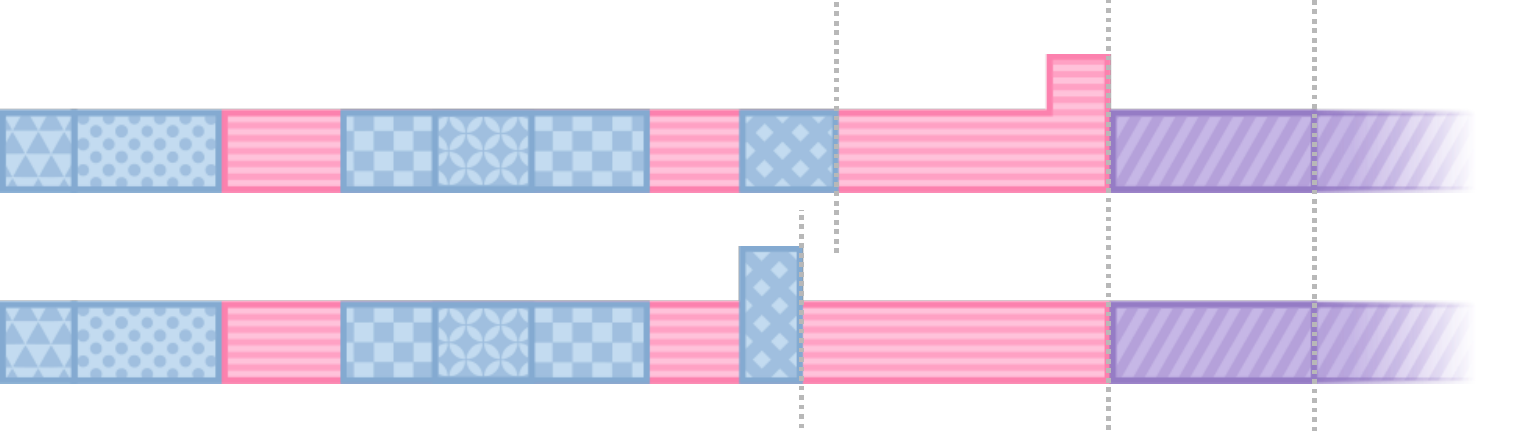}
\centering
\caption{A comparison between speeding up job~$n+1$ and job~$n$. The dashed lines indicate the completion times and show that speeding up job~$n+1$ improves the flow of at most~$n+1$ jobs, while speeding up job~$n$ improves the flow of at least~$n+2$ jobs.}
\label{fig:reduction_old_better}
\end{figure}

\begin{lemma}
    In an optimal solution for \flowenergyinstance, jobs~$n+1,\dots,2n+1$ have average speed~$s_1$.
    \label{lem:slow-jobs}
\end{lemma}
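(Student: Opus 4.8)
The plan is to use the affection/shrinking-energy machinery, in particular Corollary~\ref{cor:kappa_delta_balance}, applied to jobs $n+1,\dots,2n+1$ in an optimal schedule $\tilde\sigma^*$ for $\flowenergyinstance$. Since there are only two speeds, a job $j$ has average speed $s_1$ precisely when $s_j(\tilde\sigma^*)\in[s_1,s_2)$ is at the left endpoint, i.e.\ when it uses only the slow speed; otherwise $s_j(\tilde\sigma^*)\in(s_1,s_2]$. So it suffices to rule out every job in $\{n+1,\dots,2n+1\}$ running at average speed strictly above $s_1$. For such a job $j$, its expanding energy is $\tilde\Delta_j^+(\tilde\sigma^*)=\tilde\Delta_1$, and by Corollary~\ref{cor:kappa_delta_balance} optimality forces $\tilde\Delta_1 \le \kappa_j^+(\tilde\sigma^*)$. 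The strategy is therefore to upper-bound $\kappa_j^+(\tilde\sigma^*)$ for $j\in\{n+1,\dots,2n+1\}$ and show it is strictly less than $\tilde\Delta_1$ (recall $\tilde\Delta_1 > n+1$), yielding a contradiction with optimality unless $j$ runs at speed $s_1$.

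The key step is the affection-set count. By Lemma~\ref{obs:reduction_order}, jobs $n+1,\dots,2n+1$ complete after all of $1,\dots,n$, so none of $1,\dots,n$ is (lower-)affected by any $j\in\{n+1,\dots,2n+1\}$. For job $n+1$: it can lower-affect itself and any of the $n$ jobs $n+2,\dots,2n+1$ whose release time it exceeds, so $\kappa_{n+1}^+(\tilde\sigma^*)\le n+1$ in the worst case — which is exactly the borderline, so I would need the strict inequality $\kappa_{n+1}^+(\tilde\sigma^*) < \tilde\Delta_1$, using $\tilde\Delta_1 > n+1$, or else argue that $n+1$ cannot lower-affect all of $n+2,\dots,2n+1$ simultaneously (this is where the volume choice $\tilde v_{n+1}=I(\sigma_1)+(s_2+1)V+Y$ together with the large release-time gap $(s_2+1)V$ matters: at speed $s_2$ job $n+1$ still finishes before $\tilde r_{n+2}=\dots$, so it lower-affects none of them, giving $\kappa_{n+1}^+=1$). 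For jobs $j\in\{n+2,\dots,2n+1\}$: all share the same release time and come after everything else, so $j$ lower-affects only those among $n+2,\dots,2n+1$ completing no earlier than $j$, plus itself — at most $n$ jobs total, and again I would push this to a strict bound below $\tilde\Delta_1$. Combining, $\kappa_j^+(\tilde\sigma^*) \le n+1 < \tilde\Delta_1$ for every $j\in\{n+1,\dots,2n+1\}$, contradicting the second inequality of Corollary~\ref{cor:kappa_delta_balance} unless $s_j(\tilde\sigma^*)=s_1$.

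The main obstacle I anticipate is making the affection bound genuinely \emph{strict} (or off by enough), since the naive count for job $n+1$ gives exactly $n+1=\lceil\tilde\Delta_1\rceil-\text{(small)}$ and $\tilde\Delta_1$ itself is only known to lie in $(n+1,n+2)$. The clean resolution is to observe that the enormous spacing $(s_2+1)V$ between $C_{\max}(\sigma_1)$ and $\tilde r_{n+2}$ is calibrated so that no job of $\{1,\dots,n,n+1\}$ can still be running at or after time $\tilde r_{n+2}$ in any reasonable (in particular, any optimal) schedule: the total volume of those jobs is at most $V + \tilde v_{n+1}$, which even at the slow speed finishes well before $\tilde r_{n+2}$ given the slack built in. Hence $C_{n+1}(\tilde\sigma^*) \le \tilde r_{n+2}$ is not merely possible but forced, so $n+1$ lower-affects \emph{none} of $n+2,\dots,2n+1$, giving $\kappa_{n+1}^+(\tilde\sigma^*)=1$, comfortably below $\tilde\Delta_1$. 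For the remaining jobs $n+2,\dots,2n+1$ the count is at most $n < \tilde\Delta_1$, which is already strict. I would also need to double-check the degenerate case where some of these jobs never get processed before others complete, but since all of $n+2,\dots,2n+1$ are released simultaneously and SRPT-type reasoning applies, the affection set of each is contained in a set of size $n$, and the argument closes.
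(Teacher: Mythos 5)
Your core argument (first two paragraphs) is correct and is exactly the paper's proof: jobs $n+1,\dots,2n+1$ complete last, so each lower-affects only jobs in $\{n+1,\dots,2n+1\}$, giving $\kappa_j^+(\tilde\sigma^*)\le n+1 < \tilde\Delta_1 = \Delta_j^+(\tilde\sigma^*)$ if $s_j>s_1$, contradicting Corollary~\ref{cor:kappa_delta_balance}. The worry in your last paragraph about the bound being "borderline" is unfounded: the construction guarantees $n+1 < \tilde\Delta_1 < n+2$ outright, so $\kappa_j^+ \le n+1 < \tilde\Delta_1$ is already strict and nothing more is needed.

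Moreover, the "clean resolution" you propose there is actually false. You claim the total volume of jobs $1,\dots,n+1$ finishes "well before $\tilde r_{n+2}$" even at slow speed, but $V + \tilde v_{n+1} = V + I(\sigma_1) + (s_2+1)V + Y = \tilde r_{n+2} + Y > \tilde r_{n+2}$, so at speed $s_1$ these jobs overshoot $\tilde r_{n+2}$ by $Y$. Indeed, in an optimal schedule one has $C_{n+1}(\tilde\sigma^*) = \tilde r_{n+2}$ exactly (this equality is the mechanism that simulates the budget), and since lower affection permits $C_j(\sigma)=r_{j'}$, job $n+1$ does lower-affect all of $n+2,\dots,2n+1$, so $\kappa_{n+1}^+(\tilde\sigma^*)=n+1$, not $1$. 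Fortunately your weaker bound $\kappa_{n+1}^+\le n+1$ already suffices, so dropping the last paragraph leaves a correct proof matching the paper's.
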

\begin{proof}
    Since~$\tilde\Delta_1 > n+1$, and jobs~$n+1,\dots,2n+1$ complete last in an optimal schedule, by Corollary~\ref{cor:kappa_delta_balance}, these jobs run at speed~$s_1$.
\end{proof} 
Figure~\ref{fig:reduction_old_better} depicts the difference between speeding up a job in~$\{1,\dots,n\}$ and a jobs in~$\{n+1,\dots,2n+1\}$.

\begin{lemma}\label{lemma:reduction_Y_on_old}
    For \flowenergyinstance, any optimal solution uses a total amount of time of~$V-Y$ for jobs~$1,\dots, n$.
    \label{lem:budget}
\end{lemma}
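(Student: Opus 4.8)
The plan is to bracket the total processing time $\chi(\tilde\sigma^*)$ devoted to jobs $1,\dots,n$ in an optimal schedule $\tilde\sigma^*$ for $\flowenergyinstance$ from both sides, pinning it to exactly $V-Y$. First I would use the structural facts already established: by Lemma~\ref{obs:reduction_order} the jobs $n+1,\dots,2n+1$ complete last, and by Lemma~\ref{lem:slow-jobs} each of them runs at average speed $s_1$, hence contributes a fixed processing time ($\tilde v_{n+1}/s_1$ for job $n+1$ and $(Y+1)/s_1$ for each of the $n$ small jobs) and a fixed energy cost. Because the small jobs $n+2,\dots,2n+1$ are released only at time $C_{\max}(\sigma_1)+(s_2+1)V$, which strictly exceeds the latest possible completion time of jobs $1,\dots,n$ in any schedule (they have total volume $V$ and run at speed at least $s_1=1$, and $n+1$ has volume chosen so that even at speed $s_1$ it finishes exactly at $C_{\max}(\sigma_1)+(s_2+1)V+Y$ only when everything is slow — I would make this slack explicit), the only interaction between the two blocks is through job $n+1$ delaying the release-saturated small jobs. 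So the total flow decomposes into (i) the weighted flow of jobs $1,\dots,n$ together with job $n+1$, treated essentially as a B-IDUA-type subinstance, plus (ii) the flow contribution of the small jobs, which is minimized (and equals its trivial lower bound) exactly when job $n+1$ has completed by time $C_{\max}(\sigma_1)+(s_2+1)V$.

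Next I would run the exchange/KKT-style argument via Corollary~\ref{cor:kappa_delta_balance} on the jobs $1,\dots,n$ and on job $n+1$. For any job $j\in\{1,\dots,n\}$ whose average speed is $s_1$, speeding it up affects its entire affection set, which by construction includes all of jobs $1,\dots,n$ that complete no earlier, plus job $n+1$, plus — crucially — all $n$ small jobs if job $n+1$ has not yet cleared out of the way; this gives $\kappa_j(\tilde\sigma^*)\geq n+2 > \tilde\Delta_1$, contradicting optimality unless job $n+1$ already completes by the release time of the small jobs. Conversely, speeding up job $n+1$ itself affects at most the $n$ small jobs plus job $n+1$, i.e.\ $\kappa_{n+1}\leq n+1 < \tilde\Delta_1$, so in an optimal schedule job $n+1$ is never sped up beyond what is forced. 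Combining: job $n+1$ runs at exactly speed $s_1$ up to the point it must finish by $C_{\max}(\sigma_1)+(s_2+1)V$, which — given its volume $I(\sigma_1)+(s_2+1)V+Y$ — forces exactly $Y$ units of "extra speed-up volume" to be supplied, and by the slow-job argument this speed-up cannot be spent on jobs $n+1,\dots,2n+1$. Therefore it must be spent on jobs $1,\dots,n$, so $\chi(\tilde\sigma^*)$ restricted to $\{1,\dots,n\}$ is at most $V-Y$.

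For the matching lower bound, I would argue that using \emph{more} than $B$ energy on jobs $1,\dots,n$ — equivalently, shrinking their total processing time below $V-Y$ — is never beneficial: by Observation~\ref{obs:affection} the marginal flow gain from shrinking any such job is at most $\kappa_j\le n+1$ once job $n+1$ is out of the picture (its affection set is then contained in $\{1,\dots,n\}\cup\{n+1\}$), while by Lemma~\ref{lemma:delta} the marginal energy cost is $\tilde\Delta_1 > n+1$; hence any shrinkage past $V-Y$ strictly increases the objective, giving $\chi(\tilde\sigma^*)\restriction\{1,\dots,n\}\ge V-Y$. The two bounds together yield exactly $V-Y$.

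The main obstacle I anticipate is the careful bookkeeping around job $n+1$: one must show cleanly that in an optimal schedule job $n+1$ completes \emph{exactly} at $C_{\max}(\sigma_1)+(s_2+1)V$ (not earlier, not later) and that its required $Y$ units of acceleration are entirely "outsourced" to jobs $1,\dots,n$ rather than absorbed by job $n+1$ running fast itself — this is precisely the tension between $\kappa_{n+1}<\tilde\Delta_1<\kappa_j$ for $j\le n$, and getting the affection sets exactly right (who affects whom, and how the $(s_2+1)V$ buffer guarantees the small jobs are release-saturated while still leaving room for the $Y$ shrinkage) is where the construction's constants genuinely matter. I would also need to confirm feasibility, i.e.\ that the budget $B$ is actually achievable as $\Delta_1$-shrinkage of $Y$ units, which is exactly the content of Observation~\ref{obs:reduction_shrinkage} together with \eqref{eq:budget_inbetween}.
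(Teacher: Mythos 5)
Your proposal is correct and follows essentially the same route as the paper: both directions bound the total time for jobs $1,\dots,n$ by contradicting Corollary~\ref{cor:kappa_delta_balance}, exploiting that $n+1 < \tilde\Delta_1 < n+2$ and that $C_{n+1}(\tilde\sigma^*)$ lies on the wrong side of $r_{n+2}$ whenever that total time deviates from $V-Y$. The only cosmetic differences are that you first pin down $C_{n+1}(\tilde\sigma^*)\le r_{n+2}$ as a standalone fact before deducing the upper bound, and that the job $j$ at which the violation is exhibited should be taken with average speed strictly below $s_2$ rather than exactly $s_1$; neither changes the substance.
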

\begin{proof}
Consider an optimal schedule~$\tilde\sigma^*$ for \flowenergyinstance. 
Assume first, for the sake of contradiction, that~$\tilde\sigma^*$ uses strictly less time than~$V-Y$ for jobs~$1,\dots, n$. Then, until~$r_{n+2}$ the amount of time that we do not process a job in~$\{1,\dots, n\}$ is strictly greater than $r_{n+2}-V +Y = v_{n+1}$. Hence, job~$n+1$ must have a completion time strictly smaller than~$r_{n+2}$. In other words, for any~$j\in \{1, \dots, n+1\}$ we have~$\kappa_j^+(\tilde\sigma^*) \le n+1 < \tilde\Delta_1^+(\tilde\sigma^*)$. This contradicts~Corollary~\ref{cor:kappa_delta_balance}.

Assume next, for the sake of contradiction, that~$\tilde\sigma^*$ uses strictly more time than~$V-Y$ for jobs~$1,\dots, n$. Then, by Lemma~\ref{lem:slow-jobs} it must be the case that~$C_{n+1}(\tilde\sigma^*)>r_{n+2}$ and therefore any job~$j\in\{1,\dots, n\}$  must have~$\kappa_j\ge  n+2$. This again contradicts Corollary~\ref{cor:kappa_delta_balance}.
\end{proof}

By Lemma \ref{lem:slow-jobs} and Lemma \ref{lemma:reduction_Y_on_old}, we have that in an optimal schedule for \flowenergyinstance, only jobs in~$\{1,\dots,n\}$ are sped up, and they have a total processing time of~$V-Y$. Note that by Observation \ref{obs:reduction_shrinkage}, this also holds for an optimal schedule for \budgetinstance. Hence, it follows that if we take an optimal schedule for \budgetinstance\ and add the jobs~$n+1,\dots,2n+1$ at speed~$s_1$, then we obtain a schedule for \flowenergyinstance\ that satisfies these conditions, as depicted in Figure \ref{fig:reduction_opt_for_both}. This naturally leads to the following question: If we take an optimal schedule for \flowenergyinstance\ and remove jobs~$n+1,\dots,2n+1$, do we obtain an optimal schedule for \budgetinstance? The following lemma shows that this indeed the case.

\begin{figure}[t]
\includegraphics[width = 0.85\textwidth]{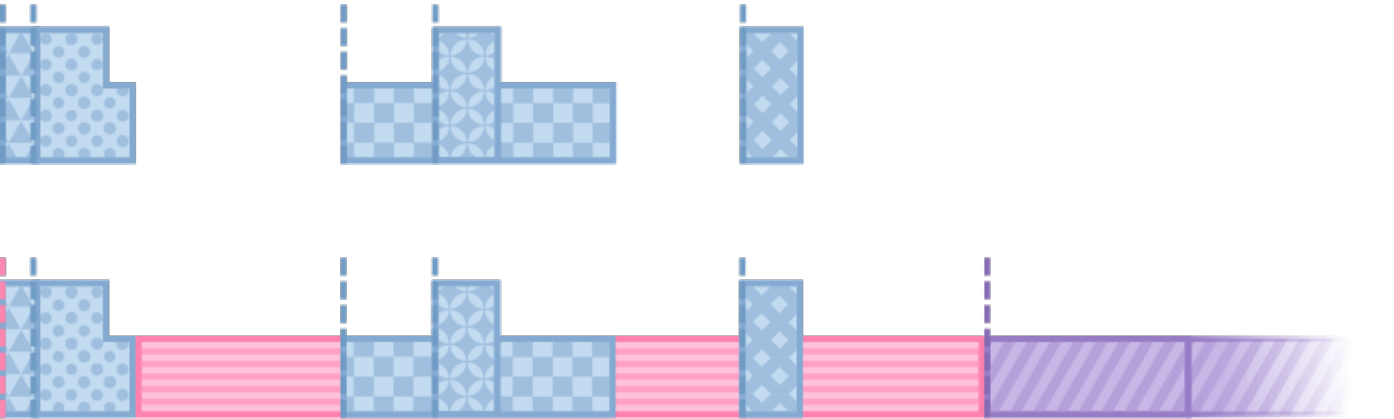}
\centering
\caption{Jobs~$n+1,\dots,2n+1$ are added to an optimal schedule for \budgetinstance\ and run at speed~$s_1$. This gives a schedule for \flowenergyinstance\ where jobs~$1,\dots,n$ have a total processing time of~$V-Y$, and are the only jobs that run at a speed higher than~$s_1$.}
\label{fig:reduction_opt_for_both}
\end{figure} 

\begin{lemma}
    The schedule~$\tilde{\sigma}^*_{\{1,\dots,n\}}$ is optimal for~$\mathcal{I}^B$.
    \label{lemma:fe-reduction}
\end{lemma}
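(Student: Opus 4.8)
\emph{Proof proposal.} The plan is to convert the structural lemmas already established (Lemmas~\ref{obs:reduction_order}, \ref{lem:slow-jobs}, \ref{lem:budget}) into an \emph{exact} accounting of the objective value of any optimal schedule $\tilde\sigma^*$ of $\mathcal{I}^{FE}$, and then to compare it against the objective value obtained by \emph{lifting} an optimal schedule of $\mathcal{I}^B$ into $\mathcal{I}^{FE}$. First I would verify that $\tilde\sigma^*_{\{1,\dots,n\}}$ is a feasible schedule for $\mathcal{I}^B$: it processes each job $i\le n$ only after $\tilde r_i=r_i$ and only at the speeds $s_1$ or $s_2$, so it is a valid two-speed schedule, and by Lemma~\ref{lem:budget} it spends total time $V-Y$ on jobs $1,\dots,n$. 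The key auxiliary fact here is a two-speed specialisation of Observation~\ref{obs:convex_combination_speeds}: in a two-speed instance the energy spent on a job set of total volume $V$ that is processed for a total of $\chi$ time units equals $P_1V+(V-\chi)\Delta_1$. With $\chi=V-Y$ this evaluates to $P_1V+Y\Delta_1=B$ by the definition $Y=(B-P_1V)/\Delta_1$, so $\tilde\sigma^*_{\{1,\dots,n\}}$ exactly exhausts the budget and is feasible.

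Next I would compute the objective value of $\tilde\sigma^*$. By Lemma~\ref{lem:slow-jobs} the auxiliary jobs $n+1,\dots,2n+1$ run at speed $s_1$ in $\tilde\sigma^*$, hence consume exactly $\tilde P_1(\tilde v_{n+1}+n(Y+1))$ energy; combined with the two-speed energy formula applied in $\mathcal{I}^{FE}$ (now with $\tilde\Delta_1$) and $\chi(\tilde\sigma^*)=V-Y$ from Lemma~\ref{lem:budget}, the total energy satisfies $E(\tilde\sigma^*)=\psi$ for the instance constant $\psi:=\tilde P_1V+Y\tilde\Delta_1+\tilde P_1(\tilde v_{n+1}+n(Y+1))$. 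For the flow, Lemma~\ref{obs:reduction_order} forces all of jobs $1,\dots,n$ to complete before job $n+1$, which together with the bookkeeping identity $\tilde r_{n+2}=C_{\max}(\sigma_1)+(s_2+1)V=(V-Y)+\tilde v_{n+1}$ yields $C_{n+1}(\tilde\sigma^*)\ge (V-Y)+\tilde v_{n+1}=\tilde r_{n+2}$; and the $n$ identical, released-together, speed-$s_1$ jobs $n+2,\dots,2n+1$ contribute flow at least $\tfrac{n(n+1)}{2}(Y+1)$ by the standard consecutive-processing lower bound. Since the flow of jobs $1,\dots,n$ in $\tilde\sigma^*$ is by definition $F(\tilde\sigma^*_{\{1,\dots,n\}})$, we obtain $F(\tilde\sigma^*)+E(\tilde\sigma^*)\ge F(\tilde\sigma^*_{\{1,\dots,n\}})+\phi+\psi$, where $\phi:=\tilde r_{n+2}+\tfrac{n(n+1)}{2}(Y+1)$.

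For the other direction, let $\hat\sigma$ be an optimal schedule for $\mathcal{I}^B$. By Observation~\ref{obs:reduction_shrinkage}, $\chi(\hat\sigma)=V-Y$, and since it is optimal it has no avoidable idle time, so its makespan is at most $C_{\max}(\sigma_1)<\tilde r_{n+2}$. I would extend $\hat\sigma$ to a schedule $\hat\sigma'$ for $\mathcal{I}^{FE}$ by running job $n+1$ at speed $s_1$ in exactly the idle slots of $\hat\sigma$ before $\tilde r_{n+2}$, and running jobs $n+2,\dots,2n+1$ back-to-back at speed $s_1$ from $\tilde r_{n+2}$ onward; the identity $\tilde r_{n+2}-(V-Y)=\tilde v_{n+1}$ makes this feasible and forces $C_{n+1}(\hat\sigma')=\tilde r_{n+2}$, so the very same accounting as above, now with equalities throughout, gives $F(\hat\sigma')+E(\hat\sigma')=F(\hat\sigma)+\phi+\psi$. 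Optimality of $\tilde\sigma^*$ for $\mathcal{I}^{FE}$ then yields
\[
F(\tilde\sigma^*_{\{1,\dots,n\}})+\phi+\psi \;\le\; F(\tilde\sigma^*)+E(\tilde\sigma^*)\;\le\; F(\hat\sigma')+E(\hat\sigma')\;=\;F(\hat\sigma)+\phi+\psi,
\]
hence $F(\tilde\sigma^*_{\{1,\dots,n\}})\le F(\hat\sigma)$, the optimal flow for $\mathcal{I}^B$. Together with feasibility this proves $\tilde\sigma^*_{\{1,\dots,n\}}$ is optimal for $\mathcal{I}^B$.

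I expect the main obstacle to lie in the two places where one must pin down the \emph{exact} flow and energy contributed by the auxiliary jobs $n+1,\dots,2n+1$ — in $\tilde\sigma^*$ and in $\hat\sigma'$ — rather than merely bounding them. This rests entirely on the specific parameter choices $\tilde v_{n+1}=I(\sigma_1)+(s_2+1)V+Y$ and $\tilde r_{n+2}=C_{\max}(\sigma_1)+(s_2+1)V$ arranging that the work of jobs $1,\dots,n+1$ precisely fills the time before $\tilde r_{n+2}$ with no slack, so that $C_{n+1}$ is pinned to $\tilde r_{n+2}$ (using that job $n+1$ is released at $0$ and that an optimum has no avoidable idle time), and, in the lift, that the final idle slot $[\,\text{makespan}(\hat\sigma),\tilde r_{n+2}]$ of $\hat\sigma$ is non-empty. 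Everything else — the two-speed energy identity, the makespan bound for an optimal $\mathcal{I}^B$ schedule, and the flow lower bound for $n$ identical jobs released simultaneously — is routine and can be dispatched quickly using Observations~\ref{obs:convex_combination_speeds} and~\ref{obs:affection} and inequality~\eqref{eq:budget_inbetween}.
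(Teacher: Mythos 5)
Your proof is correct and takes essentially the same route as the paper: both lift a budget-feasible schedule of $\budgetinstance$ into $\flowenergyinstance$ and compare objective values. The paper phrases it as a contradiction (assume $\tilde\sigma^*_{\{1,\dots,n\}}$ is not optimal, lift a strictly better schedule $\sigma'$, and show this contradicts the optimality of $\tilde\sigma^*$), whereas you give a direct comparison by lifting an optimal $\hat\sigma$ and writing out the objective of both sides explicitly with instance constants $\phi$ and $\psi$. One genuine advantage of your accounting is that you only need the lower bound $C_{n+1}(\tilde\sigma^*)\ge \tilde r_{n+2}$ and then pass through an inequality chain, whereas the paper's construction of $\tilde\sigma'$ --- which copies $\tilde\sigma^*$ verbatim on $[\tilde r_{n+2},\infty)$ and completes job $n+1$ by $\tilde r_{n+2}$ in the earlier portion --- tacitly requires that $\tilde\sigma^*$ does no processing of job $n+1$ after $\tilde r_{n+2}$ (i.e.\ $C_{n+1}(\tilde\sigma^*)=\tilde r_{n+2}$), a fact that must be supplied by a separate WLOG or exchange step. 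Your version sidesteps that subtlety. The small items you flag as needing attention are indeed the right ones to verify: the two-speed energy identity $E=P_1V+(V-\chi)\Delta_1$ (which follows from Observation~\ref{obs:convex_combination_speeds} and Lemma~\ref{lemma:delta}), the makespan bound $C_{\max}(\hat\sigma)\le C_{\max}(\sigma_1)$ (standard non-idling/remaining-work argument), and the arithmetic identity $\tilde r_{n+2}-(V-Y)=\tilde v_{n+1}$ (which holds by construction, using $C_{\max}(\sigma_1)=V+I(\sigma_1)$). All check out.
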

\begin{proof}
    We first argue that~$\tilde\sigma^*_{\{1,\dots, n\}}$ is a feasible schedule. Indeed, by definition jobs~$\{1,\dots, n\}$ are scheduled identically in~$\tilde\sigma^*$ and~$\tilde\sigma^*_{\{1,\dots, n\}}$, and each such job has the same volume and release time in each of the corresponding schedules. It directly follows that each such job is completely processed after its respective release time. By Lemma~\ref{lem:budget}, it also follows that the bugdet~$B$ is not exceeded.

    Assume that~$\tilde\sigma^*_{\{1,\dots, n\}}$ is not optimal for \budgetinstance, in other words that there exists some other feasible schedule~$\sigma'$ for \budgetinstance\  with strictly less flow than~$\tilde\sigma^*_{\{1,\dots, n\}}$. We argue that this contradicts the optimality of~$\tilde\sigma^*$ for \flowenergyinstance. Indeed consider schedule~$\tilde\sigma'$ for \flowenergyinstance\ defined as follows:
    \begin{align*}
    \left( \tilde{J}'(t), \tilde{S}'(t) \right) = 
        \begin{cases}
            \left(\tilde{J}^*(t), \tilde{S}^*(t) \right) & \text{if } t\ge r_{n+2} \\
            \left(J'(t),S'(t)\right) & \text{if } J'(t)\neq \emptyset \\
            \left( n+1, \tilde{s}_1\right) & \text{at any other time } t\,.
        \end{cases}
    \end{align*}
    First note that~$\tilde\sigma'$ is defined for all times~$t$, and by construction produces feasible schedules. Also note that it uses time~$V-Y$ for jobs~$1,\dots, n$ and all other jobs are processed at a speed of~$\tilde{s}_1$. Thus, the overall energy consumption of~$\tilde\sigma'$ is the same as~$\tilde\sigma^*$. The same holds for the flow times of jobs~$n+1,\dots, 2n+1$. However, the total flow time of jobs~$1,\dots, n$ is strictly smaller under~$\tilde\sigma'$ than under~$\tilde\sigma^*$, leading to a contradiction.
\end{proof}

\UAhard*
\begin{proof}
    Given Lemma~\ref{lemma:fe-reduction}, it remains to argue that \flowenergyinstance\ can be constructed in polynomial time from \budgetinstance\ and~$\tilde\sigma^*_{\{1,\dots, n\}}$ in polynomial time from~$\tilde\sigma^*$. The first one clearly follows by construction. The second one is less obvious, given that a schedule is defined as two functions of~$t$. However, we note that for any optimal schedule it is without loss of generality to assume that each job starts processing at a speed of~$s_2$ (respectively~$\tilde{s}_2$), and only switches to~$s_1$ (respectively~$\tilde{s}_1$) at most once. Thus, the respective schedules and the associated transformation can be described in polynomial time.
\end{proof}

\begin{corollary}
    FE-ICUA is $\mathsf{NP}$-hard.
\end{corollary}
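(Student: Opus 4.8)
The plan is to show that FE-IDUA is a special case of FE-ICUA and transfer the hardness from Theorem~\ref{thm:UA}. Given an instance of FE-IDUA with jobs $1,\dots,n$, allowed speeds $0=s_0<s_1<\dots<s_k$, and power consumptions $0=P_0<P_1<\dots<P_k$, I would construct a continuous-speed instance on the same $n$ jobs whose power function $P$ is the piecewise-linear interpolation of the points $(s_0,P_0),(s_1,P_1),\dots,(s_k,P_k)$ on $[0,s_k]$, extended above $s_k$ by a single linear piece whose slope is chosen large enough that the corresponding shrinking energy exceeds $\sum_{j}w_j=n$, and at least as large as the slope of the segment $[s_{k-1},s_k]$ so that $P$ remains convex. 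By Observation~\ref{obs:no_conv_comb} we may assume the discrete instance has no superfluous speed, so $P$ restricted to $[0,s_k]$ is already convex, and hence $P$ is convex on $[0,\infty)$. This instance is computable in polynomial time, being described by the $O(k)$ breakpoints of $P$.

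The second step is to argue that the two instances have the same optimal objective value, and that optimal schedules translate between them in polynomial time. For ``$\le$'', every feasible FE-IDUA schedule is feasible for the continuous instance with identical flow and energy, since each $s_i$ is an allowed continuous speed with $P(s_i)=P_i$. For ``$\ge$'', take an optimal FE-ICUA schedule $\sigma$. First, convexity of $P$ (Jensen) lets me assume without loss of generality that each job runs at the single constant speed $s_j(\sigma)=v_j/x_j(\sigma)$ without increasing energy. Second, by a marginal argument identical to the one behind Corollary~\ref{cor:kappa_delta_balance} and Lemma~\ref{lemma:delta} — the flow saved by shrinking a job's processing time is at most $n$ per unit time, while on the top segment of $P$ the shrinking energy exceeds $n$ — no job runs at speed above $s_k$ in an optimal schedule, so $s_j(\sigma)\in[s_i,s_{i+1}]$ for some $i$. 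Third, since $P$ is linear on $[s_i,s_{i+1}]$, Observation~\ref{obs:convex_combination_speeds} applies verbatim: I replace the processing of each such job by the convex combination of speeds $s_i$ and $s_{i+1}$ keeping $x_j(\sigma)$ unchanged, which also keeps $E_j(\sigma)$ unchanged. The resulting schedule uses only the discrete speeds $s_1,\dots,s_k$, is feasible for FE-IDUA, and has the same flow and energy as $\sigma$. Hence the optimal values coincide, and the transformation is polynomial (as in the proof of Theorem~\ref{thm:UA}, it suffices to assume each job switches speed at most once).

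This is essentially the content of the remark following Observation~\ref{obs:convex_combination_speeds} (see also Theorem~7 in~\cite{ComplexitySpeedScaling}), so I do not anticipate a real obstacle; the only point requiring care is the cap on continuous speeds — in the continuous model one may in principle run arbitrarily fast, which could in principle make the continuous optimum strictly cheaper than the discrete one — and this is handled precisely by the steep linear extension of $P$ above $s_k$ together with the affection/shrinking-energy argument of Corollary~\ref{cor:kappa_delta_balance}. The same reduction applied to the budget versions likewise shows that B-ICUA is $\mathsf{NP}$-hard.
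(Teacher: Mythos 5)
Your proof is correct and follows essentially the same route the paper intends: the corollary is stated without its own argument because the paper earlier remarks (right after Observation~\ref{obs:convex_combination_speeds}) that the discrete-speed setting is a special case of the continuous one with a piecewise-linear power function, citing Theorem~7 of~\cite{ComplexitySpeedScaling}. You simply spell out the two details the paper leaves implicit, namely why $P$ can be taken convex (no superfluous discrete speed) and why the steep linear extension above $s_k$ makes speeds beyond $s_k$ useless in an optimum, via the same shrinking-energy-versus-affection comparison underlying Corollary~\ref{cor:kappa_delta_balance}.
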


\subsection{Hardness of FE-IDWU}\label{sec:FEWU_hardness}
We show that FE-IDWU is $\mathsf{NP}$-hard through a reduction from \textsc{SubsetSum}. In~\cite[Section 3.2.2]{barcelo}, Barcelo et al. give a reduction from \textsc{SubsetSum} to B-IDWU where they show that in the B-IDWU instance, there is a gap between the optimal flows corresponding to a YES-instance and a NO-instance of \textsc{SubsetSum}. To adapt this idea for a reduction to FE-IDWU, we simulate a budget in a similar way to the reduction from B-IDUA to FE-IDUA from Section~\ref{sec:hardness_FEUA}. Rather than using a job with a comparatively large volume, we use multiple lightweight unit-sized jobs. This does introduce a new challenge, as the order of the jobs becomes harder to predict (i.e. a lightweight job could still be prioritized over a heavier job). We circumvent this hurdle by making the weight of the new jobs so lightweight that the increase in the optimal flow due to their addition cannot be too large. To be more specific: even with the inclusion of the lightweight jobs, there remains a gap between the optimal flows corresponding to a YES-instance and a NO-instance of \textsc{SubsetSum}.

Consider an instance \subsetsuminstance\ of \textsc{SubsetSum}, where we are given~$m$ elements~$a_1\geq\dots\geq a_m$ with~$a_i\in\mathbb{N}$ and a value~$A\in \mathbb{N}$ with~$a_1<A<\sum_{i=1}^m a_i$. The instance is a YES-instance if and only if there is a subset~$L\subseteq\{1,\dots,m\}$ such that~$\sum_{i\in L}a_i=A$. \textsc{SubsetSum} in known to be an $\mathsf{NP}$-hard problem. We assume that~$a_i\leq 2a_{i'}$ for all~$i,i'\in \{1,\dots,m\}$ as Barcelo et al. show that even with this assumption, \textsc{SubsetSum} remains $\mathsf{NP}$-hard~\cite[proof of Theorem 38]{barcelo}.

For our reduction to FE-IDWU, we define the instance \flowenergyinstance\ as follows. We first construct a \emph{job package}~$\J_i$ for each of the elements~$a_i$. These are identical to the packages from the reduction to B-IDWU from~\cite[Section 3.2.2]{barcelo} and defined as follows. Each~$\J_i$ consists of~$m+1$ jobs:~$\J_i = \{(i,0),\dots,(i,m)\}$. For~$i\in\{1,\dots,m\}$, job~$(i,0)$ has weight~$w_{i,0}=\frac{a_i}{m}$, and jobs~$(i,j)$ with~$j\geq 1$ have weight~$w_{i,j}=2ma_1^3$. Job~$(1,0)$ releases at~$r_{1,0}=0$ and jobs~$(i,0)$ with~$i\in\{2,\dots, m\}$ have release time~$r_{i,0}=r_{i-1,0}+m+1$. The remaining jobs in~$\J_i$ release at~$r_{i,1}=\dots=r_{i,m}=r_{i,0}+1-\alpha_i$ with~$\alpha_i = \frac{a_i}{2a_1^2}$. For ease of notation, we denote~$r_i = r_{i,1}=\dots=r_{i,m}$ and~$w_i = w_{i,1}= \dots =w_{i,m}$.

Since we want to utilize results from the reduction to B-IDWU by Barcelo et al., we want to construct an FE-IDWU instance that simulates a budget. To do this, we add two new job packages:~$\J_0$ and~$\J_{m+1}$, which serve similar roles to job~$n+1$, and jobs~$n+1,\dots,2n+1$ from the reduction in Section~\ref{sec:hardness_FEUA}.

Let~$\J_{0} = \{(0, 1),\dots, (0, K)\}$ and~$\J_{m+1}=\{(m+1,1),\dots,(m+1,\Tilde{K})\}$  with~$K=\left\lceil\frac{m^2}{2}+\frac{A}{2a_1^2}\right\rceil$ and~$\Tilde{K}=99m^8 a_1^3$.  All jobs in~$\J_0$ have release time~$r_{0}=r_{0,i}=0$ and weight~$w_{0}=w_{0,i}=\frac{1}{32m^5}$. All jobs in~$\J_{m+1}$ release at~$r_{m+1}=r_{m+1,i} =m(m+1)+ \left\lceil\frac{m^2}{2}+\frac{A}{2a_1^2}\right\rceil - \frac{m^2}{2}-\frac{A}{2a_1^2}$, and have weight~$w_{m+1}=w_{m+1,i}=\frac{1}{33m^5}$. Note that although the number of jobs depends on~$A$, all jobs in~$\J_0$ and~$\J_{m+1}$ are respectively identical and therefore the encoding length remains polynomial.

For the processor, we have two speeds:~$s_1=1$ and~$s_2=2$, with respective power consumptions~$P_1=1$ and~$P_2=3m^3a_1^3 + \frac{1}{33m^5} + 2$. This give us that $\Delta_1 = 3m^3a_1^3 + \frac{1}{33m^5}$.

\begin{observation}\label{obs:reduction_WU_lowest_weight}
    Let~$\sigma^*$ be an optimal schedule for \flowenergyinstance. Then for any~$j\in\J_{m+1}$, we have that~$s_j(\sigma^*)=s_1$ and for all~$j'\in\bigcup_{i=0}^m\J_i$, we have~$C_j(\sigma^*)< C_{j'}(\sigma^*)$.
\end{observation}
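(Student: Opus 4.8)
The plan is to first fix the completion order in $\sigma^*$ and then read off the speeds from Corollary~\ref{cor:kappa_delta_balance}. Two elementary observations drive everything. First, every job of $\bigcup_{i=0}^m\J_i$ is strictly heavier than every job of $\J_{m+1}$: the lightest job of $\bigcup_{i=0}^m\J_i$ is a job of $\J_0$, of weight $w_0=\tfrac1{32m^5}$, while $w_{m+1}=\tfrac1{33m^5}<w_0$. Second, in $\flowenergyinstance$ the jobs of $\J_{m+1}$ are released last: they all release at $r_{m+1}=m(m+1)+\big\lceil\tfrac{m^2}2+\tfrac A{2a_1^2}\big\rceil-\tfrac{m^2}2-\tfrac A{2a_1^2}\ge m(m+1)$, whereas the latest release among $\bigcup_{i=0}^m\J_i$ is $r_{m,0}+1-\alpha_m=m^2-\alpha_m<m(m+1)$.

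Using these, I would show that every job of $\J_{m+1}$ completes after all jobs of $\bigcup_{i=0}^m\J_i$, i.e.\ $C_{j'}(\sigma^*)<C_j(\sigma^*)$ for all $j'\in\bigcup_{i=0}^m\J_i$ and $j\in\J_{m+1}$. Suppose not, and let $j$ be the last-completing job of $\J_{m+1}$, so $C_{j'}(\sigma^*)\ge C_j(\sigma^*)$ for some $j'\in\bigcup_{i=0}^m\J_i$; since two jobs cannot finish at the same instant we may take $C_{j'}(\sigma^*)>C_j(\sigma^*)$. Then a positive volume of $j'$ is processed after $C_j(\sigma^*)$, while all of $j$ is processed by $C_j(\sigma^*)$, and both jobs are released by $r_{m+1}\le C_j(\sigma^*)$. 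Rescheduling so that an equal volume of $j$ occupies time after $C_j(\sigma^*)$ and that late portion of $j'$ occupies time freed before $C_j(\sigma^*)$ decreases $C_{j'}(\sigma^*)$, increases $C_j(\sigma^*)$ by no more than $C_{j'}(\sigma^*)$ drops, and leaves every other completion time unchanged; since $w_{j'}>w_{m+1}=w_j$, this strictly lowers the total weighted flow, contradicting optimality.

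With the order in hand, the speed claim is immediate from Corollary~\ref{cor:kappa_delta_balance}. The total weight of $\J_{m+1}$ is $\tilde K\,w_{m+1}=99m^8a_1^3\cdot\tfrac1{33m^5}=3m^3a_1^3$, strictly below $\Delta_1=3m^3a_1^3+\tfrac1{33m^5}$. If $s_j(\sigma^*)=s_2$ for some $j\in\J_{m+1}$, then $s_j(\sigma^*)\in(s_1,s_2]$, so $\Delta_j^+(\sigma^*)=\Delta_1$, and the second bullet of Corollary~\ref{cor:kappa_delta_balance} gives $\Delta_1=\Delta_j^+(\sigma^*)\le\kappa_j^+(\sigma^*)$; but every job in $K_j^+(\sigma^*)$ has completion time at least $C_j(\sigma^*)$ and therefore lies in $\J_{m+1}$ by the previous paragraph, so $\kappa_j^+(\sigma^*)\le\sum_{j''\in\J_{m+1}}w_{j''}=3m^3a_1^3<\Delta_1$, a contradiction; hence $s_j(\sigma^*)=s_1$. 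The main obstacle I foresee is making the exchange in the middle paragraph rigorous: because preemption is allowed and the jobs of $\bigcup_{i=0}^m\J_i$ may themselves run at speed $s_2$, the swap must be set up volume by volume (so differing speeds give differing time intervals) while respecting all release times, which is exactly what is needed to justify that the decrease of $C_{j'}(\sigma^*)$ dominates the increase of $C_j(\sigma^*)$; once that is settled, the remaining steps follow as above.
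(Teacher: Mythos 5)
Your proof is correct and follows essentially the same route as the paper's: first the completion-order claim via the weight/release-time comparison (which the paper asserts in one line and you justify with an explicit volume-for-volume exchange), then bounding the lower affection of each job of $\J_{m+1}$ by $\tilde{K}w_{m+1}=3m^3a_1^3<\Delta_1$ and invoking Corollary~\ref{cor:kappa_delta_balance} — indeed you are slightly more careful than the paper, which bounds $\kappa_j$ where the second bullet of the corollary really needs $\kappa_j^+$. Note that the direction you establish, $C_{j'}(\sigma^*)<C_j(\sigma^*)$ for $j'\in\bigcup_{i=0}^m\J_i$ and $j\in\J_{m+1}$, is the one the paper's own proof and the later use in Lemma~\ref{lemma:WU_shrinkage} require; the inequality as printed in the observation appears to have its subscripts transposed, and your minor detour through "the last-completing job of $\J_{m+1}$" is unnecessary (any violating pair works) but harmless.
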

\begin{proof}
    Consider a job~$j\in\J_{m+1}$. For any~$j'\in\bigcup_{i=0}^m\J_i$, we have~$w_j<w_{j'}$ and~$r_j>r_{j'}$. Hence, it can never be optimal to complete~$j$ before~$j'$. It follows that~$j$ can also not affect any of the jobs in~$\bigcup_{i=0}^m\J_i$, from which we obtain that~$\kappa_j(\sigma^*)\leq\Tilde{K}w_{m+1} = 3m^3a_1^3 < \Delta_1$. By Corollary~\ref{cor:kappa_delta_balance}, we must have~$s_j(\sigma^*)=s_1$.
\end{proof}

For a schedule~$\sigma$, let~$F_i(\sigma) = \sum_{j\in \J_i}F_j(\sigma)$ and~$\chi_i = \sum_{j\in \J_i}x_j(\sigma)$. Furthermore, let~$\Tilde{\chi}(\sigma) = \sum_{i=0}^m \chi_i$ and~$\Tilde{F}(\sigma) = \sum_{i=0}^m F_i(\sigma)$. 

Let $Y := \frac{m^2}{2}+\frac{A}{2a_1^2}$. The following observation shows that for finding an optimal schedule, it is sufficient to only consider schedules with~$\Tilde\chi(\sigma) = m(m+1) + K - Y$.

\begin{restatable}{lemma}{WUshrinkage}\label{lemma:WU_shrinkage}
    In any optimal schedule~$\sigma^*$ for \flowenergyinstance, we have that $\Tilde{\chi}(\sigma^*) = m(m+1) + K - Y$.
\end{restatable}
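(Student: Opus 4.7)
The plan is to establish both $\tilde\chi(\sigma^*) \leq r_{m+1}$ and $\tilde\chi(\sigma^*) \geq r_{m+1}$ by contradiction with the optimality of $\sigma^*$, recalling that $r_{m+1} = m(m+1) + K - Y$. Throughout I invoke Observation~\ref{obs:reduction_WU_lowest_weight} to assume without loss of generality that every job of $\J_{m+1}$ is processed at speed $s_1$ strictly after every $\bigcup_{i=0}^m\J_i$-completion. The whole argument is driven by the numerical gap
\[
\tilde K \, w_{m+1} \;=\; 3m^3 a_1^3 \;<\; \Delta_1 \;=\; 3m^3 a_1^3 + \tfrac{1}{33m^5} \;<\; w_0 + \tilde K\, w_{m+1} \;\leq\; w_j + \tilde K\, w_{m+1}
\]
that holds for every $j \in \bigcup_{i=0}^m \J_i$ (since $w_j \geq w_0 = 1/(32m^5)$), together with the complementary fact that the total weight of $\bigcup_{i=0}^m\J_i$ is strictly less than $\Delta_1$.

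\emph{Upper bound.} Suppose $\tilde\chi(\sigma^*) > r_{m+1}$. The last-completing job $j \in \bigcup_{i=0}^m\J_i$ satisfies $C_j \geq \tilde\chi(\sigma^*) > r_{m+1}$, so $j$ affects every $\J_{m+1}$-job and hence $\kappa_j(\sigma^*) \geq w_j + \tilde K\, w_{m+1} > \Delta_1$. If $s_j(\sigma^*) < s_2$ then $\Delta_j(\sigma^*) = \Delta_1 < \kappa_j(\sigma^*)$, contradicting Corollary~\ref{cor:kappa_delta_balance}. Otherwise every job completing after $r_{m+1}$ runs entirely at $s_2$; since running all of $\bigcup_{i=0}^m\J_i$ at $s_2$ yields total processing $(K+m(m+1))/2 < r_{m+1}$, there must exist an earlier job $j'$ with $s_{j'}(\sigma^*) < s_2$. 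I would then perform an energy- and volume-preserving swap: convert a $\delta$-sliver of $j$'s $s_2$-processing to $s_1$ and a matching $\delta$-sliver of $j'$'s $s_1$-processing to $s_2$. Keeping the completion order of $\sigma^*$, the completion time of $j$ is preserved while $C_{j'}$ and the completion times of every job strictly between $j'$ and $j$ in the order decrease by $\delta/2$; flow therefore strictly decreases at unchanged energy, contradicting the optimality of $\sigma^*$.

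\emph{Lower bound.} Suppose $\tilde\chi(\sigma^*) < r_{m+1}$. Since running all of $\bigcup_{i=0}^m\J_i$ at $s_1$ would yield $\tilde\chi = K + m(m+1) > r_{m+1}$, some job has $s_j(\sigma^*) > s_1$; let $j^*$ be such a job with the largest completion time. Converting a small $\delta$-sliver of $j^*$'s $s_2$-processing to $s_1$ saves $\delta\Delta_1/2$ energy and lengthens the processing by $\delta/2$; because $\tilde\chi(\sigma^*) < r_{m+1}$ this $\delta/2$ is absorbed by the slack before $r_{m+1}$, so no $\J_{m+1}$-job is pushed. The resulting flow penalty is at most $(\delta/2)\,W$, where $W$ is the total weight of $\bigcup_{i=0}^m\J_i$-jobs completing at or after $j^*$. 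A direct summation using $\sum_{i=1}^m a_i \leq 2ma_1$ and $K \leq 2m^2$ yields
\[
W \;\leq\; K w_0 + \tfrac{1}{m}\sum_{i=1}^m a_i + 2m^3 a_1^3 \;\leq\; \tfrac{1}{16m^3} + 2a_1 + 2m^3 a_1^3 \;<\; \Delta_1,
\]
so the swap strictly improves the cost, again a contradiction.

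The main obstacle I expect is formalizing the swap in the upper bound: one has to explicitly construct, via preemption, a feasible schedule with the perturbed processing times that respects all release times of $\bigcup_{i=0}^m \J_i$ and that preserves $C_j$ while decreasing $C_{j'}$ by $\delta/2$. Once this rearrangement is justified, both directions reduce to Corollary~\ref{cor:kappa_delta_balance} and the calibrated numerical gap identified above.
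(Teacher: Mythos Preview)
Your proposal follows the same two-sided comparison with $r_{m+1}$ and the same calibrated numerical gap
\[
\tilde K\, w_{m+1} < \Delta_1 < w_0 + \tilde K\, w_{m+1},\qquad \sum_{j\in\bigcup_{i=0}^m\J_i} w_j < \Delta_1
\]
as the paper's sketch. The lower bound is essentially identical: both you and the paper pick a job running above $s_1$, observe that (because $\tilde\chi<r_{m+1}$) it does not lower-affect $\J_{m+1}$, bound its lower affection by the total weight of $\bigcup_{i=0}^m\J_i$, and contradict the second bullet of Corollary~\ref{cor:kappa_delta_balance}.

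The only genuine difference is in the upper bound. The paper picks \emph{any} job $j$ with $s_j<s_2$ (such a job exists because the shrinkage is below $(K+m(m+1))/2$) and asserts in one line that, since $\tilde\chi>r_{m+1}$, this $j$ affects all of $\J_{m+1}$; Corollary~\ref{cor:kappa_delta_balance} then finishes. You instead anchor at the last-completing job $j$, handle $s_j<s_2$ the same way, and when $s_j=s_2$ route around the obstruction with an energy-neutral swap against an earlier job $j'$ with $s_{j'}<s_2$. This is a legitimate alternative: you trade the paper's terse (and in the sketch not fully justified) assertion for a longer but more self-contained argument. Two small things to tidy up: (i)~the ``decrease by $\delta/2$'' does not match your swap parametrization --- if you shorten $x_{j'}$ and lengthen $x_j$ by the same amount $\epsilon$, the intermediate completions drop by $\epsilon$; (ii)~``$C_j$ is preserved'' relies on there being no idle time between $C_{j'}$ and $C_j$, which you correctly flag as the obstacle to formalize (it holds here because the $\J_0$ jobs are released at time~$0$, so one may always take $j'$ inside the busy period ending at $C_j$).
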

\begin{proof}[Proof sketch]
   By Observation~\ref{obs:reduction_WU_lowest_weight}, having~$\Tilde{\chi}(\sigma^*) = m(m+1) + K - Y$ is equivalent to finishing the processing of all jobs in~$\bigcup_{i=0}^m\J_i$ exactly at time
   \begin{align*}
    m(m+1) + K - \left(\frac{m^2}{2}+\frac{A}{2a_1^2}\right) = r_{m+1}\,.
    \end{align*}
    
    First, suppose that~$\Tilde{\chi}(\sigma^*) > r_{m+1}$. The shrinkage among the jobs in~$\bigcup_{i=0}^m\J_i$ is strictly less than~$\frac{m^2}{2}+\frac{A}{2a_1^2}< \frac{m(m+1)}{2}$. Hence,  there must be a job~$j$ in~$\bigcup_{i=0}^m\J_i$ that does not fully run at speed~$s_2$ and thus has~$\Delta_j(\sigma^*)= \Delta_1$. Furthermore, since~$\Tilde{\chi}(\sigma^*) > r_{m+1}$, job~$j$ affects all jobs in~$\J_{m+1}$ and we have 
   \[
    \kappa_j(\sigma^*) = w_0 + \Tilde{K}w_{m+1} \geq \frac{1}{32m^5} + \frac{99m^8a_1^3}{33m^5} = \frac{1}{32m^5} + 3m^3a_1^3 > \Delta_1 = \Delta_j(\sigma^*)\,.
   \]
    By Corollary~\ref{cor:kappa_delta_balance}, this contradicts that~$\sigma^*$ is optimal.

    Now suppose that~$\Tilde{\chi}(\sigma^*) < r_{m+1}$. Then there must be some job~$j$ in~$\bigcup_{i=0}^m\J_i$ that runs faster than~$s_1$. In this case~$j$ does affect any of the jobs in~$\J_{m+1}$, and we have
   \[
    \kappa^+_j(\sigma^*) \leq Kw_0 + mw_{1,0} + m^2w_i \leq 3m^3a_1^3 < \Delta_1 = \Delta^+_j(\sigma^*)\,.
   \]
    By Corollary~\ref{cor:kappa_delta_balance}, this contradicts that~$\sigma^*$ is optimal.
\end{proof}

 Let~$\sigma_1$ be a schedule where all jobs in~$\bigcup_{i=0}^m\J_i$ run at speed~$s_1=1$. Then~$\Tilde{\chi}(\sigma_1) = m(m+1) + K$. Compared to~$\sigma_1$, in an optimal schedule the total processing time of~$\bigcup_{i=0}^m\J_i$ shrinks by a total of~$Y$. We can consider the problem as optimally ``distributing'' the total shrinkage of~$Y$ among all jobs in~$\bigcup_{i=0}^m\J_i$. Note that by Observation~\ref{obs:reduction_WU_lowest_weight}, if~$\Tilde{\chi}(\sigma^*) = m(m+1) + K - Y$ then finding an optimal schedule for~$\J_{m+1}$ is trivial. Thus, we turn our focus to optimizing the flow of~$\bigcup_{i=0}^m\J_i$.

For a schedule~$\sigma$ and~$i\in\{1,\dots,m\}$, let~$y_i(\sigma)=m+1 - \chi_i(\sigma)$. In other words, compared to its processing time at speed~$s_1$, the processing time of~$\J_i$ shrinks by~$y_i(\sigma)$ in~$\sigma$. By Lemma~\ref{lemma:WU_shrinkage}, we must have that~$\sum_{i=1}^m y_i(\sigma) \leq Y$.

Now consider a problem where among all schedules with a shrinkage of~$Y$, we want to find one that minimizes the flow of~$\bigcup_{i=1}^m \J_i$. 
It is not trivial whether an optimal schedule~$\sigma^*$ for \flowenergyinstance\ is also optimal for this problem. This would require proving that for \flowenergyinstance, it is never optimal to shrink a job in~$\J_0$ or to give it priority over a job in~$\bigcup_{i=1}^m \J_i$. Hence, we take a different approach and show that for~$\sigma^*$, the flow of~$\bigcup_{i=0}^m \J_i$ is not much larger than the optimal flow of~$\bigcup_{i=1}^m \J_i$ for a shrinkage of~$Y$.

\begin{lemma}\label{lemma:WU_flowbound}
    For \flowenergyinstance , let~$\sigma^*$ be an optimal schedule and let~$\sigma^{**}$ be a schedule that, among all schedules~$\sigma$ with~$\sum_{i=1}^m y_i(\sigma)= Y$, minimizes the flow of~$\bigcup_{i=1}^m\J_i$. Then 
   \[
    \sum^m_{i=1} F_i(\sigma^{**}) \leq \Tilde{F}(\sigma^*) \leq \sum^m_{i=1} F_i(\sigma^{**}) + \frac{1}{16m}\,.
   \]
\end{lemma}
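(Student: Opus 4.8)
The plan is to establish the two inequalities separately, using the optimal schedule $\sigma^*$ and the auxiliary-optimal schedule $\sigma^{**}$ as mutual witnesses.

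First I would prove the lower bound $\sum_{i=1}^m F_i(\sigma^{**})\le \tilde F(\sigma^*)$. By Lemma~\ref{lemma:WU_shrinkage} we know $\tilde\chi(\sigma^*)=m(m+1)+K-Y$, which by the definition of $y_i$ (and since jobs in $\J_0$ are unit-size with $K$ of them, contributing total shrinkage $y_0(\sigma^*):=K-\chi_0(\sigma^*)$) means $\sum_{i=1}^m y_i(\sigma^*)+y_0(\sigma^*)=Y$, so in particular $\sum_{i=1}^m y_i(\sigma^*)\le Y$. If $\sum_{i=1}^m y_i(\sigma^*)=Y$ then $\sigma^*$ is itself a competitor in the minimization defining $\sigma^{**}$ restricted to $\bigcup_{i=1}^m\J_i$, so $\sum_{i=1}^mF_i(\sigma^{**})\le\sum_{i=1}^mF_i(\sigma^*)\le\tilde F(\sigma^*)$; if $\sum_{i=1}^m y_i(\sigma^*)<Y$, I would note that giving a job in $\bigcup_{i=1}^m\J_i$ more shrinkage only decreases its flow (by Observation~\ref{obs:affection}, affection is nonnegative), so one can redistribute the leftover shrinkage $Y-\sum_{i=1}^m y_i(\sigma^*)$ onto the jobs in $\bigcup_{i=1}^m\J_i$ to obtain a schedule with shrinkage exactly $Y$ and flow of $\bigcup_{i=1}^m\J_i$ no larger than $\sum_{i=1}^mF_i(\sigma^*)\le\tilde F(\sigma^*)$; this beats or ties $\sigma^{**}$ and gives the bound. (One must check that the redistribution does not push any job below $x_j^2$, but the total available shrinkage $Y<\tfrac{m(m+1)}{2}$ is small enough that it can always be absorbed, e.g.\ spread over the $\J_{1,0},\dots$ jobs which have ample slack.)

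Second, and this is the main obstacle, I would prove the upper bound $\tilde F(\sigma^*)\le\sum_{i=1}^mF_i(\sigma^{**})+\frac{1}{16m}$. Here I would build an explicit feasible schedule $\hat\sigma$ for $\flowenergyinstance$ from $\sigma^{**}$: schedule the jobs of $\bigcup_{i=1}^m\J_i$ exactly as in $\sigma^{**}$ (using total shrinkage $Y$ among them), schedule all of $\J_0$ at speed $s_1$, fit $\J_0$ and the $\J_{m+1}$ jobs around them at speed $s_1$, so that $\tilde\chi(\hat\sigma)=m(m+1)+K-Y=r_{m+1}$ and hence all of $\J_{m+1}$ is processed immediately on release contributing only its trivial (zero-"extra") flow, identical to what happens in $\sigma^*$ by Observation~\ref{obs:reduction_WU_lowest_weight}. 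Then optimality of $\sigma^*$ gives $\tilde F(\sigma^*)+F_{m+1}(\sigma^*)+E(\sigma^*)\le \tilde F(\hat\sigma)+F_{m+1}(\hat\sigma)+E(\hat\sigma)$; since the shrinkage is $Y$ in both and the $\J_{m+1}$-parts and energies match, this reduces to $\tilde F(\sigma^*)\le\tilde F(\hat\sigma)$. It therefore remains to bound $\tilde F(\hat\sigma)$, i.e.\ $F_0(\hat\sigma)+\sum_{i=1}^mF_i(\hat\sigma)$, against $\sum_{i=1}^mF_i(\sigma^{**})$. The flow contribution $F_0(\hat\sigma)$ of the $K$ unit-jobs of $\J_0$, each of weight $w_0=\frac{1}{32m^5}$, is at most $K\cdot w_0\cdot(\text{makespan})=O(m^2)\cdot\frac{1}{32m^5}\cdot O(m^2)=O(m^{-1})$, which needs to be pinned down to be at most $\frac{1}{16m}$ using $K=O(m^2)$ and makespan $O(m^2)$; and one must check that interleaving $\J_0$ at speed $s_1$ around the $\bigcup_{i=1}^m\J_i$ jobs does not delay the latter beyond their flows in $\sigma^{**}$ — since in $\sigma^{**}$ total processing of $\bigcup_{i=1}^m\J_i$ is $m(m+1)-Y$ and we have $K-Y$ extra units of room before $r_{m+1}$ for $\J_0$ exactly matching $|\J_0|=K$ shrunk by $0$... wait, $\chi_0(\hat\sigma)=K$ here — the arithmetic $m(m+1)-\sum_i y_i + K = m(m+1)-Y+K=r_{m+1}$ closes precisely because $y_0(\hat\sigma)=0$, so $\J_0$ exactly fills the idle slack of $\sigma^{**}$ without delaying $\bigcup_{i=1}^m\J_i$.

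The delicate point in the last step is that inserting $\J_0$-jobs could in principle force some $\J_i$-jobs later and increase $\sum_{i=1}^m F_i$; I would handle this by scheduling the $\J_0$-jobs only in the idle intervals that $\sigma^{**}$ leaves before $r_{m+1}$ (these have total length $(m(m+1)+K)-(m(m+1)-Y)-\cdots$; more simply, since $\sigma^{**}$ together with $\J_0$ at speed $1$ and the same completion order still finishes $\bigcup_{i=1}^m\J_i$ by time $r_{m+1}$), so that the completion times of $\J_1,\dots,\J_m$ jobs are unchanged and $\sum_{i=1}^m F_i(\hat\sigma)=\sum_{i=1}^mF_i(\sigma^{**})$, leaving only the $\frac{1}{16m}$ slack to absorb $F_0(\hat\sigma)$. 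I expect the bulk of the write-up to be this bookkeeping: verifying feasibility of $\hat\sigma$, that its energy equals $E(\sigma^*)$ (both use shrinkage exactly $Y$, hence the same number of speed-$s_2$ time units), and the explicit estimate $F_0(\hat\sigma)\le\frac{1}{16m}$.
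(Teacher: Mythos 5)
Your proposal is correct and takes essentially the same approach as the paper: establish $\sum_{i=1}^m F_i(\sigma^{**})\le\tilde F(\sigma^*)$ by noting $\sigma^*$ is (up to redistributing shrinkage) a competitor in the minimization defining $\sigma^{**}$, and then bound $\tilde F(\sigma^*)\le\tilde F(\sigma^{**})\le\sum_{i=1}^m F_i(\sigma^{**})+Kr_{m+1}w_0\le\sum_{i=1}^m F_i(\sigma^{**})+\tfrac{1}{16m}$ by normalizing $\sigma^{**}$ so that $\J_0$ runs at speed $s_1$ with lowest priority before $r_{m+1}$. You spell out the optimality-exchange step ($\tilde F(\sigma^*)+F_{m+1}(\sigma^*)+E(\sigma^*)\le\tilde F(\hat\sigma)+F_{m+1}(\hat\sigma)+E(\hat\sigma)$ with matching $F_{m+1}$ and $E$) more explicitly than the paper does, but the construction and estimates are the same.
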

\begin{proof}
    Since~$\sigma^{**}$ minimizes~$\sum^m_{i=1} F_i(\sigma^{**})$, we have~$\sum^m_{i=1} F_i(\sigma^{**})\leq \sum^m_{i=1} F_i(\sigma^*) \leq \Tilde{F}(\sigma^*)$.
    
    Note that~$\sigma^{**}$ will always prioritize jobs in~$\bigcup_{i=1}^m\J_i$ over jobs in~$\J_0$. It follows that w.l.o.g., we may assume all jobs in~$\J_0$ run at speed~$s_1$ and~$\sigma^{**}$ satisfies~$\Tilde{\chi}(\sigma^{**}) = m(m+1) + K - Y$. Thus, each job~$j\in\J_0$ has~$C_j(\sigma^{**})\leq r_{m+1}$, from which it follows that
   \[
    \Tilde{F}(\sigma^*) \leq \Tilde{F}(\sigma^{**}) \leq \sum^m_{i=1} F_i(\sigma^{**}) + Kr_{m+1}w_0\,.
   \]
    Since~$A < ma_1\leq m^2a_1^2$ we have that
   $
    K = \left\lceil\frac{m^2}{2}+\frac{A}{2a_1^2}\right\rceil \leq \left\lceil\frac{m^2}{2}+\frac{m^2}{2}\right\rceil = m^2
   $.
    W.l.o.g., we assume that~$m\geq 2$. Using that~$r_{m+1}= m(m+1)+K-Y$, we obtain
   \[
   \Tilde{F}(\sigma^*) \leq \sum^m_{i=1} F_i(\sigma^{**}) + m^2\frac{m(m+1)+1}{32m^5}\leq \sum^m_{i=1} F_i(\sigma^{**}) + \frac{1}{16m}\,.\qedhere
   \]
\end{proof}

We can now use Lemma~\ref{lemma:WU_flowbound} to relate the optimal flow \flowenergyinstance\  to the budget instance from the reduction to B-IDWU by Barcelo et al.~\cite{ComplexitySpeedScaling}. Consider an instance \budgetinstance\  of B-IDWU with the same two speeds:~$\Tilde{s}_1=1$ and~$\Tilde{s}_2=2$, job packages~$\Tilde{\J}_i=\J_i$ for~$i\in \{1,\dots,m\}$ (with identical weights ans release times), power consumptions~$\Tilde{P}_1=1$ and~$\Tilde{P}_2=4$, and a budget~$B=2Y$. Let~$\Tilde{\sigma}^B$ denote the schedule where all jobs~$(i,j)$ with~$j\geq 1$ run fully at speed~$s_2$, and for all~$i\in\{1,\dots,m\}$~$(i,0)$ runs at speed~$s_1$ and finishes after all other jobs in~$\J_i$. Let~$F^B = \sum_{i=1}^m F_i(\Tilde{\sigma}^B)$.

\begin{lemma}[Barcelo et al.~\protect{\cite[Theorem 38]{barcelo}}]  \label{lemma:barcelo_WU}
    Let~$\Tilde{\sigma}^*$ be an optimal schedule for \budgetinstance.
    \begin{itemize}
        \item If \subsetsuminstance\ is a YES-instance, then~$\sum_{i=1}^m F_i(\Tilde{\sigma}^*) \leq F^B -\frac{A}{2}-\frac{1}{8m}$.
        \item If \subsetsuminstance\ is a NO-instance, then~$\sum_{i=1}^m F_i(\Tilde{\sigma}^*) \geq F^B -\frac{A}{2}$.
    \end{itemize}
\end{lemma}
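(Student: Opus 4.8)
The plan is not to reprove this from first principles but to recognize it as~\cite[Theorem~38]{barcelo} applied to the instance constructed above. The first step is to verify that the B-IDWU instance $\budgetinstance$ just defined is, up to cosmetic normalization, exactly the instance produced by the \textsc{SubsetSum} reduction of~\cite[Section~3.2.2]{barcelo}: the packages $\Tilde{\J}_i=\J_i$ for $i\in\{1,\dots,m\}$ carry the same weights $w_{i,0}=a_i/m$ and $w_{i,j}=2ma_1^3$ and the same release times $r_{i,0}$ and $r_i=r_{i,0}+1-\alpha_i$ with $\alpha_i=\tfrac{a_i}{2a_1^2}$; the processor has the two speeds $\Tilde{s}_1=1$, $\Tilde{s}_2=2$ with powers $\Tilde{P}_1=1$, $\Tilde{P}_2=4$, so that $\Delta_1=2$; the budget is $B=2Y$ with $Y=\tfrac{m^2}{2}+\tfrac{A}{2a_1^2}$; and the benchmark schedule $\Tilde{\sigma}^B$ --- every heavy job $(i,j)$ with $j\ge 1$ run at $s_2$, and every light job $(i,0)$ run at $s_1$ and finishing last within its own package --- together with $F^B=\sum_{i=1}^m F_i(\Tilde{\sigma}^B)$ is their canonical reference. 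Once this identification is in place, the two displayed inequalities are literally the statement of~\cite[Theorem~38]{barcelo}.

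For orientation I would also recall, at a high level, why such a gap is present. In $\budgetinstance$ the heavy jobs $(i,j)$, $j\ge 1$, have weight $2ma_1^3$, which dwarfs every other weight and every flow term in the instance, so an affection-versus-shrinking-energy argument in the spirit of Corollary~\ref{cor:kappa_delta_balance} forces how the budget is spent on them; what remains of the budget can only be used to accelerate the light jobs $(i,0)$ of packages in some set $L$ with $\sum_{i\in L}a_i\le A$. Because consecutive packages are spaced $m+1$ apart in time while an individual package occupies strictly fewer than $m+1$ time units at these speeds, the acceleration decisions are essentially independent across packages, so an optimal schedule amounts to choosing such an $L$, each chosen light job lowering the total weighted flow by an amount proportional to $a_i$ up to lower-order terms. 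Maximizing over admissible $L$ then makes a flow reduction of essentially $\tfrac{A}{2}$ below $F^B$ achievable precisely when some subset of the $a_i$ sums to $A$ (the YES case), while it stays strictly short of $\tfrac{A}{2}$ otherwise (the NO case); carefully tracking the $O(1/m)$-scale corrections coming from the light weights $a_i/m$, the offsets $\alpha_i$, and preemption overhead turns this dichotomy into the quantitative $\tfrac{1}{8m}$-buffered separation asserted in the two cases.

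The only genuine work, were one to reprove the lemma, lies in exactly that error analysis: showing that partial (rather than all-or-nothing) acceleration of the $(i,0)$ jobs, and any cross-package rearrangement such as extra preemption or shifting of idle time, perturbs the achievable flow by at most $\tfrac{1}{8m}$, so that the discreteness of \textsc{SubsetSum} is faithfully reflected in a robust YES/NO flow gap. Since this is precisely what is established in~\cite{barcelo}, I would simply invoke their Theorem~38 after carrying out the instance check described above.
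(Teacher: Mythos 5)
The paper itself gives no proof for this lemma — it is stated purely as a citation of Theorem~38 of Barcelo et al., and your proposal handles it the same way: verify that the B-IDWU instance $\budgetinstance$ (packages $\Tilde{\J}_i=\J_i$, speeds $\Tilde{s}_1=1$, $\Tilde{s}_2=2$, powers $\Tilde{P}_1=1$, $\Tilde{P}_2=4$, budget $B=2Y$, benchmark $\Tilde{\sigma}^B$ and $F^B$) coincides with the one produced by their \textsc{SubsetSum} reduction, and then invoke their result directly. That matches the paper's treatment; the extra paragraph of intuition you add about the heavy-weight jobs fixing the budget allocation and the package spacing decoupling the choices is correct background but not part of what the paper proves here.
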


\begin{lemma}\label{lemma:WU_reduction}
    Let~$\sigma^*$ be an optimal schedule for \flowenergyinstance. Then~$\Tilde{F}(\sigma^*)< F^B -\frac{A}{2}$ if and only if~\subsetsuminstance\ is a YES-instance of \textsc{SubsetSum}.
\end{lemma}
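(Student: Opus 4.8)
The plan is to connect the optimal flow $\tilde F(\sigma^*)$ of $\flowenergyinstance$ with the optimal flow $\sum_{i=1}^m F_i(\tilde\sigma^*)$ of $\budgetinstance$ via the auxiliary schedule $\sigma^{**}$ from Lemma~\ref{lemma:WU_flowbound}, and then invoke the gap from Lemma~\ref{lemma:barcelo_WU}. First I would observe that $\sigma^{**}$, which among all schedules with total shrinkage exactly $Y$ minimizes the flow of $\bigcup_{i=1}^m\J_i$, is essentially the same optimization as finding an optimal schedule for $\budgetinstance$: in $\budgetinstance$ the budget $B=2Y$ together with $\tilde P_1=1$, $\tilde P_2=4$ means the $\Delta_1$ of $\budgetinstance$ equals $(\tilde P_2\tilde s_1-\tilde P_1\tilde s_2)/(\tilde s_2-\tilde s_1)=(4-2)/1=2$, so spending the whole budget on shrinkage reduces processing time by exactly $B/\Delta_1=Y$ (as in Observation~\ref{obs:reduction_shrinkage}). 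Hence $\sum_{i=1}^m F_i(\sigma^{**}) = \sum_{i=1}^m F_i(\tilde\sigma^*)$, where $\tilde\sigma^*$ is an optimal schedule for $\budgetinstance$. Combining this identity with Lemma~\ref{lemma:WU_flowbound} gives
\[
    \sum_{i=1}^m F_i(\tilde\sigma^*) \;\leq\; \tilde F(\sigma^*) \;\leq\; \sum_{i=1}^m F_i(\tilde\sigma^*) + \frac{1}{16m}\,.
\]

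Next I would feed in the two cases of Lemma~\ref{lemma:barcelo_WU}. In the YES-case, $\sum_{i=1}^m F_i(\tilde\sigma^*)\leq F^B-\frac A2-\frac1{8m}$, so
\[
    \tilde F(\sigma^*) \;\leq\; F^B-\frac A2-\frac1{8m}+\frac1{16m} \;=\; F^B-\frac A2-\frac1{16m} \;<\; F^B-\frac A2\,,
\]
which proves the ``if'' direction. In the NO-case, $\sum_{i=1}^m F_i(\tilde\sigma^*)\geq F^B-\frac A2$, and since $\tilde F(\sigma^*)\geq \sum_{i=1}^m F_i(\tilde\sigma^*)$ by the left inequality above, we get $\tilde F(\sigma^*)\geq F^B-\frac A2$, so the strict inequality $\tilde F(\sigma^*)<F^B-\frac A2$ fails; this is the contrapositive of the ``only if'' direction. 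Together these establish the claimed equivalence.

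The main obstacle I expect is rigorously justifying the identity $\sum_{i=1}^m F_i(\sigma^{**}) = \sum_{i=1}^m F_i(\tilde\sigma^*)$, i.e.\ that the ``shrinkage-$Y$, minimize flow of $\bigcup_{i=1}^m\J_i$'' problem and $\budgetinstance$ have the same optimal value. One direction is easy: any feasible schedule for $\budgetinstance$ uses at most budget $2Y$ hence shrinkage at most $Y$, and by convexity (Observation~\ref{obs:no_conv_comb}) it is never worse to use the full budget, so there is an optimal $\budgetinstance$-schedule with shrinkage exactly $Y$, which is a candidate for $\sigma^{**}$; conversely $\sigma^{**}$ restricted to $\bigcup_{i=1}^m\J_i$ has shrinkage $\le Y$ hence energy $\le 2Y = B$, so it is feasible for $\budgetinstance$. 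One should also note $\sigma^{**}$ may schedule $\J_0$-jobs, but as argued in the proof of Lemma~\ref{lemma:WU_flowbound} those can be taken at speed $s_1$ and do not consume shrinkage budget, so they do not interfere. Care is needed that the flow objective being compared is only over $\bigcup_{i=1}^m\J_i$ in both settings, which matches the statement of Lemma~\ref{lemma:WU_flowbound} and of Lemma~\ref{lemma:barcelo_WU}. Once this correspondence is nailed down, the rest is the short chain of inequalities above.
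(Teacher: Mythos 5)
Your proof is correct and takes essentially the same route as the paper's: identify the flow of $\sigma^{**}$ on $\bigcup_{i=1}^m\J_i$ with the optimal flow of $\budgetinstance$ (via $\tilde\Delta_1=2$ and $B=2Y$ forcing shrinkage exactly $Y$), sandwich $\tilde F(\sigma^*)$ using Lemma~\ref{lemma:WU_flowbound}, and invoke the gap from Lemma~\ref{lemma:barcelo_WU}. Your handling of the identification $\sum_i F_i(\sigma^{**})=\sum_i F_i(\tilde\sigma^*)$ is actually more explicit than the paper's one-line remark, and your constants are the clean ones (the paper's displayed chain has a minor typo, writing $+\tfrac{1}{8m}$ where $+\tfrac{1}{16m}$ is what Lemma~\ref{lemma:WU_flowbound} supplies), though both lead to the same conclusion.
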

\begin{proof}
    For \subsetsuminstance, we have~$\Tilde{\Delta}_1=2$, and thus a budget of~$2Y$ is equivalent to a total shrinkage of~$Y$ among the jobs in~$\bigcup_{i=1}^m\Tilde{\J}_i$. It follows that~$\sum_{i=1}^m y_i(\Tilde{\sigma}^*)=Y$. Since~$\Tilde{\J}_i=\J_i$ for all~$\{i=1,\dots,m\}$, by Lemma~\ref{lemma:WU_flowbound} we have that
   \[
    \sum^m_{i=1} F_i(\Tilde{\sigma}^*) \leq \Tilde{F}(\sigma^*) \leq \sum^m_{i=1} F_i(\Tilde{\sigma}^*) + \frac{1}{16m}\,.
   \]
    From Lemma~\ref{lemma:barcelo_WU}, it follows that if \subsetsuminstance\ is a YES-instance, then
   \[
    \Tilde{F}(\sigma^*) \leq \sum^m_{i=1} F_i(\Tilde{\sigma}^*) + \frac{1}{8m} \leq F^B -\frac{A}{2}-\frac{1}{16m}\,,
   \]
    and if \subsetsuminstance\ is a NO-instance, then
   \[
    \Tilde{F}(\sigma^*)\geq \sum_{i=1}^m F_i(\Tilde{\sigma}^*) \geq  F^B -\frac{A}{2}\,.\qedhere
   \]
\end{proof}

From Lemma~\ref{lemma:WU_flowbound}, it follows directly that FE-IDWU is $\mathsf{NP}$-hard, proving \textbf{Theorem~\ref{thm:WU}}.

\begin{corollary}
    FE-ICWU is $\mathsf{NP}$-hard.
\end{corollary}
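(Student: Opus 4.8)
The plan is to lift the hardness of FE-IDWU (Theorem~\ref{thm:WU}) to the continuous-speed model, exploiting the fact that the discrete-speed setting is essentially a special case of the continuous one once the continuous power function is chosen to be piecewise linear and convex (cf.\ Observation~\ref{obs:convex_combination_speeds} and Theorem~7 of~\cite{ComplexitySpeedScaling}). Concretely, I would start from the FE-IDWU instance $\flowenergyinstance$ built in Section~\ref{sec:FEWU_hardness} from a \textsc{SubsetSum} instance $\subsetsuminstance$, keep every job (release times, volumes, weights) unchanged, and replace the two discrete speeds $s_1=1$, $s_2=2$ with power consumptions $P_1,P_2$ by the continuous speed range $[0,\infty)$ together with the piecewise-linear power function $P$ that passes through $(0,0)$, $(s_1,P_1)$, $(s_2,P_2)$ and continues on $[s_2,\infty)$ as a line of some large slope $M$. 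Since $P_2-P_1>P_1-0>0$, the function $P$ is convex, and on $[s_1,s_2]$ it coincides with the linear interpolation of the two discrete speeds; hence every discrete schedule is a continuous schedule of exactly the same cost, which already gives $\mathrm{OPT}_{\mathrm C}(\flowenergyinstance)\le \mathrm{OPT}_{\mathrm D}(\flowenergyinstance)$.

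For the converse inequality I would proceed in three routine steps. First, by convexity of $P$ and Jensen's inequality, in any optimal continuous schedule we may assume each job runs at a single constant speed $v_j/x_j(\sigma)$ on its processing window $X_j$: this leaves the completion time and processing time of the job unchanged and does not increase energy. Second, rule out speeds exceeding $s_2$: if some job $j$ runs at constant speed $s>s_2$, slowing it down by an infinitesimal amount and postponing its window accordingly increases the total weighted flow by at most $\eps\,\kappa_j^+(\sigma)$ (the continuous analog of Observation~\ref{obs:affection}, which is in fact speed-agnostic) while decreasing the energy by $\eps\,(sP'(s)-P(s))$ (the continuous analog of Lemma~\ref{lemma:delta}); since $sP'(s)-P(s)$ is non-decreasing in $s$ for convex $P$ and equals $s_2 M-P_2$ just above $s_2$, choosing $M$ so that $s_2 M-P_2>\sum_j w_j\ge \kappa_j^+(\sigma)$ makes this a strict improvement, contradicting optimality (and $M$ stays polynomially bounded, so $\flowenergyinstance$ is still constructed in polynomial time). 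Third, a job running at a constant speed $s\in[s_1,s_2]$ can be converted to a discrete two-speed job by using $s_1$ and $s_2$ in proportions $\lambda,1-\lambda$ with $s=\lambda s_1+(1-\lambda)s_2$; by piecewise-linearity of $P$ this preserves energy, and since it happens entirely within $X_j$ it preserves every completion time. This turns an optimal continuous schedule into a discrete schedule of the same cost, giving $\mathrm{OPT}_{\mathrm D}(\flowenergyinstance)\le \mathrm{OPT}_{\mathrm C}(\flowenergyinstance)$.

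Combining the two bounds yields $\mathrm{OPT}_{\mathrm C}(\flowenergyinstance)=\mathrm{OPT}_{\mathrm D}(\flowenergyinstance)$, so by Lemma~\ref{lemma:WU_reduction} the optimal FE-ICWU value of $\flowenergyinstance$ is smaller than $F^B-\tfrac{A}{2}$ if and only if $\subsetsuminstance$ is a YES-instance of \textsc{SubsetSum}; since \textsc{SubsetSum} is $\mathsf{NP}$-hard and the construction is polynomial, FE-ICWU is $\mathsf{NP}$-hard. The only genuinely non-routine point is controlling the behaviour of optimal continuous schedules at speeds above $s_2$ -- everything else is bookkeeping -- and this is precisely where the freedom to fix the slope $M$ of the last linear piece of $P$, together with the continuous analogs of Observation~\ref{obs:affection}, Lemma~\ref{lemma:delta}, and Corollary~\ref{cor:kappa_delta_balance}, is used.
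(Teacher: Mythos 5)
Your proposal is correct and takes essentially the same route as the paper: the corollary is obtained by observing that the discrete-speed setting is a special case of the continuous-speed setting with a piecewise-linear convex power function, so NP-hardness of FE-IDWU (Theorem~\ref{thm:WU}) carries over to FE-ICWU. The paper states this corollary with no explicit proof, relying on the earlier remark (following Observation~\ref{obs:convex_combination_speeds}) and a citation to Theorem~7 of~\cite{ComplexitySpeedScaling}; you fill in the details that the paper delegates to that citation — the Jensen argument for constant per-job speeds, the choice of a sufficiently large terminal slope $M$ so that continuous optima never exceed $s_2$, and the conversion of intermediate speeds back to a two-speed mix — which is exactly what is needed to make the "special case" claim rigorous.
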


\subsection{Fixed Priority Ordering}\label{sec:hardness-priority}
Suppose that we are given a fixed priority ordering of the jobs~$1\prec\dots\prec n$. This ordering specifies that, at any point in time and among all jobs that are released and not yet completed, the one processed is the one that comes first in the ordering. Thus, together with a speed profile~$S(t)$, a priority ordering fully specifies a schedule. This leads to the following question: \emph{Given a priority ordering such that there is an optimal schedule that adheres to this ordering, can we efficiently find an optimal schedule?}

Recall that *-****-P denotes the problem variants with a fixed priority ordering. A priority ordering is called \emph{optimal} if there is an optimal schedule that adheres to the ordering. Below we show that B-IDUA-P, FE-IDUA-P, B-IDWU, and FE-IDWU-P are $\mathsf{NP}$-hard. For B-IDUA-P, FE-IDUA-P, and B-IDWU-P, we actually prove a stronger statement: the problems are $\mathsf{NP}$-hard, even when the priority ordering is known to be optimal.

\begin{restatable}{theorem}{prioBUA}\label{thm:prio_BUA}
    B-IDUA-P is $\mathsf{NP}$-hard, even if the given ordering is known to be optimal.
\end{restatable}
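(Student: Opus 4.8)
The plan is to revisit the reduction of Barcelo et al.~\cite{ComplexitySpeedScaling} that establishes the $\mathsf{NP}$-hardness of (two-speed) B-IDUA, to observe that its instances already come equipped with a \emph{canonical} job order that is forced upon every optimal schedule, and then simply hand this order over as the priority ordering. Concretely, in their construction the jobs are released at well-separated time points: the gap between two consecutive release times is at least the volume of the earlier job, and the budget satisfies $B > P_1 V$ (cf.~\eqref{eq:budget_inbetween}), so there is always enough energy to avoid idling. Under these two conditions, if the jobs are indexed $1,\dots,n$ in order of release time, then in any schedule that never idles while a job is available and processes jobs in release-time order, job $j$ starts at $r_j$ and completes no later than $r_j + v_j \le r_{j+1}$, no matter how the budget is distributed among the jobs. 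Hence the completion-time order coincides with the index order in such a schedule, \emph{independently} of whether the underlying \textsc{SubsetSum} instance is a YES- or a NO-instance.

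First I would make this precise with a standard exchange argument: given any feasible schedule $\sigma$ for a B-IDUA instance of the above form, one can transform it, without increasing either the total flow or the energy, into a schedule $\sigma'$ that processes jobs in release-time (hence index) order — the familiar SRPT-type swap applied speed-block by speed-block, using that redistributing equal-speed work among simultaneously available jobs changes completion times only in a Pareto-improving way, with the effect on flow made explicit by Observation~\ref{obs:affection}, and with total energy preserved since the time spent at each speed is unchanged. In particular $\sigma'$ adheres to the priority ordering $1 \prec \dots \prec n$, so this ordering is \emph{optimal} in the sense of Section~\ref{sec:hardness-priority}, and it is trivially polynomial-time computable from the instance (it does not depend on the \textsc{SubsetSum} answer at all). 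Next I would note that the optimum of the B-IDUA-P instance obtained by pairing the B-IDUA instance with this ordering equals the optimum of the original B-IDUA instance: the inequality ``$\le$'' is immediate because every priority-respecting schedule is B-IDUA-feasible, and ``$\ge$'' is exactly the exchange argument applied to an optimal B-IDUA schedule. Composing with Barcelo et al.'s reduction then gives a polynomial-time reduction from \textsc{SubsetSum} to B-IDUA-P whose image consists entirely of instances whose supplied ordering is genuinely optimal, which proves the theorem.

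The main obstacle is the verification step: one must confirm that the concrete instances produced by Barcelo et al.\ do satisfy the release-time separation property — or, if not literally, that they can be rescaled to satisfy it without changing the YES/NO behaviour — and that the exchange argument goes through cleanly in the two-speed discrete setting, in particular that swapping equal-speed work between two available jobs never pushes a job past the next release time. A secondary point is to check that fixing the priority ordering neither enlarges the feasible set nor changes the optimum; since every optimal schedule for these instances can already be taken to respect the order, constraining to it is without loss, so the promise ``the ordering is optimal'' is automatically guaranteed by the construction. Everything else is routine, and the continuous-speed analogue would follow, as elsewhere in the paper, by simulating the two discrete speeds with a piecewise-linear power function.
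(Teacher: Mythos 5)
Your proposal rests on the assertion that in Barcelo et al.'s B-IDUA construction the release times are so well separated that each job completes before the next one is released, so that the priority ordering can simply be taken to be the release-time (index) order. This is false for the instances the reduction actually produces, and the failure is structural rather than a matter of rescaling. Barcelo et al.\ build one job package $\J_i=\{(i,1),(i,2)\}$ per SubsetSum element $a_i$: job $(i,1)$ is released at time $0$ (after shifting) with volume $a_i$, while job $(i,2)$ is released at time $a_i/2$ with tiny volume $2\delta a_i$. The gap between these two release times is $a_i/2$, which is strictly \emph{less} than the volume $a_i$ of the earlier job, so the two jobs of a package overlap whenever $(i,1)$ is not sped up. The entire SubsetSum gadget hinges on this overlap: depending on how much of the energy budget is spent on package $i$, job $(i,1)$ may or may not finish before $r_{i,2}$, and this binary choice is precisely what encodes whether $a_i$ is selected into the subset (Observation~\ref{obs:BUA_order}). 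If one rescaled the instance so that your separation property held literally, the interaction would disappear and the reduction would collapse.

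As a consequence, the release-time ordering $(i,1)\prec(i,2)$ you would hand to the priority-constrained problem is \emph{not} an optimal priority ordering. Whenever package $i$ receives only around the base-schedule amount of energy, the flow-optimal move is to preempt $(i,1)$ at $r_{i,2}$ and run the much shorter job $(i,2)$ first; forcing $(i,1)\prec(i,2)$ blocks this and strictly worsens the objective. The ordering the paper actually fixes is the reverse within each package, $(i,2)\prec(i,1)$, i.e.\ an SRPT-type order, which does \emph{not} coincide with release-time order because sizes are arbitrary in B-IDUA (the two coincide only for unit-size jobs). Moreover, showing even this reverse ordering is always optimal is not a routine pairwise swap: the proof of Lemma~\ref{lemma:prio_BUA_opt} has to rule out the regime $E_i(\sigma^*)\in(2a_i,2a_i+2\delta a_i)$, where $(i,1)$ genuinely finishes before $(i,2)$ while both are available, by exploiting the piecewise slopes of the flow--energy trade-off inside a package (Observation~\ref{obs:BUA_increase}) together with the budget constraint. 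Your sketch does not touch this regime, and with the wrong ordering it cannot be repaired by the ``verification step'' you defer.
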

\begin{proof}[Proof sketch]
    The result follows from extending the analysis of the reduction from \textsc{SubsetSum} to B-IDUA by Barcelo et al.~\cite{barcelo}. Further details can be found in Appendix~\ref{app:prio_BUA}.
\end{proof}

\begin{corollary}\label{cor:opt_prio_FEUA}
    FE-IDUA-P is $\mathsf{NP}$-hard, even if the given ordering is known to be optimal.
\end{corollary}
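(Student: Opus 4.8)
The plan is to re-run the reduction behind Theorem~\ref{thm:UA} essentially verbatim, carrying a priority ordering along and invoking Theorem~\ref{thm:prio_BUA} in place of the unconstrained $\mathsf{NP}$-hardness of B-IDUA. Starting from an instance $(\budgetinstance,\pi)$ of B-IDUA-P in which $\pi$ is promised to be optimal --- we may assume $\pi$ is the ordering $1\prec\dots\prec n$ --- I would construct the FE-IDUA instance $\flowenergyinstance$ exactly as in Section~\ref{sec:hardness_FEUA} (note this preserves the two-speed structure that Theorem~\ref{thm:prio_BUA} relies on), and equip it with the priority ordering $\tilde\pi$ given by $1\prec\dots\prec n\prec n+1\prec n+2\prec\dots\prec 2n+1$, i.e.\ the original order on the copied jobs, then the large job, then the small late jobs.

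The first thing to establish is that $\tilde\pi$ is an optimal ordering for $\flowenergyinstance$. For this I would take an optimal schedule $\sigma^\pi$ for $\budgetinstance$ that adheres to $\pi$ (which exists since $\pi$ is optimal) and lift it to a schedule $\hat\sigma$ for $\flowenergyinstance$: run job $n+1$ at speed $\tilde s_1$ during every moment of $[0,\tilde r_{n+2})$ at which $\sigma^\pi$ processes no job of $\{1,\dots,n\}$, and run jobs $n+2,\dots,2n+1$ consecutively at $\tilde s_1$ from $\tilde r_{n+2}$ on. By Observation~\ref{obs:reduction_shrinkage}, $\sigma^\pi$ spends total time exactly $V-Y$ on jobs $1,\dots,n$ and completes all of them by $C_{\max}(\sigma_1)<\tilde r_{n+2}$, so the amount of time in $[0,\tilde r_{n+2})$ not used by $\{1,\dots,n\}$ equals $\tilde r_{n+2}-(V-Y)=\tilde v_{n+1}$; hence $\hat\sigma$ is feasible, job $n+1$ completes exactly at $\tilde r_{n+2}$, and $\hat\sigma$ adheres to $\tilde\pi$ by construction. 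It remains to argue $\hat\sigma$ is globally optimal for $\flowenergyinstance$. Here I would lean on the structural lemmas of Section~\ref{sec:hardness_FEUA}: Lemmas~\ref{lem:slow-jobs} and~\ref{lemma:reduction_Y_on_old} force any optimal FE schedule to run jobs $n+1,\dots,2n+1$ at $\tilde s_1$ and to spend total time exactly $V-Y$ on jobs $1,\dots,n$; this pins the energy to a fixed value and, together with work-conservation (no idle time before $\tilde r_{n+2}$), pins the flow of $n+1,\dots,2n+1$ to a fixed value, while by Lemma~\ref{lemma:fe-reduction} the remaining flow of jobs $1,\dots,n$ equals the B-IDUA optimum on those jobs. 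Writing $D$ for the resulting fixed flow-plus-energy contribution of everything except the flow of $\{1,\dots,n\}$, we get $\mathrm{OPT}(\flowenergyinstance)=\mathrm{OPT}(\budgetinstance)+D$, and $\hat\sigma$ attains it; hence $\tilde\pi$ is optimal.

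With this in hand the reduction closes quickly. Because $\tilde\pi$ is optimal, every schedule that is optimal among those adhering to $\tilde\pi$ is in fact globally optimal for $\flowenergyinstance$, so by Lemma~\ref{lemma:fe-reduction} its restriction to $\{1,\dots,n\}$ is an optimal B-IDUA schedule, and since it adheres to $\tilde\pi$ that restriction adheres to $\pi$ and is therefore optimal for $(\budgetinstance,\pi)$. Moreover $D$ --- a fixed flow term for the added jobs plus the fixed energy $\tilde P_1\bigl(V-Y-\tfrac{Y}{s_2-1}+\tilde v_{n+1}+n(Y+1)\bigr)+\tilde P_2\cdot\tfrac{Y}{s_2-1}$ --- is computable in polynomial time from $\budgetinstance$, so $\mathrm{OPT}(\budgetinstance,\pi)\le T$ if and only if $\mathrm{OPT}(\flowenergyinstance,\tilde\pi)\le T+D$; this is a polynomial-time reduction from B-IDUA-P with a known-optimal ordering, which is $\mathsf{NP}$-hard by Theorem~\ref{thm:prio_BUA}. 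I expect the only genuinely delicate step to be the optimality of $\tilde\pi$ in the second paragraph --- specifically, using the Section~\ref{sec:hardness_FEUA} lemmas to certify that the lift $\hat\sigma$ of an optimal $(\budgetinstance,\pi)$-schedule is not merely feasible and $\tilde\pi$-adhering but globally optimal for $\flowenergyinstance$; the volume bookkeeping and the polynomial-time computation of $D$ are routine.
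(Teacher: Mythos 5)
Your proof is correct and follows essentially the same route as the paper: apply the reduction from Section~\ref{sec:hardness_FEUA} to a B-IDUA-P instance, extend the given ordering by appending $n+1\prec\dots\prec 2n+1$, argue via Lemmas~\ref{obs:reduction_order}, \ref{lem:slow-jobs}, \ref{lemma:reduction_Y_on_old}, and \ref{lemma:fe-reduction} that this extended ordering is optimal, and invoke Theorem~\ref{thm:prio_BUA}. You flesh out the details the paper leaves implicit (the explicit lift $\hat\sigma$, the volume bookkeeping showing $C_{n+1}=\tilde r_{n+2}$, and the computable offset $D$), but the underlying argument is the same.
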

\begin{proof}
    Consider an instance \budgetinstance of B-IDUA-P with priority ordering~$1\prec\dots\prec n$, and assume this ordering is known to be optimal. Let \flowenergyinstance\  be the FE-IDUA instance obtained by applying the reduction from Section~\ref{sec:hardness_FEUA} to \budgetinstance. Let~$n+1,\dots,2n+1$ be the newly added jobs, as described in the reduction. By Lemma~\ref{obs:reduction_order} and the proof of Lemma~\ref{lemma:reduction_Y_on_old}, any optimal schedule for \flowenergyinstance adheres to the priority ordering~$1\prec\dots\prec n\prec n+1\prec\dots\prec 2n+1$.
    From Lemma~\ref{lemma:fe-reduction} and Theorem~\ref{thm:prio_BUA}, it follows that FE-IDUA-P is $\mathsf{NP}$-hard, even when the priority ordering is known to be optimal.
\end{proof}

\begin{theorem}\label{thm:BWU_prio}
    B-IDWU-P is $\mathsf{NP}$-hard, even if the given ordering is known to be optimal.
\end{theorem}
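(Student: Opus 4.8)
The plan is to mirror the structure of Corollary~\ref{cor:opt_prio_FEUA}, but now for the weighted unit-size setting, by combining the reduction from \textsc{SubsetSum} to B-IDWU of Barcelo et al.~\cite{barcelo} with an analysis of the induced priority ordering. First I would recall that the reduction to B-IDWU produces, for each element $a_i$, a job package $\J_i$ consisting of one low-weight ``long'' job $(i,0)$ (weight $\frac{a_i}{m}$) and $m$ heavy jobs $(i,j)$, $j\ge 1$ (weight $2ma_1^3$), with the heavy jobs of package $i$ released slightly before job $(i,0)$ completes under the slow-only schedule. The key structural fact is that, because the heavy jobs dominate the low-weight jobs in weight by a polynomial factor, in \emph{any} optimal schedule every heavy job $(i,j)$ with $j\ge 1$ must be prioritized over every job $(i',0)$; moreover within a package the SRPT-type argument forces $(i,0)$ to complete after the heavy jobs of $\J_i$. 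I would make this precise using the affection/shrinking-energy machinery exactly as in Observation~\ref{obs:reduction_WU_lowest_weight}: a job whose acceleration would only improve the flow of low-weight jobs has affection below $\Delta_1$, hence by Corollary~\ref{cor:kappa_delta_balance} cannot be sped up, which both pins down the speeds and, together with the release-time pattern, pins down the completion order.

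Next I would argue that this forces a \emph{fixed} priority ordering that does not depend on the \textsc{SubsetSum} answer: order the packages by index (this is forced by the staggered release times $r_{i,0}=r_{i-1,0}+m+1$ and the fact that all packages release before any later package's long job could matter), and within each package place the $m$ heavy jobs before the long job $(i,0)$. The precise ordering is $(1,1)\prec\dots\prec(1,m)\prec(1,0)\prec(2,1)\prec\dots\prec(m,0)$. I would show that some optimal schedule for \budgetinstance\ adheres to this ordering — this is where one needs the fixed-speed SRPT argument (jobs of $\J_i$ have equal unit size, so among available jobs the heavier ones go first, and once $(i,0)$ is the only remaining job it runs last) together with the observation that the heavy jobs of package $i$ all release simultaneously and have equal weight, so their internal order is irrelevant and can be fixed arbitrarily. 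Since this ordering is the same for YES- and NO-instances, and since Barcelo et al.'s gap result (Lemma~\ref{lemma:barcelo_WU}-style, i.e. the B-IDWU analogue) distinguishes the two cases by the optimal flow value, deciding B-IDWU-P with this given ordering decides \textsc{SubsetSum}; hence B-IDWU-P is $\mathsf{NP}$-hard and the ordering is certified optimal.

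The main obstacle I anticipate is making the ``forced priority ordering is independent of the instance answer'' claim fully rigorous. The subtlety — already flagged in the paper's discussion of the FE-IDWU reduction — is that a lightweight job could in principle be prioritized over a heavier job in a schedule that is optimal but not of the canonical form, so I cannot simply assert SRPT; I need to show that \emph{every} optimal schedule can be transformed, without increasing cost, into one respecting the canonical ordering, using exchange arguments that are valid regardless of which speeds are used. Concretely, the delicate point is handling the low-weight ``long'' jobs $(i,0)$: their completion times (and hence the flow they contribute) interact with the budget-driven speed assignment, so I must verify that swapping to put heavy jobs first never violates feasibility (release times) and never increases weighted flow — this uses that heavy jobs released at $r_i$ are available whenever $(i,0)$ is, and that $(i,0)$'s large processing time relative to the unit heavy jobs means SRPT puts it last within the package. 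Once this exchange argument is in place, the rest is bookkeeping: plug the canonical ordering into the gap bound of the B-IDWU reduction to conclude.

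\begin{proof}[Proof sketch]
    The result follows by revisiting the reduction from \textsc{SubsetSum} to B-IDWU of Barcelo et al.~\cite{barcelo} and showing that the completion order of an optimal schedule is forced and independent of whether \subsetsuminstance\ is a YES- or NO-instance. As in Observation~\ref{obs:reduction_WU_lowest_weight}, the shrinking-energy machinery (Corollary~\ref{cor:kappa_delta_balance}) together with the staggered release times and the dominating weights of the heavy jobs $(i,j)$, $j\ge 1$, forces the priority ordering
   \[
    (1,1)\prec\dots\prec(1,m)\prec(1,0)\prec(2,1)\prec\dots\prec(m,1)\prec\dots\prec(m,m)\prec(m,0)\,,
   \]
    and a fixed-speed SRPT exchange argument shows that some optimal schedule adheres to it. Since this ordering does not depend on the \textsc{SubsetSum} answer, and the optimal flow under B-IDWU distinguishes YES- from NO-instances by a gap (as in Lemma~\ref{lemma:barcelo_WU}), B-IDWU-P is $\mathsf{NP}$-hard with the given ordering known to be optimal. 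Full details are in Appendix~\ref{app:prio_BUA}.
\end{proof}
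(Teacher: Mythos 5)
Your overall plan matches the paper's proof: take Barcelo et al.'s reduction from \textsc{SubsetSum} to B-IDWU, argue that the priority ordering $(i,1)\prec\dots\prec(i,m)\prec(i,0)$ within each package (with packages ordered by index, as forced by the staggered release times) is always optimal independently of the YES/NO answer, and invoke the gap result to finish. The paper dispatches the within-package ordering claim in one stroke by citing Barcelo's Observation~36 rather than re-deriving it, so the published proof is considerably shorter.

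However, your proposed re-derivation of the ordering contains a substantive error. In B-IDWU \emph{all jobs have unit size}; job $(i,0)$ is not ``long'' and does not have ``large processing time relative to the unit heavy jobs.'' You appear to have conflated this with the FE-IDUA reduction, which does introduce a single large-volume job. In the B-IDWU package, $(i,0)$ has exactly the same size as every other job; what distinguishes it is only its small weight $w_{i,0}=a_i/m$ versus $w_{i,j}=2ma_1^3$ for $j\ge 1$, together with its slightly earlier release time. Consequently any SRPT-based argument is vacuous (all remaining processing times coincide at a given speed), and the exchange argument that pushes $(i,0)$ to the back of the package must rest entirely on weight: when a heavy job and $(i,0)$ are both available, swapping so the heavy one completes earlier can only decrease weighted flow, at no change to energy since sizes are equal. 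With that fix the rest of your plan goes through. A minor bookkeeping point: your sketch defers details to Appendix~\ref{app:prio_BUA}, but that appendix proves B-IDUA-P; the B-IDWU-P proof lives in the main text.
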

\begin{proof}
    For the reduction from \textsc{SubsetSum} to B-IDWU~\cite[Observation 36]{barcelo} (also described in the proof of Lemma~\ref{lemma:WU_reduction}), Barcelo shows that there is always an optimal schedule that satisfies any priority ordering with~$(i,1)\prec\dots\prec(i,m)\prec(i,0)$, for all~$i\in\{1,\dots,m\}$. We can extend this to a full priority ordering by setting~$(i,j)\prec(i',j')$ for~$i<i'$. The job packages are released far enough apart such that in any (reasonable) schedule, all jobs in a package finish before the first job in the next package is released. Hence, this full priority ordering is always optimal.
\end{proof}

\begin{theorem}
    FE-IDWU-P is $\mathsf{NP}$-hard.
\end{theorem}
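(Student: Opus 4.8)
The plan is to reuse the instance $\flowenergyinstance$ built in Section~\ref{sec:FEWU_hardness} from a \textsc{SubsetSum} instance $\subsetsuminstance$, paired with a fixed priority ordering, and to check that the chain of Lemmas~\ref{lemma:WU_shrinkage}--\ref{lemma:WU_reduction} survives the extra constraint. Concretely, I would take the priority ordering $\prec$ that inside each package $\J_i$ with $i\in\{1,\dots,m\}$ orders $(i,1)\prec\dots\prec(i,m)\prec(i,0)$, orders those packages as $\J_1\prec\dots\prec\J_m$, and then places all jobs of $\J_0$, and finally all jobs of $\J_{m+1}$, last; the order inside $\J_0$ and inside $\J_{m+1}$ is immaterial, so $\prec$ admits an encoding as succinct as $\flowenergyinstance$ itself. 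Since \textsc{SubsetSum} is $\mathsf{NP}$-hard, it suffices to show that Lemma~\ref{lemma:WU_reduction} still holds when $\sigma^*$ ranges over the optimal \emph{$\prec$-adhering} schedules, because then any FE-IDWU-P solver applied to $(\flowenergyinstance,\prec)$ decides whether $\subsetsuminstance$ is a YES-instance by testing $\Tilde{F}(\sigma^*)<F^B-\frac{A}{2}$. I stress that, unlike the three preceding statements, I do not expect to be able to certify $\prec$ as optimal for unconstrained FE-IDWU — the obstacle flagged before Lemma~\ref{lemma:WU_flowbound} (that it is unclear whether shrinking or prioritizing a job of $\J_0$ is ever beneficial) is exactly what blocks this — which is why the theorem is not stated in the ``optimal ordering'' form.

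The first step is to confirm that the local-optimality tool, Corollary~\ref{cor:kappa_delta_balance}, is still available for $\prec$-optimal schedules. Given such a $\sigma^*$ and a job $j$, I would perturb $x_j(\sigma^*)$ by a small $\varepsilon$ by adjusting $S(\cdot)$ on $X_j(\sigma^*)$ and re-deriving the rest of the schedule from $\prec$; for $\varepsilon$ small enough this keeps both the completion order and the preemption pattern, so Observation~\ref{obs:affection} (applied to the perturbed schedule, which then has the same completion order as $\sigma^*$) and Lemma~\ref{lemma:delta} (which is ordering-independent) still describe the first-order changes of flow and of energy under the perturbation, yielding the two balance inequalities. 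With Corollary~\ref{cor:kappa_delta_balance} in hand, Observation~\ref{obs:reduction_WU_lowest_weight} and Lemma~\ref{lemma:WU_shrinkage} carry over verbatim, so every $\prec$-optimal $\sigma^*$ runs all of $\J_{m+1}$ at speed $s_1$ after $\bigcup_{i=0}^m\J_i$ and satisfies $\Tilde{\chi}(\sigma^*)=m(m+1)+K-Y$; the analogous affection bound ($\kappa_j(\sigma^*)\le Kw_0\ll\Delta_1$ for $j\in\J_0$) additionally forces every job of $\J_0$ to run at $s_1$.

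The point of choosing $\prec$ this way is that it turns the delicate observation in the proof of Lemma~\ref{lemma:WU_flowbound} into an automatic consequence: every job of $\J_0$ has lower $\prec$-priority than every job of $\bigcup_{i=1}^m\J_i$, so the jobs of $\J_0$ occupy only idle slots left by $\bigcup_{i=1}^m\J_i$; hence minimizing the flow of $\bigcup_{i=1}^m\J_i$ over $\prec$-adhering schedules with total shrinkage $Y$ is precisely the B-IDWU-P instance of Theorem~\ref{thm:BWU_prio} with budget $2Y$, and, from $\Tilde{\chi}(\sigma^*)=r_{m+1}$, all jobs of $\J_0$ complete by $r_{m+1}$, so $F_0(\sigma^*)\le Kr_{m+1}w_0\le\frac{1}{16m}$ as before. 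Since Theorem~\ref{thm:BWU_prio} also establishes that the within-package ordering $(i,1)\prec\dots\prec(i,m)\prec(i,0)$ is without loss of generality for B-IDWU, this gives $\sum_{i=1}^m F_i(\Tilde{\sigma}^*)\le\Tilde{F}(\sigma^*)\le\sum_{i=1}^m F_i(\Tilde{\sigma}^*)+\frac{1}{16m}$ exactly as in Lemma~\ref{lemma:WU_flowbound}, where $\Tilde{\sigma}^*$ is an optimal schedule of that B-IDWU instance. Substituting into Lemma~\ref{lemma:barcelo_WU} exactly as in the proof of Lemma~\ref{lemma:WU_reduction} then yields $\Tilde{F}(\sigma^*)<F^B-\frac{A}{2}$ if and only if $\subsetsuminstance$ is a YES-instance, which completes the reduction.

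The step I expect to be the main obstacle is the transfer of Corollary~\ref{cor:kappa_delta_balance} to the $\prec$-constrained setting: one has to argue carefully that a single-job speed perturbation, with the remainder of the schedule forced by $\prec$, cannot trigger a discontinuous change — a newly created or destroyed preemption, or a swap in the completion order — for small $\varepsilon$, so that flow and energy really do vary linearly as Observation~\ref{obs:affection} and Lemma~\ref{lemma:delta} predict. A secondary point that needs care is the idle-time accounting behind $F_0(\sigma^*)\le\frac{1}{16m}$: one should verify that, under $\prec$ and with $\Tilde{\chi}(\sigma^*)=r_{m+1}$, the idle time before $r_{m+1}$ is exactly enough to process all $K$ unit-size jobs of $\J_0$ at speed $s_1$.
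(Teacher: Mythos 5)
Your plan takes essentially the same approach as the paper: the same instance $\flowenergyinstance$, the same priority ordering (within-package $(i,1)\prec\dots\prec(i,m)\prec(i,0)$, then $\bigcup_{i=1}^m\J_i \prec \J_0 \prec \J_{m+1}$), and the same reduction chain via Observation~\ref{obs:reduction_WU_lowest_weight}, Lemma~\ref{lemma:WU_shrinkage}, Lemma~\ref{lemma:WU_flowbound}, and Lemma~\ref{lemma:barcelo_WU}, including the correct observation that the priority constraint makes the delicate ``$\J_0$ is deprioritized'' step automatic. You are somewhat more explicit than the paper about verifying that Corollary~\ref{cor:kappa_delta_balance} extends to $\prec$-constrained optima and in noting that this theorem is deliberately not stated in the ``known-optimal ordering'' form, but these are elaborations of the same argument rather than a different route.
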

\begin{proof}
    Consider an FE-IDWU-P instance \flowenergyinstance\ where we have the instance of FE-IDWU as described in Section~\ref{sec:FEWU_hardness} with a priority ordering that satisfies the following. For~$i\in\{1,\dots,m\}$, we have~$(i,1)\prec\dots\prec(i,m)\prec(i,0)$, and for any jobs~$j\in \bigcup_{i=1}^m$,~$j'\in\J_0$, and~$j''\in\J_{m+1}$, we have~$j\prec j'\prec j''$.
    
    Observation~\ref{obs:reduction_WU_lowest_weight} shows that adhering to~$j\prec j''$ and~$j'\prec j''$ for~$j\in \bigcup_{i=1}^m$,~$j'\in\J_0$, and~$j''\in\J_{m+1}$ is trivial. Since the jobs in~$\bigcup_{i=1}^m\J_i$ are prioritized over jobs in~$\J_0$, an optimal schedule for \flowenergyinstance\ must be as~$\sigma^{**}$ from Lemma~\ref{lemma:WU_flowbound}. Hence, Lemma~\ref{lemma:WU_flowbound} still holds, and by the same argument as for Theorem~\ref{thm:BWU_prio}, FE-IDWU with a fixed priority ordering is also $\mathsf{NP}$-hard.
\end{proof}

\section{An LP for Fixed Completion Ordering}
\label{sec:lp}

In this section, we investigate different variants of the problem with fixed completion-time ordering. We are given an ordering~$1\prec \dots \prec n$, of the jobs and require that~$\hat{C}_1(\sigma)\leq\dots\leq \hat{C}_n(\sigma)$, where the \emph{extended completion time}~$\hat{C}_j(\sigma)$ of job~$j$ under schedule~$\sigma$ is defined by 
\begin{align*}
    \hat{C}_1(\sigma) &:= C_1(\sigma)\\
    \hat{C}_j(\sigma) &:= \max\{C_j(\sigma), \hat{C}_{j-1}(\sigma)\} \,,\quad \forall j\in \{2,\dots,n\}\,,
\end{align*}
where~$C_j(\sigma)$ is given by the maximum~$t$ in~$X_j(\sigma)$ (i.e., the ``classic'' definition of a completion time). 
The extended flow time of each job~$j$ is given as~$\hat{F}_j(\sigma) = \hat{C}_j(\sigma)-r_j$. An intuition behind the extended completion times is although the actual processing of a job~$j$ might finish at~$C_j(\sigma)$,~$j$ can only be returned to the user once~$1,\dots,j-1$ are returned.\footnote{One can only serve desert after serving the main course, even if the desert was prepared earlier.} In this case it is natural that the flow time also accounts for the time interval~$[C_j(\sigma), \hat{C}_j(\sigma))$.

We define~$Q_j := \{j':j'\preceq j\}$,~$R_{t}^- := \{j: r_{j}\leq t\}$ and~$R_{t}^+ := \{j: r_{j}\geq t\}$. Intuitively, for time~$t$,~$R_t^+\cap Q_j$ gives a subset of the jobs that need to be fully processed before we can complete~$j$. We will show that the following LP computes the optimal schedule for FE-ID**-C with given completion ordering~$1\prec \dots \prec n$ \footnote{Since $\sum_{j} w_j \hat{F}_j + \sum_{i,j}\lambda_{j}^i E_{j}^i = \sum_{j} w_j \hat{C}_j + \sum_{i,j}\lambda_{j}^i E_{j}^i - \sum_{j} w_j {r}_j$, and $\sum_{j} w_j {r}_j$ is given and thus constant, optimizing over either objective is equivalent.}.
\begin{lpformulation}
\lpobj{min}{\sum_{j} w_j \hat{C}_j + \sum_{i,j}\lambda_{j}^i E_{j}^i}
\lpeq[lp:ordering]{\hat{C}_j\geq \hat{C}_{j-1}}
    {j>1}
\lpeq[lp:comp_times]{\hat{C}_j\geq r_{j'} + \sum_{j''\in Q_j\cap R_{r_{j'}}^+}  \sum_{i} \lambda_{j''}^i x_{j''}^i}{j,\forall j'\in Q_j\cap R_{r_j}^-}
\lpeq[lp:conv_sum]{\sum_{i} \lambda_{j}^i = 1}{j}
\lpeq[lp:conv_nonneg]{\lambda_{j}^i\geq 0}{i,j\,.}
\end{lpformulation}

We denote a feasible solution~$(\hat{C}_1,\dots, \hat{C}_n, \lambda_1^1,\dots, \lambda_n^k)$ as~$(\hat{C},\lambda)$.
Recall that by Observation~\ref{obs:proc_times_conv_comb}, for any schedule~$\sigma$ and job~$j$,~$x_j(\sigma)$ is a convex combination of the~$x_j^i$.

\begin{theorem}
    The  LP computes in polynomial time an optimal schedule for FE-ID**-C.
    \label{thm:lp}
\end{theorem}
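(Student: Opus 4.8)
The plan is to prove Theorem~\ref{thm:lp} in two directions, establishing that the LP is an exact relaxation of FE-ID**-C. First I would show that any feasible schedule~$\sigma$ adhering to the completion ordering~$1\prec\dots\prec n$ yields a feasible LP solution of the same objective value: set~$\lambda_j^i := y_j^i(\sigma)/x_j^i$ as in Observation~\ref{obs:proc_times_conv_comb} (so~$\sum_i\lambda_j^i x_j^i = x_j(\sigma)$ and~$\sum_i\lambda_j^i E_j^i = E_j(\sigma)$), and take the~$\hat{C}_j$ to be the extended completion times of~$\sigma$. Constraint~\eqref{lp:ordering} holds by definition of~$\hat C_j$, constraints~\eqref{lp:conv_sum}–\eqref{lp:conv_nonneg} are immediate, and the objective matches because~$\sum_j w_j\hat C_j + \sum_{i,j}\lambda_j^i E_j^i = F(\sigma) + E(\sigma) + \sum_j w_j r_j$ (using the footnote identity and the extended-flow convention). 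The only real content here is constraint~\eqref{lp:comp_times}: for each~$j$ and each~$j'\preceq j$ with~$r_{j'}\le r_j$, all jobs~$j''\in Q_j$ with~$r_{j''}\ge r_{j'}$ must be fully processed in the interval~$[r_{j'},\hat C_j]$ because they precede~$j$ in the completion order and cannot start before~$r_{j'}$; hence~$\hat C_j - r_{j'} \ge \sum_{j''\in Q_j\cap R_{r_{j'}}^+} x_{j''}(\sigma) = \sum_{j''}\sum_i \lambda_{j''}^i x_{j''}^i$. One should double-check the edge case where~$r_{j''} = r_{j'}$ for some~$j''$ (all such jobs still lie in the interval) and that restricting the release times considered to~$\{r_{j'} : j'\in Q_j\cap R_{r_j}^-\}$ loses nothing, since the binding lower bound on~$\hat C_j$ is always attained at one of these release times.

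Second, and this is the main obstacle, I would show that from any feasible LP solution~$(\hat C,\lambda)$ one can construct a feasible schedule~$\sigma$ that adheres to the ordering, has~$x_j(\sigma) = \sum_i\lambda_j^i x_j^i$ and~$E_j(\sigma) = \sum_i\lambda_j^i E_j^i$, and whose extended completion times are at most the~$\hat C_j$ (so its objective is no larger). The natural construction is a greedy/list-scheduling rule: process jobs in the priority order~$1\prec\dots\prec n$, always running the lowest-indexed released-and-unfinished job, assigning to each job~$j$ a total processing duration of exactly~$p_j := \sum_i\lambda_j^i x_j^i$ (realized by Observation~\ref{obs:convex_combination_speeds}, using at most two consecutive given speeds so the energy comes out to~$\sum_i\lambda_j^i E_j^i$). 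The crux is a bound on the resulting actual completion times~$C_j$: I would argue by induction on~$j$ that~$C_j \le \hat C_j$. Consider the last idle-or-processing-a-higher-index-job instant before~$C_j$; since the schedule is work-conserving within the prefix~$Q_j$, there is a release time~$r_{j'}$ with~$j'\in Q_j$, ~$r_{j'}\le C_j$, such that the processor runs only jobs of~$Q_j$ throughout~$[r_{j'}, C_j]$, and every job of~$Q_j$ that is processed in that window was released at or after~$r_{j'}$ — i.e.\ lies in~$Q_j\cap R_{r_{j'}}^+$. Therefore~$C_j \le r_{j'} + \sum_{j''\in Q_j\cap R_{r_{j'}}^+} p_{j''}$, which by constraint~\eqref{lp:comp_times} (for this~$j$ and this~$j'$, noting~$r_{j'}\le r_j$ because~$j'$ must have been available by the time~$j$'s portion ran, or handling the~$r_{j'} > r_j$ case separately via the ordering constraint~\eqref{lp:ordering}) is at most~$\hat C_j$. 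Combining with~$C_j \le \hat C_{j-1} \le \hat C_j$ when the relevant busy interval starts before any such~$r_{j'}$ exceeding~$r_j$, one gets~$C_j\le\hat C_j$, hence~$\hat C_j(\sigma)\le\hat C_j$ and the schedule's objective does not exceed the LP value.

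Putting the two directions together yields that the LP optimum equals the optimal FE-ID**-C value, and the second direction is constructive, so it also recovers an optimal schedule; since the LP has~$O(nk)$ variables and~$O(n^2)$ constraints with polynomially bounded encoding, it is solvable in polynomial time, and the schedule-reconstruction step (greedy list scheduling plus the two-speed realization of each~$p_j$) is clearly polynomial. Finally I would remark, as promised in the paper, that the same argument adapts to B-ID**-C: replace the objective by~$\sum_j w_j\hat C_j$ and add the single budget constraint~$\sum_{i,j}\lambda_j^i E_j^i \le B$, with the feasibility-equivalence proof going through verbatim. The delicate point to get exactly right in the write-up is the identification, in the reconstruction direction, of the witnessing release time~$r_{j'}$ and the verification that it indeed belongs to the index set~$Q_j\cap R_{r_j}^-$ over which constraint~\eqref{lp:comp_times} ranges; careful bookkeeping of which jobs can appear in the final busy interval before~$C_j$ is where the proof must be most precise.
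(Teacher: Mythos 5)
Your proposal is correct and follows the same overall strategy as the paper: both directions of an exact LP relaxation, with the schedule\(\to\)LP direction (matching the paper's Lemma~\ref{lem:sched-lp}) and the LP\(\to\)schedule direction (matching the paper's Lemma~\ref{lem:lp-sched}) proved via the same list-scheduling construction and the same ``last idle-or-higher-index instant'' argument to identify the witnessing release time~$r_{j'}$.

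One genuine and worthwhile difference: in the LP\(\to\)schedule direction, the paper proves the \emph{equality}~$\hat C_j(\sigma)=\hat C_j$ for an \emph{optimal} LP solution, which forces a case analysis on whether constraint~\eqref{lp:comp_times} or constraint~\eqref{lp:ordering} is tight at~$j$. You instead prove only the inequality~$\hat C_j(\sigma)\le\hat C_j$, and this works for \emph{any feasible} LP solution. That inequality is all that the theorem needs (the first direction already pins the LP optimum from below by the schedule optimum), so your route avoids the tightness argument entirely and is cleaner. Two small points to tighten in a write-up: first, the case~$r_{j'}>r_j$ you mention cannot actually occur — during~$(r_j, C_j)$ job~$j$ is active, so the work-conserving list schedule never idles nor runs a job outside~$Q_j$, which forces~$t'\le r_j$ — so the fallback to constraint~\eqref{lp:ordering} is unnecessary; second, to realize~$p_j=\sum_i\lambda_j^i x_j^i$ with energy exactly~$\sum_i\lambda_j^i E_j^i$ you should appeal to Observation~\ref{obs:proc_times_conv_comb} rather than Observation~\ref{obs:convex_combination_speeds}, since an arbitrary feasible~$\lambda$ need not be supported on two consecutive speeds.
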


We start by showing that any feasible schedule for a FE-ID**-C instance implies a feasible solution to the LP with the same objective value.
\begin{lemma}
    Let~$\sigma$ be a feasible schedule for FE-ID**-C with given completion ordering~$1\prec \dots \prec n$. Then there is a feasible solution~$(\hat{C},\lambda)$ to the LP such that for every~$j$:~$\hat{C}_j = \hat{C}_j(\sigma)$,~$x_j(\sigma)=\sum_i \lambda_j^i x_j^i$ and~$E_j(\sigma) = \sum_{i,j}\lambda_{j}^i E_{j}^i$.
    \label{lem:sched-lp}
\end{lemma}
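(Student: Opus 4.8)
The plan is to take a feasible schedule $\sigma$ and read off a candidate LP solution directly: set $\hat{C}_j := \hat{C}_j(\sigma)$ for all $j$, and for each job $j$ define $\lambda_j^i := y_j^i(\sigma)/x_j^i$, where $y_j^i(\sigma)$ is the total time $\sigma$ runs job $j$ at speed $s_i$ (as in the proof of Observation~\ref{obs:proc_times_conv_comb}). With this choice, $\sum_i \lambda_j^i x_j^i = \sum_i y_j^i(\sigma) = x_j(\sigma)$ and $\sum_i \lambda_j^i E_j^i = \sum_i y_j^i(\sigma) P_i = E_j(\sigma)$, and since the $y_j^i(\sigma)$ sum to $x_j(\sigma)$ with $\sum_i y_j^i(\sigma) s_i = v_j$, feasibility of $\sigma$ gives $\sum_i \lambda_j^i = \sum_i y_j^i(\sigma)/x_j^i$; one must be slightly careful here, so instead it is cleaner to invoke Observation~\ref{obs:proc_times_conv_comb}/Observation~\ref{obs:convex_combination_speeds} directly: there exist $\lambda_j^1,\dots,\lambda_j^k \ge 0$ with $\sum_i \lambda_j^i = 1$, $x_j(\sigma) = \sum_i \lambda_j^i x_j^i$, and $E_j(\sigma) = \sum_i \lambda_j^i E_j^i$. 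This immediately yields constraints~\eqref{lp:conv_sum} and~\eqref{lp:conv_nonneg}, and the claimed identities on $x_j$ and $E_j$. The objective values then match because $\sum_j w_j \hat{C}_j(\sigma) + \sum_{i,j}\lambda_j^i E_j^i = \sum_j w_j \hat{C}_j(\sigma) + E(\sigma)$, which (modulo the constant $\sum_j w_j r_j$) equals $F(\sigma)+E(\sigma)$.

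It remains to verify the two structural families of constraints. Constraint~\eqref{lp:ordering}, $\hat{C}_j \ge \hat{C}_{j-1}$, is immediate from the definition of the extended completion time: $\hat{C}_j(\sigma) = \max\{C_j(\sigma), \hat{C}_{j-1}(\sigma)\} \ge \hat{C}_{j-1}(\sigma)$. The real content is constraint~\eqref{lp:comp_times}: for every $j$ and every $j' \in Q_j \cap R_{r_j}^-$ we need
\[
\hat{C}_j(\sigma) \ge r_{j'} + \sum_{j'' \in Q_j \cap R_{r_{j'}}^+} x_{j''}(\sigma)\,.
\]
To see this, fix such a $j'$. Every job $j'' \in Q_j \cap R_{r_{j'}}^+$ satisfies $j'' \preceq j$ and $r_{j''} \ge r_{j'}$, so $j''$ cannot be processed at all before time $r_{j'}$; hence the entire processing of all these jobs — a total amount of time $\sum_{j''} x_{j''}(\sigma)$ — happens in the interval $[r_{j'}, \infty)$. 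Moreover each such $j''$ completes no later than $C_j(\sigma) \le \hat{C}_j(\sigma)$, since $j'' \preceq j$ and $\sigma$ respects the completion ordering (more precisely, $C_{j''}(\sigma) \le \hat{C}_{j''}(\sigma) \le \hat{C}_j(\sigma)$ by monotonicity of the extended completion times along the ordering). Therefore all of this processing is squeezed into the interval $[r_{j'}, \hat{C}_j(\sigma)]$, whose length is at least the total processing time it must contain: $\hat{C}_j(\sigma) - r_{j'} \ge \sum_{j'' \in Q_j \cap R_{r_{j'}}^+} x_{j''}(\sigma)$. Substituting $x_{j''}(\sigma) = \sum_i \lambda_{j''}^i x_{j''}^i$ gives exactly~\eqref{lp:comp_times}.

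I expect the main obstacle to be the bookkeeping in~\eqref{lp:comp_times} — specifically, arguing rigorously that the jobs in $Q_j \cap R_{r_{j'}}^+$ are all processed entirely within $[r_{j'}, \hat{C}_j(\sigma)]$. The lower endpoint is clear (release times), but the upper endpoint requires care about the distinction between $C_{j''}(\sigma)$ and $\hat{C}_{j''}(\sigma)$ and the fact that $\sigma$ adheres to the given completion ordering, so that $j'' \preceq j$ forces $\hat{C}_{j''}(\sigma) \le \hat{C}_j(\sigma)$. One should also double-check that the argument does not secretly require the processor to be non-idle or to process only jobs in this set during $[r_{j'}, \hat{C}_j(\sigma)]$ — it does not: we only use that a \emph{superset} of times within that interval is consumed by these jobs, which is a one-directional inequality and is all the LP constraint asks for. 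Everything else is routine substitution, so the lemma follows.
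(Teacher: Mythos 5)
Your proposal is correct and follows essentially the same line as the paper's proof: use Observation~\ref{obs:proc_times_conv_comb} to produce the $\lambda$'s, note that constraint~\eqref{lp:ordering} holds by definition of extended completion times, and argue constraint~\eqref{lp:comp_times} by observing that each $j''\in Q_j\cap R_{r_{j'}}^+$ must be processed entirely inside $[r_{j'},\hat{C}_j(\sigma))$. The only quibble is that your initial phrase ``completes no later than $C_j(\sigma)$'' is not quite right since $C_{j''}(\sigma)\le C_j(\sigma)$ need not hold, but you correct this yourself with the chain $C_{j''}(\sigma)\le\hat{C}_{j''}(\sigma)\le\hat{C}_j(\sigma)$, which is exactly what is needed.
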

\begin{proof}
    Note that by Observation~\ref{obs:proc_times_conv_comb}, we can find a~$\lambda$ which satisfies constraints \eqref{lp:conv_sum} and \eqref{lp:conv_nonneg}, and furthermore~$x_j(\sigma)=\sum_i \lambda_j^i x_j^i$ and~$E_j(\sigma) = \sum_{i,j}\lambda_{j}^i E_{j}^i$. By the definition of~$\hat{C}_j(\sigma)$ and the feasibility of~$\sigma$,
    constraint \eqref{lp:ordering} is also satisfied. 
    
    In order to show that constraint \eqref{lp:comp_times} is satisfied,  consider arbitrary~$j$,~$j'\in Q_j\cap R_{r_j}^-$ and~$j'' \in Q_j\cap R_{r_{j'}}^+$. Since~$j''\prec j$ and~$r_{j''}\ge r_{j'}$, by the feasibility of~$\sigma$, it must work on job~$j''$ for at least~$x_{j''} = \sum_i \lambda_{j''}^i x_{j''}^i$ during~$[r_{j'},\hat{C}_j(\sigma))$.
  Summing up over all such~$j''$, we obtain
    \[
        \hat{C}_j = \hat{C}_j(\sigma)\geq r_{j'} + \sum_{j''\in Q_j\cap R_{r_{j'}}^+}\sum_{i} \lambda_{j''}^i x_{j''}^i\,.\qedhere
    \]
\end{proof}

We next show that an optimal solution to the LP implies a feasible schedule with the same objective value.

\begin{lemma}
    Let~$(\hat{C},\lambda)$ be an optimal solution to the LP. Then, for FE-ID**-C 
    there is a schedule~$\sigma$ with~$\hat{C}_j(\sigma) = \hat{C}_j$,~$x_j(\sigma)=\sum_i \lambda_j^i x_j^i$, and~$E_j(\sigma) = \sum_{i,j}\lambda_{j}^i E_{j}^i$. 
    \label{lem:lp-sched}
\end{lemma}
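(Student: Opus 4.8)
The plan is to construct the schedule $\sigma$ explicitly from the optimal LP solution $(\hat{C},\lambda)$ by specifying, for each job $j$, a target processing time $x_j := \sum_i \lambda_j^i x_j^i$ (achievable with energy $\sum_i \lambda_j^i E_j^i$ by Observation~\ref{obs:proc_times_conv_comb}), and then deciding \emph{when} each unit of processing happens. For the timing, I would process jobs greedily in the order $1 \prec \dots \prec n$, always running the lowest-indexed available, not-yet-fully-processed job, and setting the speed profile so that job $j$ occupies exactly $x_j$ total time. The key claim is then that under this greedy rule every job $j$ completes (in the classical sense) by time $\hat{C}_j$, so that $\hat{C}_j(\sigma) \le \hat{C}_j$; combined with Lemma~\ref{lem:sched-lp} and the optimality of $(\hat C, \lambda)$, equality $\hat{C}_j(\sigma) = \hat{C}_j$ must hold for all $j$, and the objective values match.

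The heart of the argument is proving $C_j(\sigma) \le \hat{C}_j$ for every $j$. I would fix $j$ and consider the last idle-or-working-on-a-later-job instant before $C_j(\sigma)$; let $t^\star$ be the start of the maximal time interval ending at $C_j(\sigma)$ during which $\sigma$ continuously processes jobs from $Q_j = \{j' : j' \preceq j\}$. By maximality, at time $t^\star$ either $t^\star = 0$ or every job in $Q_j$ that has been released by $t^\star$ has already been completed; in particular every job in $Q_j$ processed during $[t^\star, C_j(\sigma))$ is released at or after $t^\star$, i.e.\ lies in $Q_j \cap R_{t^\star}^+$. Hence the total work done in $[t^\star, C_j(\sigma))$ is at most $\sum_{j'' \in Q_j \cap R_{t^\star}^+} x_{j''}$, giving $C_j(\sigma) \le t^\star + \sum_{j'' \in Q_j \cap R_{t^\star}^+} x_{j''}$. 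If $t^\star$ coincides with the release time $r_{j'}$ of some $j' \in Q_j \cap R_{r_j}^-$, this is exactly the right-hand side of constraint~\eqref{lp:comp_times}, so $C_j(\sigma) \le \hat C_j$; otherwise $t^\star$ is a release time of a job with larger index or $t^\star = 0$, and a small case analysis (using $\min_j r_j = 0$ and that $j \in Q_j \cap R_{r_j}^-$ provides at least one valid $j'$) reduces it to an instance of the constraint. The chain of extended completion times then forces $\hat C_j(\sigma) = \max\{C_j(\sigma), \hat C_{j-1}(\sigma)\} \le \hat C_j$ by induction on $j$, using constraint~\eqref{lp:ordering}.

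The main obstacle I anticipate is pinning down the endpoint $t^\star$ precisely enough that it matches a release time appearing in constraint~\eqref{lp:comp_times} — the LP only enforces the completion-time bound at release times $r_{j'}$ of jobs $j' \preceq j$ released no later than $r_j$, whereas the natural ``busy interval'' start point could a priori be the release time of some job released \emph{after} $r_j$, or the release of a job not in $Q_j$ at all. Resolving this needs the observation that if the busy interval (processing only $Q_j$-jobs) starts at such a later release time, then in fact all of $Q_j$ released before that point is already done, and one can push $t^\star$ back to the previous relevant release time (or to $0$) while only weakening the bound; iterating lands on a release time that is admissible in the constraint. Once $C_j(\sigma)\le\hat C_j$ is established, the remaining pieces — feasibility of $\sigma$ (every job fully processed after its release, which holds since we never run a job before release and assign it exactly $x_j$ work), matching energy, and the reverse inequality via Lemma~\ref{lem:sched-lp} and LP-optimality — are routine.
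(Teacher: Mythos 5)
Your proposal is correct, and its core -- the greedy construction of $\sigma$ and the busy-period bound on $C_j(\sigma)$ -- matches the paper's proof, but you close the argument by a genuinely different route. The paper establishes the equality $\hat{C}_j(\sigma)=\hat{C}_j$ for each $j$ directly: it first shows $\hat{C}_j(\sigma) \ge r_{j'} + \sum_{j''\in Q_j\cap R_{r_{j'}}^+}\sum_i\lambda_{j''}^ix_{j''}^i$ for \emph{all} admissible $j'$ (a pure feasibility consequence), then shows \emph{equality} at the specific $j'$ arising from the busy interval, and finally splits on whether constraint~\eqref{lp:comp_times} or constraint~\eqref{lp:ordering} is tight at the optimum, chasing tight~\eqref{lp:ordering}-constraints back to a job where~\eqref{lp:comp_times} binds. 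You instead only prove the one-sided bound $\hat{C}_j(\sigma)\le\hat{C}_j$ (propagated through the chain of extended completion times via~\eqref{lp:ordering}), and then let Lemma~\ref{lem:sched-lp} plus LP-optimality do the rest: $\sigma$ yields a feasible LP point with first coordinates $\hat{C}_j(\sigma)$ and the same energy term, so optimality of $(\hat{C},\lambda)$ gives $\sum_j w_j\hat{C}_j \le \sum_j w_j\hat{C}_j(\sigma)$, and strict positivity of the weights then forces componentwise equality. Your closing is arguably cleaner because it avoids the tight-constraint case analysis entirely; the only thing it quietly leans on is $w_j>0$, which the problem definition supplies.

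The obstacle you flag around identifying $t^\star$ actually disappears without any ``pushing back and iterating.'' Under the greedy rule, on all of $[r_j,C_j(\sigma))$ job $j$ is released and not yet complete, so the processor never idles and never runs any job $\succ j$ there; hence the busy interval processing only $Q_j$-jobs must start at some $t^\star\le r_j$. Moreover, the switch at $t^\star$ from (idle or a job $\succ j$) to a $Q_j$-job can only be triggered by a release: if the $Q_j$-job picked up at $t^\star$ had been released strictly before $t^\star$, the greedy would have preferred it (over idling or over a job $\succ j$) already at $t^{\star-}$. So $t^\star = r_{j'}$ for some $j'\in Q_j$ with $r_{j'}\le r_j$, i.e.\ an admissible index in constraint~\eqref{lp:comp_times}; the ``later release'' and ``job outside $Q_j$'' scenarios you were guarding against cannot occur. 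The paper's proof bakes this in by defining $t'$ directly as the largest time at most $r_j$ before which $\sigma$ idles or works on a job $\succ j$.
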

\begin{proof}
    Given~$(\hat{C},\lambda)$, we construct a corresponding schedule~$\sigma$ as follows:
    we run each job~$j$ for a total of~$\sum_{i}\lambda_j^ix_j^i$ time (by employing the two relevant consecutive speed levels and at most one speed switch). At every time~$t$, we process the active job with the earliest~$\hat{C}_j$ (note that this is also the earliest job in the completion-time ordering). It is straightforward to express the resulting schedule in terms of $\sigma=(J(t),S(t))$.

    Schedule~$\sigma$ satisfies the last two equalities in the statement by construction. Again by construction, every job is fully processed after its release time. It remains to argue that~$\hat{C}_j(\sigma) = \hat{C}_j$, as well as the completion-time ordering is satisfied. Let~$\hat{C}_j(\sigma)$ be the extended completion time of job~$j$ in~$\sigma$. By definition, the completion-time ordering holds for~$\hat{C}_j(\sigma)$ and it only remains to prove that~$\hat{C}_j(\sigma)=\hat{C}_j$ holds for every job~$j$.

    We first argue that for each job~$j$ there holds~$\hat{C}_j(\sigma)\ge \hat{C}_j$. Note that for any~$t\le r_j$, any feasible schedule must fully process all jobs in~$Q_j\cap R_{t}^+$ within the interval~$[t,\hat{C}_j)$. In other words, for~$\sigma$, any job~$j$ and  any~$t\le r_j$ there must hold that~$\hat{C}_j(\sigma)-t \ge \sum_{j''\in Q_j\cap R_t^+} x_{j''}(\sigma) = \sum_{j''\in Q_j\cap R_t^+} \sum_i \lambda_{j''}^ix_{j''}^i$.
    So consider an arbitrary job~$j$, and 
    let~$t'\le r_j$ be the largest time  such that~$\sigma$ is either idling or processing a job~$j^*$ with~$j^* \succ j$ just before~$t'$. Note that~$t'$ must be the release time of some~$j'\in Q_j$. By the above observation, it must be the case that~$\hat{C}_j(\sigma) \ge r_{j'} + \sum_{j''\in Q_j\cap R_{r_j'}^+} \sum_i \lambda_{j''}^ix_{j''}^i$. Furthermore, by construction,~$\sigma$ never unnecessarily idles and does not process any job not in~$Q_{j''}\cap R_{r_{j'}}^+$ throughout~$[r_{j'},\hat{C}_j)$. So overall it must be the case that~$\hat{C}_j(\sigma) = r_{j'} + \sum_{j''\in Q_j\cap R_{r_{j'}}^+} \sum_i \lambda_{j''}^ix_{j''}^i$.

    If for job~$j$ constraint~\eqref{lp:comp_times} of the LP is tight, then~$\hat{C}_j(\sigma)=\hat{C}_j$ directly follows. Otherwise, by optimality of~$(\hat{C},\lambda)$, constraint~\eqref{lp:ordering} must be tight. In this case it is sufficient to show the existence of a job~$j'\prec j$ with~$\hat{C}_{j}=\hat{C}_{j'}$, such that for~$j'$ constraint~\eqref{lp:comp_times} is tight, since by the above argument~$\hat{C}_{j'}(\sigma)=\hat{C}_{j'}$ holds, and by the definition of~$\hat{C}_j(\sigma)$. The existence of such a~$j'$  follows by constraint~\eqref{lp:ordering} and the fact that for job~$1$ constraint~\eqref{lp:comp_times} must be tight.
\end{proof}

Finally, we prove the main theorem of this section.
\begin{proof}[Proof of Theorem~\ref{thm:lp}]
    Given an instance \flowenergyinstance\ of FE-ID**-C, construct the LP, solve it, construct corresponding schedule~$\sigma$ for \flowenergyinstance\ as in the proof of Lemma~\ref{lem:lp-sched}. 

    Schedule~$\sigma$ is optimal, since it has the same objective value as that of an optimal solution to the LP, and by Lemma~\ref{lem:sched-lp} the objective value of an optimal solution to the LP cannot be higher than the objective value of an optimal schedule for the corresponding input instance.

    Solving the LP and building the respective schedule can be done in polynomial time.
\end{proof}

Now suppose we change the objective of the LP to~$\min \sum_{j}w_j\hat{C}_j$ and add the  constraint~$\sum_{i,j}\lambda_j^iE_j^i\leq B$. By using the same proof, we can show that this new LP solves any B-ID**-C variant. Together with Theorem~\ref{thm:lp}, this shows us that any *-ID**-C variant can be solved in polynomial time, proving \textbf{Theorem~\ref{thm:lp-intro}}.

\section{Towards a Combinatorial Algorithm for FE-ID**-C}
\label{sec:natural-generalizations}
While the LP allows us to solve any *-ID**-C variant in polynomial time, there are still benefits in finding a  combinatorial algorithm. Barcelo et al.\cite{ComplexitySpeedScaling}  claim  such an algorithm for FE-IDUA-C by a straightforward generalization of their algorithm for FE-IDUU with only two speeds. To be more precise: they suggest that a straightforward generalization of their algorithm solves FE-IDUU with an arbitrary number of speeds, and a slightly more general result yields an algorithm for FE-IDUA-C. Unfortunately, we found that the two, in our opinion, most natural generalizations of their algorithm do not work for an arbitrary number of speeds. For more details see Appendix~\ref{app:generalization}.

Here, we provide a different and simple combinatorial algorithm for FE-IDUU with an arbitrary number of speeds. For unit-size and unit-weight jobs, ordering the jobs from earliest to latest release date (FIFO) gives an optimal completion ordering for any speed profile $S(t)$. This follows from a simple exchange argument.
Hence, the problem is equivalent to finding optimal speeds.
Algorithm~\ref{alg:kappa_delta1} is a greedy algorithm that increases the speed of a job that leads to the biggest improvement in the objective function. 
Furthermore, we conjecture a slight adaptation of this algorithm also solves any *-ID**-C variant.

\begin{algorithm}
\caption{The~$(\kappa-\Delta)$-Algorithm.\label{alg:kappa_delta1}}
\SetKwInOut{Input}{input}\SetKwInOut{Output}{output}
\While{There is a job~$j$ with~$\kappa_j\geq \Delta_j$}{
    Increase the speed of job~$j^*:=\arg\max_j (\kappa_j- \Delta_j)$
    (break ties arbitrarily), until either
    $\kappa_{j^*}- \Delta_{j^*}< 0$, or 
    $\kappa_{j^*}- \Delta_{j^*}\neq \max_j (\kappa_j- \Delta_j)$.
}
\end{algorithm}

\begin{restatable}{theorem}{KappaDelta}\label{thm:kappa_delta}
    Algorithm~\ref{alg:kappa_delta1} computes an optimal schedule for FE-IDUU in polynomial time.
\end{restatable}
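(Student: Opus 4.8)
\textbf{Proof proposal for Theorem~\ref{thm:kappa_delta}.}
The plan is to show that Algorithm~\ref{alg:kappa_delta} terminates with a schedule satisfying the optimality conditions of Corollary~\ref{cor:kappa_delta_balance}, and then to argue that in the unweighted unit-size case these conditions are in fact sufficient for optimality. First I would establish that throughout the algorithm we only ever increase job speeds (never decrease them), so that starting from the all-slow schedule $\sigma_1$ (where every job runs at $s_1$) the algorithm explores a monotone path of schedules. At each step the quantity $\kappa_{j^*}-\Delta_{j^*}$ is strictly positive, so by Observation~\ref{obs:affection} and Lemma~\ref{lemma:delta} the objective $F(\sigma)+E(\sigma)$ strictly decreases (the flow drops by $\varepsilon\kappa_{j^*}$ while the energy rises by only $\varepsilon\Delta_{j^*}$). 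I also need to argue termination: each ``phase'' either drives some $\kappa_{j^*}-\Delta_{j^*}$ down to $0$ or brings it into a tie with the next-best job; since $\Delta_j$ takes only finitely many values (one per speed level, and $\Delta_j$ jumps as $s_j(\sigma)$ crosses a given speed) and $\kappa_j$ depends only on the completion-time ordering (finitely many orderings), there are only polynomially many distinct ``breakpoints'', so the process halts. When it halts, no job has $\kappa_j \geq \Delta_j$, i.e.\ $\kappa_j(\sigma) < \Delta_j(\sigma)$ for all $j$; combined with the fact that we never slowed a job below $s_1$ (so the expanding condition $\Delta_j^+ - \kappa_j^+ \leq 0$ is either vacuous at $s_1$ or inherited), the final schedule meets the conditions of Corollary~\ref{cor:kappa_delta_balance}.

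The second and harder part is the converse: that these local conditions pin down the \emph{global} optimum for FE-IDUU. The key structural facts to exploit are that with unit sizes and unit weights, once the average speed (equivalently processing time $x_j$) of every job is fixed, the SRPT rule gives the flow-optimal ordering, and $\kappa_j$ is then simply the number of jobs in the affection set of $j$ — a purely combinatorial quantity. I would argue convexity: the objective, as a function of the vector of processing times $(x_1,\dots,x_n)$ under the SRPT ordering, is convex (flow is piecewise linear and the induced ordering changes only at points where the objective remains continuous, while energy is convex by Observation~\ref{obs:convex_combination_speeds}/\ref{obs:no_conv_comb}). The greedy algorithm's stopping conditions are exactly the first-order (KKT-type) optimality conditions for this convex program over the box $x_j \in [x_j^k, x_j^1]$: $\kappa_j \leq \Delta_j$ says there is no profitable direction to shrink $x_j$, and the boundary case $s_j = s_1$ handles the constraint that we never expand beyond the slowest speed (where $\Delta_j^+ = -\infty$ makes the expanding condition automatic). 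A clean way to close this is an exchange/coupling argument: given the greedy output $\sigma$ and any optimal $\sigma^{OPT}$, show that if they differ one can transfer a small amount of speed from an over-sped job in one to an under-sped job in the other without increasing the objective, contradicting strict improvement unless they coincide in objective value.

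The main obstacle I expect is making the convexity/exchange argument fully rigorous across changes in the completion-time ordering: when we speed up job $j^*$, the SRPT ordering can change, and hence the affection sets $K_j$ (and the values $\kappa_j$) change discontinuously. I would handle this by showing that the objective is nonetheless continuous at such transition points (the jobs that swap order have equal remaining processing times there, so the flow contribution is unaffected at the swap instant), and that the relevant one-sided derivatives behave monotonically — precisely, that along any direction of increasing speeds $\kappa_j-\Delta_j$ is non-increasing in a suitable sense, which is what guarantees the greedy choice is never regretted. An alternative, possibly cleaner route that sidesteps ordering subtleties is to invoke Theorem~\ref{thm:lp} (or rather its specialization): for each fixed completion ordering the LP gives the optimum, the greedy algorithm implicitly maintains an optimal-for-its-current-ordering solution, and one shows the greedy's final ordering is the globally optimal one because any beneficial reordering would have been realized as a speed increase earlier. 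I would present the direct convex-program argument as the primary proof and relegate the detailed ordering-continuity lemma to where the bookkeeping is cleanest.
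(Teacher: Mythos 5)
Your first part (the algorithm only speeds jobs up, the objective weakly decreases at each step, and the process terminates because the pair $(\kappa_j,\Delta_j)$ can change only finitely many times) is correct and essentially matches the paper's runtime discussion. The genuine problem is the second part: you claim that the termination conditions $\kappa_j(\sigma)\leq\Delta_j(\sigma)$ and $\kappa_j^+(\sigma)\geq\Delta_j^+(\sigma)$ for all $j$ are sufficient for global optimality because the objective is convex in $(x_1,\dots,x_n)$ and these are "KKT-type" conditions. But these are only \emph{coordinate-wise} one-sided directional-derivative conditions, and for a non-differentiable convex function, checking that $g'(x;\pm e_j)\geq 0$ for every coordinate $j$ does \emph{not} imply $0\in\partial g(x)$. (A standard example: $g(x,y)=\max(x+2y,\,-2x-y)$ has all four coordinate directional derivatives positive at the origin, yet decreases along $(1,-1)$.) Because the flow term is a sum of maxima of affine functions, its subdifferential at the algorithm's output is generally a nontrivial polytope, and one would have to exhibit a single consistent element of $\partial F(\sigma)$ that is cancelled by a subgradient of the energy term; the coordinate conditions fix only the extreme per-coordinate values of this polytope. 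Your own "main obstacle" paragraph touches on this (the discontinuity of $\kappa_j$) but proposes to resolve it by a monotonicity heuristic, not a proof.

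This is precisely why the paper does \emph{not} argue sufficiency of Corollary~\ref{cor:kappa_delta_balance} alone. Instead it proves optimality of Algorithm~\ref{alg:kappa_delta} by tracking the entire \emph{construction} (a monotone speed-up history $\sigma^0,\dots,\sigma^T$) and showing that (i) any construction of an optimal schedule must follow the $(\kappa-\Delta)$-rule (Corollary~\ref{cor:must_follow_KD}, via the exchange Lemma~\ref{lemma:exchange_last_vio}), and (ii) any schedule that \emph{has} a construction following the rule and meets the endpoint conditions is optimal (Lemma~\ref{lemma:KD_opt}), by aligning it step by step against an optimal schedule. Both steps rest on affection-chain bookkeeping (Observations~\ref{obs:affection_from_chain}--\ref{obs:no_affect_other_chain}) that exploit the unit-size/unit-weight structure and have no analogue in a generic convex-programming argument. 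In short, the sufficiency step is where the real content of the theorem lives, and your proposal passes over it. Your "alternative route" via Theorem~\ref{thm:lp} also does not close the gap: the LP gives an optimum for the fixed FIFO ordering, but you would still need to prove that Algorithm~\ref{alg:kappa_delta}'s output coincides with an LP optimum, which is exactly the unresolved sufficiency question in another guise.
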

\begin{proof}
    The proof can be found in Appendix \ref{app:kappa_delta}.
\end{proof}

Recall the definition of affection and note that Observation~\ref{obs:affection} does not hold for~$\hat{F}(\sigma) = \sum_j(\hat{C}_j(\sigma)-r_j)$ instead of~$F(\sigma)$. For a fixed completion ordering we define the \emph{extended} affection as follows.

\begin{definition}[Extended affection]
    Consider a schedule~$\sigma$ for a fixed completion ordering~$1\prec\dots\prec n$. To simplify notation, define~$\hat{C}_0 := -\infty$.
\begin{itemize}
    \item For~$j\leq j'$, we have that~$j'\in \hat{K}_j$ if either: 
    \begin{itemize}
        \item~$C_j(\sigma)>r_{j'}$ and~$\hat{C}_{j'}>\hat{C}_{j'-1}$, or
        \item~$\hat{C}_j>\hat{C}_{j-1}$ and~$\hat{C}_j=\hat{C}_{j'}$.
    \end{itemize}
    \item If~$j'\in \hat{K}_j$ and~$j''\in\hat{K}_{j'}$, then~$j''\in\hat{K}_j$.
    \item The extended affection of job~$j$ is defined as~$\hat\kappa_j(\sigma) := \sum_{j'\in \hat{K}_j(\sigma)} w_{j'}$.
\end{itemize} 
\end{definition}

\begin{conjecture}
    The algorithm obtained by replacing $\kappa$ with $\hat\kappa$ in Algorithm~\ref{alg:kappa_delta1} computes an optimal schedule for any FE-ID**-C variant in polynomial time.
\end{conjecture}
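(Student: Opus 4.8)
The plan is to mirror the proof of Theorem~\ref{thm:kappa_delta}, with the extended affection $\hat\kappa$ playing the role that $\kappa$ plays in the unweighted, unit-size case, and to use the LP of Section~\ref{sec:lp} as the certificate of global optimality. I would split the argument into three parts: (i) an ``extended'' affection calculus showing that the stopping condition of the algorithm is \emph{necessary} for optimality; (ii) termination of the algorithm; and (iii) \emph{sufficiency} of the stopping condition, i.e.\ that the greedy process actually reaches a globally optimal schedule for FE-ID$\star\star$-C.

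For part (i) I would first establish the analog of Observation~\ref{obs:affection} for a fixed completion ordering $1\prec\dots\prec n$: if $\sigma$ respects the ordering and $\sigma^{-\varepsilon}$ (resp.\ $\sigma^{+\varepsilon}$) is obtained by decreasing (resp.\ increasing) $x_j$ by a sufficiently small $\varepsilon>0$ while keeping all other processing times and the completion ordering fixed, then $\hat F(\sigma^{-\varepsilon}) = \hat F(\sigma) - \varepsilon\,\hat\kappa_j(\sigma)$ and $\hat F(\sigma^{+\varepsilon}) = \hat F(\sigma) + \varepsilon\,\hat\kappa^+_j(\sigma)$. The argument is exactly the one behind the definition of extended affection: shrinking $x_j$ pulls down $C_j(\sigma)$, hence $\hat C_{j'}$ for every $j'$ with $C_j(\sigma)>r_{j'}$ and $\hat C_{j'}>\hat C_{j'-1}$ (first bullet), and it also pulls down the whole ``block'' $\{j':\hat C_{j'}=\hat C_j>\hat C_{j-1}\}$ that shares $j$'s extended completion time (second bullet); transitivity is the recursive clause $j'\in\hat K_j,\ j''\in\hat K_{j'}\Rightarrow j''\in\hat K_j$. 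The energy side is untouched, since $E(\sigma)$ does not depend on the ordering, so Lemma~\ref{lemma:delta} applies verbatim. Combining the two yields the extended analog of Corollary~\ref{cor:kappa_delta_balance}: every optimal schedule $\sigma^*$ satisfies $\hat\kappa_j(\sigma^*)-\Delta_j(\sigma^*)\le 0$ and $\Delta^+_j(\sigma^*)-\hat\kappa^+_j(\sigma^*)\le 0$ for all $j$, so the while-condition of the algorithm fails at every optimum.

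Part (ii) is routine: over the course of the algorithm the potential $\max_j(\hat\kappa_j-\Delta_j)$ never increases, and between consecutive ``events'' (a job reaching a speed level $s_i$, an affection set changing because some $C_{j'}$ crosses a release time, or the arg-max set changing) the updates are linear; since there are finitely many events and the tie-break clause only ever enlarges the current arg-max set, the process halts after finitely many iterations. The bookkeeping around the tie-break clause and the non-smooth breakpoints $x_j^i$ is fiddly but not conceptually deep.

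Part (iii) is the crux and the main obstacle. Necessity (part (i)) plus convexity is \emph{not} enough: $\hat F(x)+E(x)$, viewed as a function of the processing-time vector $x\in\prod_j[x_j^k,x_j^1]$, is convex and piecewise linear (as $\hat F(x)=\min\{\sum_j w_j\hat C_j-\sum_j w_jr_j:(\hat C,x)\text{ LP-feasible}\}$ and $E$ is separable convex), but for such functions coordinate-wise first-order optimality does not in general imply global optimality --- the subdifferential of $\hat F$ need not be a box. So one must exploit the structure the completion ordering imposes, namely that the extended affection sets $\hat K_j$ are nested along the ordering (essentially a laminar family, with blocks sharing an extended completion time acting as a single ``super-job''). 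Using this, I would aim to show one of the following equivalent statements at the greedy-terminal schedule $\bar\sigma$: (a) any improving direction decomposes into single-job speed changes each still improving, contradicting the stopping rule; (b) $\bar\sigma$ meets the complementary-slackness conditions of the LP, by exhibiting a dual solution supported on the tight constraints~\eqref{lp:comp_times} and~\eqref{lp:ordering} with values read off from the $\hat\kappa_j$ at termination, so that $\bar\sigma$ equals the LP optimum (Theorem~\ref{thm:lp}); or (c) a greedy-stays-ahead/exchange argument --- maintain the invariant that some optimal $\sigma^{\mathrm{opt}}$ has $x_j(\sigma^{\mathrm{opt}})\le x_j(\text{current})$ for all $j$, show the greedy step preserves it, and conclude at termination that the invariant plus the coordinate-wise conditions of part (i) force $\bar\sigma=\sigma^{\mathrm{opt}}$. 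Proving the laminar structure of $\{\hat K_j\}$ and turning it into one of (a)--(c) is, in my estimation, the real work; everything else is a fairly direct lift of the FE-IDUU argument of Appendix~\ref{app:kappa_delta}.
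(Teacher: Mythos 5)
This statement is not a theorem in the paper but an explicitly open \emph{conjecture}; the authors provide no proof, so there is nothing to compare your attempt against on their side. What you have submitted is also not a proof: you yourself flag part~(iii) as ``the crux and the main obstacle'' and offer three candidate strategies~(a)--(c) without carrying out any of them. The gap you identify is precisely why the statement remains a conjecture, so you are reading the situation correctly, but the submission as it stands resolves nothing.

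Concretely, parts~(i) and~(ii) are plausible. For~(i), the extended-affection analog of Observation~\ref{obs:affection} does need more care than you give it: with the extended completion times, $j\in\hat K_j$ can \emph{fail} (namely when $\hat C_j=\hat C_{j-1}$, i.e.\ $j$ is not the first job of its block), in which case $\hat\kappa_j$ may be as small as $0$ and shrinking $x_j$ indeed saves no flow; you should verify that $\hat K_j$ is still a contiguous suffix of the ordering starting at the first block boundary at or after $j$, and that the $\varepsilon$-perturbation preserves the block decomposition, before asserting $\hat F(\sigma^{\mp\varepsilon})=\hat F(\sigma)\mp\varepsilon\,\hat\kappa_j^{(\pm)}(\sigma)$. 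For~(iii), your option~(c) is the one that mirrors the paper's own proof of Theorem~\ref{thm:kappa_delta} in Appendix~\ref{app:kappa_delta}, which is built on the ``construction'' formalism and on the chain observations (\ref{obs:affection_from_chain}, \ref{obs:overlap_chain}, \ref{obs:earlier_loses_more}). To lift that argument you would need weighted analogues: Observation~\ref{obs:overlap_chain} generalizes cleanly (the $\hat\kappa$-difference of two jobs in the same chain becomes $\sum_{j_1\le j<j_2}w_j$, still schedule-independent), but Observation~\ref{obs:earlier_loses_more} and the exchange step in Lemma~\ref{lemma:exchange_last_vio} rely on the unit-weight counting identity $\kappa_j=j'-j+1$, and it is not obvious that the ``earlier job loses at least as much affection'' monotonicity survives arbitrary weights once a chain breaks in the middle. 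Until that is established (or one of your alternatives (a)/(b) is actually executed, e.g.\ exhibiting a complementary-slackness dual for the LP of Section~\ref{sec:lp}), the proposal remains an outline of where a proof might live rather than a proof.
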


\section{Discussion and Future Work}
We showed the $\mathsf{NP}$-hardness of FE-IDUA, FE-ICUA, FE-IDWU and FE-ICWU, and furthermore that all these problem variants are solvable in polynomial time when given a completion-time ordering. 

Recently, there has been renewed interest in approximation algorithms for the fixed-speed case of minimizing weighted flow for arbitrary-size jobs. This line of work has culminated in a Polynomial-Time Approximation Scheme (PTAS)~\cite{ArmbrusterRW23} (recall that~$\star$-I$\star$WA is $\mathsf{NP}$-hard~\cite{LABETOULLE1984245}). In light of this, it would be interesting to study the exact approximability of the $\mathsf{NP}$-hard variants on speed-scalable processors. Our linear program for the problem given a completion time order might be a useful starting point, as it implies that it suffices to -- in some sense -- approximate the optimal completion-time ordering. 

In particular, for FE-IDUA, there exists an online~$(2+\epsilon)$-competitive algorithm (see~\cite{AndrewLW10,BansalCP13}). While this naturally implies an offline~$(2+\epsilon)$-approximation algorithm, it seems plausible that full knowledge of all jobs and their characteristics could lead to better approximation guarantees.

\bibliography{sources}

\newpage
\appendix

\section{Convex Combinations of Processing Times}\label{app:problem_desc}

\obsConv*
\begin{proof}
    First, note that if $x_j(\sigma)=x_j^i$ for some $i\in\{1,\dots,k\}$,  the result follows directly from Observation \ref{obs:no_conv_comb}. Hence, we may assume $x_j^{i-1}<x_j(\sigma)<x_j^i$ for some $i\in\{2,\dots,k\}$, which also implies that for any convex combination $x_j(\sigma)=\sum_{i=1}^k\lambda_ix_j^i$, there are at least two $\lambda_i$ that are non-zero.
    
    Consider a schedule~$\sigma$ and job~$j$ with~$s_i\leq s_j(\sigma)\leq s_{i+1}$. Let~$\lambda\in[0,1]$ be such that~$x_j(\sigma) =  \lambda x_j^{i} + (1-\lambda) x_j^{i+1}$. By Observation \ref{obs:proc_times_conv_comb}, we can achieve this processing time by running at speed~$s_i$ for~$\lambda x_j^{i}$ time, and at~$s_{i+1}$ for~$(1-\lambda) x_j^{i+1}$, which gives an energy consumption of~$\lambda E_j^{i} + (1-\lambda) E_j^{i+1}$. We now prove that any other convex combination of~$\{x_1,\dots,x_k\}$ that yields~$x_j(\sigma)$ will have a higher energy consumption than~$\lambda E_j^{i} + (1-\lambda) E_j^{i+1}$. 
    
    Let~$x_j(\sigma)=\sum_{i'=1}^k\lambda_{i'} x_j^{i'}$ be a convex combination such that the energy consumption~$\sum_{i'=1}^k\lambda_{i'} E_j^{i'}$ is the lowest among all possible convex combinations that yield~$x_j(\sigma)$.  Among all indices~$i'$ with~$\lambda_{i'}>0$, let~$q$ and~$r$ be the minimum and maximum indices, respectively. 
    
    Suppose, for the sake of contradiction, that either~$q< i$ or~$r> i+1$. We here only argue that~$q< i$, since the other case can be handled with the same argument. Then~$x_j^{q}> x_j^i >x_j^{r}$, and thus there is a~$\lambda'\in(0,1)$ such that~$x_j^i = \lambda' x_j^{q}+ (1-\lambda') x_j^{r}$. Let~$\Tilde{\lambda} = \min\{\lambda_{q}, \lambda_{r}\}$. Then we can rewrite
    \begin{align*}
        \lambda_{q}x_j^{q} + \lambda_{r}x_j^{r} 
        = \left(\lambda_{q}-\Tilde{\lambda}\lambda'\right)x_j^{q} + \Tilde{\lambda}\lambda'x_j^{q} + \left(\lambda_{r}-\Tilde{\lambda}(1-\lambda')\right)x_j^{r} + \Tilde{\lambda}(1-\lambda')x_j^{r}
    \end{align*}
    where by choice of~$\Tilde{\lambda}$, we have~$\lambda_{r}-\Tilde{\lambda}\lambda'>0$ and~$\lambda_{r}-\Tilde{\lambda}(1-\lambda')>0$. Since~$x_j(\sigma) =  \lambda' x_j^{i} + (1-\lambda') x_j^{i+1}$, it follows that 
    \begin{align*}
        x_j(\sigma)=\sum_{i'=1}^k\lambda_{i'} x_j^{i'} - \Tilde{\lambda}\left(\lambda'x_j^{q} + (1-\lambda')x_j^{r}\right) + \Tilde{\lambda}x_j^i
    \end{align*}
    is a different convex combination of~$x_j(\sigma)$, which yields an energy consumption of
    \begin{align*}
        \sum_{i'=1}^k\lambda_{i'} E_j^{i'} - \Tilde{\lambda}\left(\lambda'E_j^{q} + (1-\lambda')E_j^{r}\right) + \Tilde{\lambda}E_j^i\,.
    \end{align*}
    However, by Observation \ref{obs:no_conv_comb}, we have that $E_j^i< \lambda'E_j^{q} + (1-\lambda')E_j^{r}$, which contradicts that~$\lambda E_j^{i} + (1-\lambda) E_j^{i+1}$ is the lowest possible energy consumption for any convex combination that yields~$x_j(\sigma)$.
\end{proof}

\section{Hardness of B-IDUA-P}\label{app:prio_BUA}
This section proves the following theorem:
\prioBUA*

Consider the reduction from \textsc{SubsetSum} to B-IDUA by Bacelo et al.~\cite[Section 3]{ComplexitySpeedScaling}, which is defined as follows:

Let \subsetsuminstance\  be an instance of \textsc{SubsetSum} where given elements~$a_1\geq \dots \geq a_m$ with~$a_i\in\mathbb{N}$ and a target value~$A$ with~$a_1<A<\sum_{i=1}^m a_i$. \subsetsuminstance\ is a YES-instance if and only if there is a subset~$L\subseteq\{1,\dots,m\}$ such that~$\sum_{i\in L}a_i=A$ (otherwise, it is a NO-instance). 

Construct the following instance \budgetinstance\ of B-IDUA. For each element~$a_i$, add a job package~$\J_i=\{(i,1),(i,2)\}$. It is assumed that the release times of the job packages are shifted enough so jobs of different packages do not interact (i.e., even at speed~$s_1$, the jobs in~$\J_i$ complete before the first release time in~$\J_{i+1}$). For ease of notation, further assume that the first job in~$\J_i$ is released at time 0. Job~$(i,1)$ has release time~$r_{i,0}=0$ and processing volume~$v_{i,1}=a_i$, and job~$(i,2)$ has release time~$r_{i,2}=\frac{a_i}{2}$ and processing volume~$2\delta a_i$ with~$\delta=\frac{1}{a_1m^2}$. The allowable speeds are~$s_1=1$ and~$s_2=2$, and power consumptions~$P_1=1$ and~$P_2=4$. This gives us~$\Delta_1=2$. Finally, the budget is $B=(1 + 4\delta)\sum_{i=1}^m a_i + A$.

For a schedule~$\sigma$, let~$C_{i,j}$ denote the completion time of job~$(i,j)$. Furthermore, let~$F_i(\sigma)$ and~$E_i(\sigma)$ denote the flow and energy consumption of~$\J_i$.

The following observations appear implicitly in \cite[proof of Theorem 34]{ComplexitySpeedScaling}.

\begin{observation}\label{obs:BUA_order}
    Let~$\sigma^*$ be an optimal schedule for \budgetinstance.
    \begin{itemize}
        \item If~$E_i(\sigma^*)\in[a_i + 2\delta a_i, 2a_i]$, then~$C_{i,2}<C_{i,1}$
        \item If~$E_i(\sigma^*)\in (2a_i,2a_i+ 2\delta a_i)$, then~$C_{i,1}<C_{i,2}$~$C_{i,1}>r_{i,2}$.
        \item If~$E_i(\sigma^*)\in [2a_i+ 2\delta a_i,2a_i+ 4\delta a_i]$, then~$C_{i,1}\leq r_{i,2}$.
    \end{itemize}
\end{observation}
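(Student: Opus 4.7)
The plan is to analyze the optimal sub-schedule within each package $\J_i$ given the total energy $E_i$ allocated to it. Since the release times of successive packages are shifted so that packages do not interact, the global optimum decomposes into independent per-package optima. For each $\J_i$, I would parameterize the sub-schedule by the amounts of time $y_2\in[0,a_i/2]$ and $z_2\in[0,\delta a_i]$ that $(i,1)$ and $(i,2)$ respectively spend at speed $s_2$, and by whether the processor interrupts $(i,1)$ at $r_{i,2}$ to finish $(i,2)$ first or lets $(i,1)$ run uninterrupted.

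Writing $s:=y_2+z_2$, we have $T_{i,1}=a_i-y_2$, $T_{i,2}=2\delta a_i-z_2$, and $E_i=a_i+2\delta a_i+2s$, so fixing $E_i$ is equivalent to fixing $s$. The package flow can then be written in closed form for each possible schedule shape: if $T_{i,1}>r_{i,2}$ and $(i,1)$ runs uninterrupted, $F_B=3a_i/2+2\delta a_i-2y_2-z_2$; if $T_{i,1}>r_{i,2}$ and $(i,1)$ is interrupted at $r_{i,2}$ to finish $(i,2)$ first, $F_C=a_i+4\delta a_i-y_2-2z_2$; and if $T_{i,1}\le r_{i,2}$, no interrupt is needed and $F_A=T_{i,1}+T_{i,2}=a_i+2\delta a_i-s$. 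Minimizing each expression subject to $y_2+z_2=s$, $y_2\in[0,a_i/2]$, $z_2\in[0,\delta a_i]$ gives linear-in-$s$ envelopes whose regime boundaries are determined by which of the capacity constraints on $y_2$, $z_2$ is tight.

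The map $E_i\mapsto s$ sends the three intervals of $E_i$ in the statement to $s\in[0,a_i/2-\delta a_i]$, $s\in(a_i/2-\delta a_i,a_i/2)$, and $s\in[a_i/2,a_i/2+\delta a_i]$, respectively. A direct comparison shows that in the low range the interrupt shape $F_C$ dominates $F_B$ (their difference is $a_i/2-\delta a_i-s\ge 0$), so the optimum takes the interrupt shape and satisfies $C_{i,2}<C_{i,1}$; in the medium range the non-interrupt shape $F_B$ strictly dominates $F_C$ while $F_A$ is infeasible (it requires $y_2=a_i/2$ and hence $s\ge a_i/2$), so the optimum satisfies $C_{i,1}>r_{i,2}$ and $C_{i,1}<C_{i,2}$; and in the high range $F_A$ becomes feasible, beats $F_C$ by exactly $\delta a_i$ and matches the limit of $F_B$ as $y_2\to a_i/2$, yielding $T_{i,1}=a_i/2=r_{i,2}$ and hence $C_{i,1}\le r_{i,2}$. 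The main technical step is this piecewise comparison of the three closed-form expressions, and the principal obstacle is the somewhat finicky bookkeeping of which of $y_2$, $z_2$ is binding in each regime; no ideas beyond the three flow formulas above are needed.
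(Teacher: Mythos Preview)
The paper does not actually prove this observation: it simply records that the three bullets ``appear implicitly in \cite[proof of Theorem 34]{ComplexitySpeedScaling}'' and imports them from there. Your explicit parameterization by $(y_2,z_2)$ and the three closed-form flows $F_A,F_B,F_C$ is exactly the kind of computation needed to verify the claim from scratch, and it is correct.

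Two minor points worth tightening. First, your formula $F_B-F_C=a_i/2-\delta a_i-s$ is the value after minimizing each shape separately, and it is only valid once $s\ge\delta a_i$ (so that $z_2=\delta a_i$ binds in the minimization of $F_C$); for $s<\delta a_i$ the minimized difference is the constant $a_i/2-2\delta a_i$, which is still nonnegative because $\delta=1/(a_1 m^2)\le 1/4$. Either way the conclusion $F_C\le F_B$ in the low range stands. Second, at the boundary $s=a_i/2-\delta a_i$ (equivalently $E_i=2a_i$) the two shapes tie; the paper handles this explicitly by a tie-breaking convention (``it is assumed that $(i,2)$ completes first''), so the first bullet with its closed interval is to be read under that convention rather than as a strict inequality forced by optimality.
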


Note that for~$E_i(\sigma^*)= 2a_i$, it does not matter whether~$(i,1)$ or~$(i,2)$ completes first, as both yield the same flow. Hence, it is assumed that~$(i,2)$ completes first.

\begin{definition}[(Base schedule)]
    The base schedule of~$\J_i$, denoted~$\BS_i$, completes job~$(i,1)$ after~$(i,2)$, runs~$(i,1)$ at~$s_1$ and~$(i,2)$ at~$s_2$. The base schedule has an energy consumption of~$E_i(\BS_i) = a_i +4\delta a_i$ and a flow of~$F_i(\BS_i) = a_i +2\delta a_i$.
\end{definition}

\begin{observation}\label{obs:BUA_increase}
Consider two schedules~$\sigma$ and~$\sigma'$ with~$E_i(\sigma')-E_i(\sigma)=\varepsilon>0$. Assuming we always order the jobs in an optimal way, we have the following:
    \begin{itemize}
        \item If~$E_i(\BS_i)\leq E_i(\sigma)< E_i(\sigma') \leq 2a_i$, then~$F_i(\sigma') = F_i(\sigma)- \frac{\varepsilon}{2}$.
        \item If~$2a_i\leq E_i(\sigma)< E_i(\sigma') \leq 2a_i+ 2\delta a_i$, then~$F_i(\sigma') = F_i(\sigma)- \varepsilon$.
        \item If~$2a_i+ 2\delta a_i\leq E_i(\sigma)< E_i(\sigma') \leq 2a_i+ 4\delta a_i$, then~$F_i(\sigma') = F_i(\sigma)- \frac{\varepsilon}{2}$
    \end{itemize}
\end{observation}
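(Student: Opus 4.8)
The plan is to isolate a single job package and determine, in closed form, the minimum package flow $f_i(E)$ of $\J_i$ as a function of the energy $E$ spent on it; the three asserted identities then just express the slope of $f_i$ on three consecutive linear pieces. Because the packages are released far enough apart that their processing windows are disjoint in every reasonable schedule, the flow of $\J_i$ depends only on how $\J_i$ itself is scheduled and on the energy devoted to $\J_i$, so it suffices to study $\J_i=\{(i,1),(i,2)\}$ on its own, with $(i,1)$ released at $0$ with volume $a_i$ and $(i,2)$ released at $a_i/2$ with volume $2\delta a_i$. I read ``$\sigma$ orders the jobs in an optimal way'' as ``$\sigma$ minimizes $F_i$ subject to its energy $E_i(\sigma)$'', so that $F_i(\sigma)=f_i(E_i(\sigma))$; the statement then becomes $f_i(E')-f_i(E)=-\tfrac12(E'-E)$, respectively $-(E'-E)$, on the three intervals.

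First I would record the two-speed energy/time relation: with $s_1=1$, $s_2=2$ and $P_1=1$, $P_2=4$, Observation~\ref{obs:convex_combination_speeds} gives that a job $j$ of volume $v_j$ run with energy $E_j\in[v_j,2v_j]$ has processing time $x_j=\tfrac{3v_j-E_j}{2}\in[v_j/2,v_j]$, so energy and processing time of $\J_i$ are related linearly. Next I would compute the ordering-optimal flow for fixed processing times $x_1:=x_{(i,1)}\ge a_i/2$ and $x_2:=x_{(i,2)}$: a short packing argument (run $(i,1)$ during $[0,a_i/2]$, then insert $(i,2)$, then finish $(i,1)$, versus running $(i,1)$ to completion first) shows that completing $(i,2)$ first gives flow $x_1+2x_2$, completing $(i,1)$ first gives flow $2x_1+x_2-a_i/2$, and hence the best ordering gives flow $\min\{x_1+2x_2,\ 2x_1+x_2-a_i/2\}$, with the first expression being the smaller one precisely when $x_1\ge x_2+a_i/2$.

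It then remains to minimize this quantity over the split $E_{(i,1)}+E_{(i,2)}=E$ with $E_{(i,1)}\in[a_i,2a_i]$ and $E_{(i,2)}\in[2\delta a_i,4\delta a_i]$; this is a one-dimensional minimization of a piecewise-linear function, so the optimum sits at a breakpoint or endpoint and only finitely many candidate splits need to be compared. Using Observation~\ref{obs:BUA_order} to fix the completion order, I would show that for $E\in[E_i(\BS_i),2a_i]$ the optimum runs $(i,2)$ at $s_2$ and puts all remaining energy on $(i,1)$ (which completes second), so $f_i(E)=\tfrac{3a_i}{2}+4\delta a_i-\tfrac E2$; for $E\in[2a_i,2a_i+2\delta a_i]$ the optimum instead runs $(i,2)$ at $s_1$ and puts the remaining energy on $(i,1)$ (which now completes first, hence also affects $(i,2)$), so $f_i(E)=\tfrac{5a_i}{2}+4\delta a_i-E$; and for $E\in[2a_i+2\delta a_i,2a_i+4\delta a_i]$ the optimum runs $(i,1)$ at $s_2$ (so it finishes before $r_{i,2}$) and spends the rest on $(i,2)$, so $f_i(E)=\tfrac{3a_i}{2}+3\delta a_i-\tfrac E2$. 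One checks that these three affine pieces agree at the shared endpoints $E=2a_i$ and $E=2a_i+2\delta a_i$ (so $f_i$ is continuous), and their slopes $-\tfrac12$, $-1$, $-\tfrac12$ are exactly the asserted ratios $\tfrac12$, $1$, $\tfrac12$.

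The step I expect to be the main obstacle is the middle regime, and specifically the discontinuity in the optimal speed of $(i,2)$ (from $s_2$ in the first regime down to $s_1$ in the last two) that occurs at $E=2a_i$. The reason this is delicate is that as soon as $(i,1)$ becomes fast enough that it could finish before $r_{i,2}$, the optimal completion order flips and $(i,1)$ starts affecting $(i,2)$ as well; from then on an extra unit of energy spent accelerating $(i,1)$ improves two flow terms instead of one, which is worth enough that it pays to slow $(i,2)$ back down to $s_1$ just to free energy for $(i,1)$ --- this is exactly why the slope jumps from $-\tfrac12$ to $-1$ at $E=2a_i$ and halves again once $(i,1)$ saturates at $s_2$. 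I would establish this rigorously by the finite case comparison of candidate splits described above (the two relevant candidates being ``$(i,2)$ at $s_2$, rest to $(i,1)$'' and ``$(i,2)$ at $s_1$, $(i,1)$ as fast as possible''), and I would use the affection/shrinking-energy viewpoint only as a consistency check: by Lemma~\ref{lemma:delta}, accelerating job $j$ by $\eps$ units of energy changes the flow by $-\tfrac{\eps}{2}\kappa_j$, which predicts slopes $-\kappa_{(i,2)}/2,\,-\kappa_{(i,1)}/2,\,-\kappa_{(i,2)}/2$ with affection values $1,2,1$, matching the computation.
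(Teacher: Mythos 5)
Your derivation is correct, and it supplies a proof that the paper itself does not give: both Observation~\ref{obs:BUA_order} and Observation~\ref{obs:BUA_increase} are simply imported as facts ``appearing implicitly'' in Barcelo et al., whereas you reconstruct them from scratch via a closed form for the minimum package flow as a function of the package energy. The computation checks out. With $x_j=(3v_j-E_j)/2$, the two completion orders give package flows $g_A=x_1+2x_2$ and $g_B=2x_1+x_2-a_i/2$, each affine in the energy $e_2$ assigned to $(i,2)$ (decreasing for $g_A$, increasing for $g_B$); since $\min_{e_2}\min\{g_A,g_B\}=\min\{\min_{e_2}g_A,\min_{e_2}g_B\}$ and each inner minimum is attained at an endpoint of $e_2\in[\max(2\delta a_i,E-2a_i),\,\min(4\delta a_i,E-a_i)]$, one gets exactly your three pieces $\tfrac{3a_i}{2}+4\delta a_i-\tfrac{E}{2}$, $\tfrac{5a_i}{2}+4\delta a_i-E$, $\tfrac{3a_i}{2}+3\delta a_i-\tfrac{E}{2}$, which agree at the breakpoints $E=2a_i$ and $E=2a_i+2\delta a_i$ (values $\tfrac{a_i}{2}+4\delta a_i$ and $\tfrac{a_i}{2}+2\delta a_i$) and reproduce $F_i(\BS_i)=a_i+2\delta a_i$ at the left endpoint. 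The only blemish is in your closing consistency check: in the first regime the marginal energy is absorbed by $(i,1)$, not $(i,2)$ (job $(i,2)$ is already saturated at $s_2$ there), so the predicted slopes should read $-\kappa_{(i,1)}/2,\,-\kappa_{(i,1)}/2,\,-\kappa_{(i,2)}/2$ with affections $1,2,1$; the values and the conclusion are unaffected. If you write this up, do execute the endpoint comparison you currently defer with ``I would show'' --- as indicated above it is two lines per regime.
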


To prove Theorem~\ref{thm:prio_BUA}, we first show that for \budgetinstance, one can predict an optimal priority ordering. Note that since the job packages are independent, it is irrelevant whether or not a job has priority over a job from a different package.

\begin{lemma}\label{lemma:prio_BUA_opt}
    Let~$\sigma^*$ be an optimal schedule for \budgetinstance. Then~$\sigma^*$ adheres to any priority ordering with~$(i,2)\prec(i,1)$ for all~$i\in\{1,\dots,m\}$.
\end{lemma}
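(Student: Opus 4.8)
The plan is to argue that in an optimal schedule $\sigma^*$, within each independent package $\J_i$ it is never disadvantageous to let $(i,2)$ complete before $(i,1)$, and that whenever the energy allocation forces a genuine ordering choice, completing $(i,2)$ first is in fact optimal. Since the packages are released far enough apart that jobs of different packages never interact, it suffices to reason package by package: fix $i$ and consider how $\sigma^*$ uses its energy $E_i(\sigma^*)$ on the two jobs $(i,1)$ and $(i,2)$ of $\J_i$. I would first invoke Observation~\ref{obs:BUA_order} to split into the three energy regimes. In the regime $E_i(\sigma^*)\in[a_i+2\delta a_i,\,2a_i]$ the observation already gives $C_{i,2}<C_{i,1}$, so any priority ordering with $(i,2)\prec(i,1)$ is adhered to automatically; in the boundary case $E_i(\sigma^*)=2a_i$ the two orderings give the same flow, and by the stated convention we take $(i,2)$ first. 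The work is therefore in the higher-energy regimes, where a priori $(i,1)$ might complete first.

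Next I would show that an optimal schedule never spends energy $E_i(\sigma^*) > 2a_i$ on a single package in a way that would make $(i,1)$ complete before $(i,2)$; more precisely, that one can always assume $\sigma^*$ restricted to $\J_i$ processes the jobs in the order $(i,2)$ then $(i,1)$ without increasing flow or energy. The key tool is Observation~\ref{obs:BUA_increase} together with the base schedule $\BS_i$: the base schedule runs $(i,1)$ at $s_1$ and $(i,2)$ at $s_2$, completes $(i,1)$ last, costs $E_i(\BS_i)=a_i+4\delta a_i$, and has flow $F_i(\BS_i)=a_i+2\delta a_i$. Starting from $\BS_i$, the marginal flow savings per unit of additional energy are $\tfrac12$ up to $E_i=2a_i$, then $1$ up to $E_i=2a_i+2\delta a_i$, then back to $\tfrac12$ up to $E_i=2a_i+4\delta a_i$ — and throughout this process the optimal within-package ordering, as encoded in Observation~\ref{obs:BUA_order} and the convention at $E_i=2a_i$, keeps $(i,2)$ completing (weakly) before $(i,1)$. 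Hence for every achievable energy level $E_i(\sigma^*)$ there is an optimal way to spend it on $\J_i$ that respects $(i,2)\prec(i,1)$, and any schedule that completes $(i,1)$ first either wastes energy or has strictly larger flow, contradicting optimality of $\sigma^*$.

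Finally I would assemble the per-package conclusions: since the flow and energy of $\sigma^*$ decompose as $F(\sigma^*)=\sum_i F_i(\sigma^*)$ and $E(\sigma^*)=\sum_i E_i(\sigma^*)$ with the budget constraint coupling only the totals, replacing $\sigma^*$ on each package by the within-package-reordered schedule described above yields a schedule with the same total energy (hence still feasible) and no larger total flow; by optimality it is again optimal, and it adheres to every priority ordering with $(i,2)\prec(i,1)$ for all $i$. I expect the main obstacle to be the careful bookkeeping in the two high-energy regimes: one must verify that the marginal-savings structure of Observation~\ref{obs:BUA_increase}, combined with where Observation~\ref{obs:BUA_order} places the crossover of $C_{i,1}$ past $r_{i,2}$, never creates a situation in which running $(i,1)$ first is strictly better — the convention $E_i=2a_i \Rightarrow (i,2)$ first is exactly what is needed to rule out the one genuinely ambiguous point, and I would make sure that tie-breaking choice is stated explicitly before using it.
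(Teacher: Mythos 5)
Your proposal misreads Observation~\ref{obs:BUA_order} in a way that makes the plan unworkable. In the middle regime $E_i(\sigma^*)\in(2a_i,\,2a_i+2\delta a_i)$, that observation says the optimal within-package completion order is $C_{i,1}<C_{i,2}$ with $C_{i,1}>r_{i,2}$ --- i.e.\ job $(i,1)$ finishes \emph{after} $(i,2)$ is released but \emph{before} $(i,2)$, so the schedule genuinely violates the priority rule $(i,2)\prec(i,1)$, and reordering to $(i,2)$-first at the same energy level strictly increases flow. Your claim that ``throughout this process the optimal within-package ordering \dots\ keeps $(i,2)$ completing (weakly) before $(i,1)$'' and that one ``can always assume $\sigma^*$ restricted to $\J_i$ processes the jobs in the order $(i,2)$ then $(i,1)$ without increasing flow or energy'' is therefore false precisely in the one regime where there is something to prove. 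You also appear to conflate ``adheres to the priority ordering $(i,2)\prec(i,1)$'' with ``$(i,2)$ completes before $(i,1)$'': in the high-energy regime $E_i\in[2a_i+2\delta a_i,\,2a_i+4\delta a_i]$, job $(i,1)$ completes at or before $r_{i,2}$, so it completes first, yet the priority ordering is adhered to vacuously (the two jobs are never simultaneously available and unfinished). Under your own reasoning that case would ``contradict optimality,'' which it does not.

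Because of this, the package-by-package reordering you describe cannot establish the lemma; no within-package modification removes the conflict when $E_i(\sigma^*)$ lies in the middle regime. What is actually needed --- and what the paper does --- is a \emph{cross-package} exchange argument showing that no optimal schedule ever places a package's energy in $(2a_i,\,2a_i+2\delta a_i)$ at all: if some $\J_{i'}$ were in that regime (where the marginal flow saving per unit energy is $1$), then either there is budget slack, or some other package $\J_{i''}$ sits in a regime where the marginal saving is only $1/2$; shifting a small $\varepsilon'$ of energy from $\J_{i''}$ to $\J_{i'}$ then decreases total flow by $\varepsilon'/2$, contradicting optimality. You correctly identified the marginal-savings structure from Observation~\ref{obs:BUA_increase}, but you never deploy it across packages, and the argument you do make draws a conclusion that Observation~\ref{obs:BUA_order} explicitly contradicts.
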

\begin{proof}    
    First, note that by Observation~\ref{obs:BUA_order}, if~$E_i(\sigma^*)\notin (2a_i,2a_i+ 2\delta a_i)$, then~$\sigma^*$ adheres to the priority ordering~$(i,2)\prec(i,1)$. 
    We show that having~$E_i(\sigma^*)\in [2a_i,2a_i+ 2\delta a_i)$ contradicts the optimality of~$\sigma^*$. 
       
    Let~$L\subseteq \{1,\dots,m\}$ be such that in~$E_i(\sigma^*)\geq 2a_i$ if and only if~$i\in L$. In~\cite[proof of Lemma 1]{ComplexitySpeedScaling} it is shown that we must have~$\sum_{i\in L'}a_i\leq A$, otherwise the total energy consumption exceeds the budget~$B$.

    Suppose there is some~$i'\in L$ with~$E_{i'}(\sigma^*)\in [2a_{i'},2a_{i'}+ 2\delta a_{i'})$ and let~$\varepsilon= 2a_{i'}+ 2\delta a_{i'}-E_{i'}(\sigma^*)$. If for all~$i\in L$, we have~$E_i(\sigma^*)\leq 2a_i+ 2\delta a_i$, and for all~$i\notin L$, we have~$E_i(\sigma^*)\leq E_i(\BS_i)$, then
    \begin{align*}
        E(\sigma^*)&\leq \sum_{i\in L} (2a_i + 2\delta a_i) + \sum_{i\notin L} (a_i +4\delta a_i) -\varepsilon \\
        &\leq (1+4\delta)\sum_{i=1}^m a_i + \sum_{i\notin L}a_i - 2\delta a_i -\varepsilon\leq B - \sum_{i\in L} 2\delta a_i - \varepsilon\,.
    \end{align*}
    Hence, there must be some job package~$\J_{i''}$ with either~$E_{i''}(\sigma^*)\in(E_i(\BS_i),2a_i)$ or~$E_{i''}(\sigma^*)\in(2a_i+ 2\delta a_i,2a_i+ 4\delta a_i]$. From Observation~\ref{obs:BUA_increase}, it follows that there must be some~$0<\varepsilon'\leq \varepsilon$, such that decreasing the energy consumption of~$\J_{i''}$ and increasing the energy consumption of~$\J_{i'}$ by~$\varepsilon'$ gives an improvement in the flow of~$\frac{\varepsilon'}{2}$. This contradicts the optimality of~$\sigma^*$, and thus we must have that~$E_i(\sigma^*)\notin (2a_i,2a_i+ 2\delta a_i)$ for all~$i\in \{1,\dots,m\}$. It follows that~$\sigma^*$ satisfies any priority ordering with~$(i,2)\prec(i,1)$ for each~$\J_i$.
\end{proof}

\prioBUA*
\begin{proof}[Proof of Theorem~\ref{thm:prio_BUA}]
    Let~$\sigma^*$ be an optimal schedule for \budgetinstance. In~\cite{ComplexitySpeedScaling} it is shown that~$F(\sigma^*) \leq \sum_{i=1}^m F_i(\BS_i) -(\frac{1}{2}+\delta)A$ if and only if \subsetsuminstance is a YES-instance of \textsc{SubsetSum}. By Lemma~\ref{lemma:prio_BUA_opt}, it follows that this also holds for \budgetinstance\ with a fixed priority order where~$(i,2)\prec(i,1)$ for all~$i\in\{1,\dots,m\}$. We conclude that B-IDUA-P is $\mathsf{NP}$-hard, even when the priority ordering is known to be optimal.
\end{proof}

\section{Towards a Combinatorial Algorithm for FE-ID**-C}

Consider an instance of FE-IDUU with jobs~$1,\dots,n$ and speeds~$s_1,\dots,s_k$. We assume that~$r_1\leq \dots \leq r_n$. Since we have unit-size and unit-weight jobs, ordering the jobs from earliest to latest release date (FIFO) gives an optimal completion ordering for any speed profile $S(t)$. This follows from a simple exchange argument.
Hence, the problem is equivalent to finding optimal speeds for completion ordering $1\prec \dots \prec n$. In this section, we use only job speeds to describe a schedule.

\subsection{Unsuccessful FE-IDUU-Algorithm Generalizations to 3+ Speeds}\label{app:generalization}
Barcelo at al.~\cite{ComplexitySpeedScaling} show that the following algorithm computes optimal schedules for this variant. For ease of notation, we use~$\kappa_j$ and~$\Delta_j$ to denote the affection and shrinking energy of~$j$ during the algorithm (even though these values change throughout the algorithm).

\begin{algorithm}
\caption{An algorithm for FE-IDUU with two speeds (Barcelo et al.~\cite{ComplexitySpeedScaling}).\label{alg:UU_2speeds}}
\For{$j\in\{1,\dots,n\}$}{
    \While{$\kappa_j\geq \Delta_j$}{
        Increase the speed of~$j$.\
    }
}
\end{algorithm}

Note that, once a job~$j$ reaches speed~$s_2$, we have~$\Delta_j=\infty$, and thus the algorithm will not speed~$j$ up any further. 

Now consider an instance of FE-IDUU with an arbitrary number $k$ of speeds, say~$s_1,\dots,s_k$. In our opinion the following are the most natural generalizations of the two-speed algorithm.  In the first one, simply run Algorithm~\ref{alg:UU_2speeds}. This means that, even if a job~$j$ fully runs at~$s_2$, and~$\Delta_j$ has increased to~$\Delta_2$, the algorithm increases the speed of~$j$ as long as~$\kappa_j\geq \Delta_j$ holds.

For the second generalization, consider Algorithm~\ref{alg:generalization2}.

\begin{algorithm}
\caption{A naive generalization of Algorithm~\ref{alg:UU_2speeds}.\label{alg:UU_2speeds_gen}}
\label{alg:generalization2}
\For{$i\in\{1,\dots,k\}$}{
    \For{$j\in\{1,\dots,n\}$}{
    \While{$\kappa_j\geq \Delta_i$ and~$\Delta_j\leq\Delta_i$}{
        Increase the speed of~$j$.\
        }
    }
}
\end{algorithm}

In this case, we never increase the speed of a job beyond the speed~$s_i$ for the current value~$i$. So while Algorithm~\ref{alg:UU_2speeds} considers all the jobs exactly once, Algorithm~\ref{alg:UU_2speeds_gen} considers each job once for every speed.

To show that neither algorithm computes optimal schedules for FE-IDUU with more than two speeds,  consider the following instance. Let jobs~$\{1,2,3\}$  have release times~$r_1 = 0$,~$r_2=\frac{1}{3}$ and~$r_3=\frac{4}{3}$. Let~$s_1=1$,~$s_2=3$, and~$s_3=3$ be the available speeds with power consumptions~$P_1=1$,~$P_2=3+\alpha$, and~$P_3=6+\alpha$ for some~$0\leq \alpha\leq 1$. This gives us~$\Delta_1 = 1+\alpha$ and~$\Delta_2=3-\alpha$.

For Algorithm~\ref{alg:UU_2speeds}, note that while the speed of job 1 is lower than~$s_3$, it completes after~$r_2$. Furthermore, if job 2 runs at~$s_1$, then job 2 completes after~$r_3$. Hence, until~$j$ is fully sped up to~$s_3$, we have~$\kappa_1 =3 \geq \Delta_2$. Once it runs at speed~$s_3$, it completes exactly at~$r_2$, and job 2 completes exactly at~$r_3$. Thus, we have~$\kappa_1 = \kappa_2=\kappa_3=1$ and the algorithm will terminate. The resulting schedule has total  flow of~$2+\frac{1}{3}$ and energy consumption of~$4+\frac{\alpha}{3}$. This gives an objective value of
\begin{align*}
\Obj_1 = 6+\frac{2+2\alpha}{6}\,.
\end{align*}

For Algorithm~\ref{alg:UU_2speeds_gen}, job 1 is first sped up to~$s_2$, after which we move on to job 2. Before being sped up, job 2 completes at~$\frac{3}{2}= r_3 +\frac{1}{6}$. Thus, we have that~$\kappa_j=2\geq \Delta_1$ and will continue to speed up job 2 until it completes at~$r_3$. At this point, we have that~$\kappa_1=2$ and~$\kappa_2=\kappa_1 =1$, and the algorithm will terminate. The resulting schedule has total flow of~$2+\frac{1}{2}$ and energy consumption of~$3+\frac{2+2\alpha}{3}$.This gives an objective value of
\begin{align*}
\Obj_2 = 6+\frac{1+4\alpha}{6}\,.
\end{align*}
If~$\alpha<\frac{1}{2}$, then~$\Obj_1>\Obj_2$, and if~$\alpha>\frac{1}{2}$, then~$\Obj_1<\Obj_2$. It follows that neither algorithm computes optimal schedules for FE-IDUU with an arbitrary number of speeds.

\subsection{A Combinatorial Algorithm for FE-IDUU}\label{app:kappa_delta}
In this section, we prove Theorem \ref{alg:kappa_delta}. Recall that, since a FIFO ordering is always optimal, we use only job speeds to describe a schedule. We claim that initializing every job at speed~\(s_1\), and then running the following algorithm computes an optimal schedule.

\setcounter{algocf}{0}
\begin{algorithm}
\caption{The~$(\kappa-\Delta)$-Algorithm.\label{alg:kappa_delta}}
\SetKwInOut{Input}{input}\SetKwInOut{Output}{output}
\While{There is a job~$j$ with~$\kappa_j\geq \Delta_j$}{
    Increase the speed of job~$j^*:=\arg\max_j (\kappa_j- \Delta_j)$
    (break ties arbitrarily), until either
    $\kappa_{j^*}- \Delta_{j^*}\leq 0$, or 
    $\kappa_{j^*}- \Delta_{j^*}\neq \max_j (\kappa_j- \Delta_j)$.
}
\end{algorithm}

\KappaDelta*

To prove the correctness of Algorithm \ref{alg:kappa_delta}, we show that for any schedule that was not 
constructed by the algorithm, we can do the following. By slowing down one job and speeding up another, 
we either obtain an improvement in the objective, or a schedule that is ``more similar'' to the schedule 
constructed by Algorithm \ref{alg:kappa_delta}. To show that such an exchange exists, we exploit the 
fact that affection is easier to describe in a setting with unit-size and unweighted jobs. To use this, 
we introduce new terminology and prove several results on the affection in this setting. 
Algorithm \ref{alg:kappa_delta} starts with all jobs at speed~\(s_1\) and then increases the speed of the 
jobs in a specific order. To better compare schedules to the one from the algorithm, we define a way to 
construct schedules that is similar to the algorithm, but can be used to reconstruct any given schedule. 
We use this construction to determine which jobs we need for the exchange.

Since the jobs are unit-size and are ordered from earliest to latest release time, jobs are never preempted. 
Hence, the affection set of a job~\(j\) is always of the form~\(\{j,j+1,\dots,j+\ell\}\) where we 
have~\(\ell\in\{0,\dots,n-1\}\). The following definition captures this and will help us describe affection 
throughout the algorithm.

\begin{definition}[Affection chain]
    For a schedule~\(\sigma\), an \emph{affection chain} is a maximal sequence of jobs~\((j,\dots, j+\ell)\) such that job $j$ affects job $j+\ell$. 
\end{definition}
A \emph{lower affection chain} is defined analogously with lower affection. Note that for a schedule~\(\sigma\) and job~\(j\), there is a unique (lower) affection chain that contains~\(j\). We denote this chain by~\(\K_j(\sigma)\) (and~\(\K^+_j(\sigma)\) for the lower affection). Hence, for jobs~\(j,j'\in\{1,\dots,n\}\), we have that~\(j'\in\K_j(\sigma)\) implies~\(j'\in \K_j^+(\sigma)\).

\begin{observation}\label{obs:affection_from_chain}
    For a schedule~\(\sigma\) and job~\(j\), let~\(j'\) and~\(j^+\) be the last jobs in~\(\K_j(\sigma)\) and~\(\K^+_j(\sigma)\) respectively. Then~\(\kappa_{j}(\sigma)= j'-j+1\) and~\(\kappa^+_{j}(\sigma)= j^+-j+1\).
\end{observation}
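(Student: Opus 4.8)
The plan is to prove the stronger structural fact that the affection set is exactly the affection chain, i.e.\ $K_j(\sigma)=\K_j(\sigma)=\{j,j+1,\dots,j'\}$ (and similarly for the lower variants). Since the jobs have unit weight, $\kappa_j(\sigma)=\sum_{j''\in K_j(\sigma)}w_{j''}=|K_j(\sigma)|$, and this set has $j'-j+1$ elements, so the statement follows at once; the same reasoning handles $\kappa^+_j(\sigma)$.

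First I would record the two monotonicity facts the whole argument rests on: since the jobs are processed in index order without preemption, the completion times are non-decreasing, $C_1(\sigma)\le\dots\le C_n(\sigma)$, and by the ordering assumption $r_1\le\dots\le r_n$. The inclusion $\K_j(\sigma)\subseteq K_j(\sigma)$ is then immediate from transitivity of affection together with the fact that $j$ affects itself. For the reverse inclusion, the key step --- and the only place where the unit-size structure is really used --- is a ``fill-in'' claim: if job $a$ directly affects job $b$ (so $C_a(\sigma)\le C_b(\sigma)$ and $C_a(\sigma)>r_b$) and $a<k<b$, then $a$ directly affects $k$ and $k$ directly affects $b$. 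Indeed $C_a(\sigma)\le C_k(\sigma)\le C_b(\sigma)$ by monotonicity of completion times; from $C_a(\sigma)>r_b\ge r_k$ we get $C_a(\sigma)>r_k$; and from $C_k(\sigma)\ge C_a(\sigma)>r_b$ we get $C_k(\sigma)>r_b$.

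With the fill-in claim in hand, take any $j''\in K_j(\sigma)$. Unfolding the transitive closure gives a chain $j=j_0<j_1<\dots<j_p=j''$ of direct affections (dropping repetitions, which is possible since completion times are non-decreasing in the index); applying the claim between each consecutive pair $j_t,j_{t+1}$ shows that $j$ directly affects $j+1$, $j+1$ directly affects $j+2$, \dots, $j''-1$ directly affects $j''$. Hence $(j,j+1,\dots,j'')$ is contained in a maximal consecutive-affection sequence, which, being the unique such sequence through $j$, is $\K_j(\sigma)$; therefore $j''\le j'$ and $j''\in\K_j(\sigma)$. This yields $K_j(\sigma)=\{j,\dots,j'\}$ and hence $\kappa_j(\sigma)=j'-j+1$. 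The lower-affection case is verbatim the same, replacing ``$C_a(\sigma)>r_b$'' by ``$C_a(\sigma)\ge r_b$'' throughout the definition of (direct) lower affection; the two monotonicity facts and the fill-in step go through unchanged, giving $K^+_j(\sigma)=\{j,\dots,j^+\}$ and $\kappa^+_j(\sigma)=j^+-j+1$.

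I do not expect a genuine obstacle here: the one point requiring care is that the fill-in step invokes the monotonicity of completion times and of release times \emph{simultaneously}, and this is exactly what collapses the transitive closure of the affection relation onto an interval of consecutive indices --- which is special to the unit-size FIFO setting.
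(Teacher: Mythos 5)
The paper states this observation without any proof, evidently regarding it as self-evident in the FIFO unit-size, unit-weight setting; your argument supplies the missing justification and the substantive part of it is correct. The key step is precisely your ``fill-in'' claim: if $a$ directly affects $b$ and $a<k<b$, then by $C_a\le C_k\le C_b$ and $C_a> r_b\ge r_k$ (resp.\ $\ge$ for lower affection) one gets $a$ affects $k$ and $k$ affects $b$, which collapses the transitive closure onto an interval of consecutive indices. That, together with the monotonicity of completion times (so $j$ can never affect a job with smaller index), yields $K_j(\sigma)=\{j,j+1,\dots,j'\}$, and the unit weights give $\kappa_j(\sigma)=j'-j+1$; the lower-affection case is identical with $\ge$ replacing $>$.

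One small inaccuracy in your write-up: you open by asserting $K_j(\sigma)=\K_j(\sigma)$, and later claim the inclusion $\K_j(\sigma)\subseteq K_j(\sigma)$ is ``immediate''. This is not quite right. Per the paper's definition, $\K_j(\sigma)$ is the unique affection chain \emph{containing} $j$, which may extend \emph{before} $j$: if some $j_0<j$ directly affects $j_0+1$, \ldots, and $j-1$ directly affects $j$, then $j_0,\dots,j-1\in\K_j(\sigma)$ but these are not affected by $j$, so $\K_j(\sigma)\not\subseteq K_j(\sigma)$. This is exactly why the observation is phrased in terms of the \emph{last} job $j'$ of the chain and the count $j'-j+1$ rather than $|\K_j(\sigma)|$. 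The inclusion you actually need (and use) for the easy direction is $\{j,j+1,\dots,j'\}\subseteq K_j(\sigma)$, which follows since consecutive jobs from $j$ to $j'$ are direct affections in the chain, hence transitively $j$ affects all of them. With that one correction the proof is complete.
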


\begin{observation}\label{obs:overlap_chain}
    For a schedule~\(\sigma\) and jobs~\(j_1,j_2\), if~\(\K_{j_1}(\sigma)= \K_{j_2}(\sigma)\), then 
    \[
    \kappa_{j_1}(\sigma)-\kappa_{j_2}(\sigma)  = j_2-j_1\,.
    \]
    If~\(\K^+_{j_1}(\sigma)= \K^+_{j_2}(\sigma)\), then 
    \[
    \kappa^+_{j_1}(\sigma)-\kappa^+_{j_2}(\sigma) = j_2-j_1\,.
    \]
\end{observation}
\begin{proof}
    The statement follows directly from Observation \ref{obs:affection_from_chain}.
\end{proof}

Algorithm \ref{alg:kappa_delta} starts with all jobs at the lowest speed and then only speeds them up (it never reduces the speed of any job). Hence, we start with a number of affection chains, which can only ``break'' into smaller chains during the algorithm. The following observation shows that the affection of jobs earlier in the chain will change the most when such a break occurs.

\begin{observation}\label{obs:earlier_loses_more}
    For schedules~\(\sigma\) and~\(\sigma'\) and jobs~\(j_1<j_2\), if~\(\K_{j_1}(\sigma)=\K_{j_2}(\sigma)\), then 
    \[
    \kappa_{j_1}(\sigma)- \kappa_{j_1}(\sigma')  \geq \kappa_{j_2}(\sigma)- \kappa_{j_2}(\sigma')\,.
    \]
\end{observation}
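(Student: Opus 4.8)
Let $\sigma$ and $\sigma'$ be two schedules and let $j_1 < j_2$. If $\K_{j_1}(\sigma) = \K_{j_2}(\sigma)$, then $\kappa_{j_1}(\sigma) - \kappa_{j_1}(\sigma') \geq \kappa_{j_2}(\sigma) - \kappa_{j_2}(\sigma')$.

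The plan is to reduce everything to Observation~\ref{obs:affection_from_chain}, which says that for a job $j$ the affection $\kappa_j(\tau)$ equals $(\text{last job of }\K_j(\tau)) - j + 1$. So I would introduce notation: for a schedule $\tau$ and job $j$, write $e_j(\tau)$ for the index of the last job in the affection chain $\K_j(\tau)$, so that $\kappa_j(\tau) = e_j(\tau) - j + 1$. The claimed inequality then becomes, after the $+1$ and the $-j_1$ (resp. $-j_2$) terms cancel on each side, simply $e_{j_1}(\sigma) - e_{j_1}(\sigma') \ge e_{j_2}(\sigma) - e_{j_2}(\sigma')$, i.e. $e_{j_1}(\sigma) - e_{j_2}(\sigma) \ge e_{j_1}(\sigma') - e_{j_2}(\sigma')$. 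By hypothesis $\K_{j_1}(\sigma) = \K_{j_2}(\sigma)$, so $e_{j_1}(\sigma) = e_{j_2}(\sigma)$ and the left-hand side is $0$; it therefore suffices to show $e_{j_1}(\sigma') \le e_{j_2}(\sigma')$, i.e. that the affection chain of $j_1$ in $\sigma'$ ends no later than the affection chain of $j_2$ in $\sigma'$.

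First I would observe that, because jobs are released in the order $r_1 \le \dots \le r_n$ and hence never preempted, every affection chain is an interval $\{a, a+1, \dots, b\}$ of consecutive job indices, and any two affection chains of a fixed schedule are either disjoint or equal. Now take $j_1 < j_2$. If $\K_{j_1}(\sigma') = \K_{j_2}(\sigma')$ we are done since then $e_{j_1}(\sigma') = e_{j_2}(\sigma')$. Otherwise the two chains are disjoint intervals; since $j_1 < j_2$ and each chain contains its generating job, the chain containing $j_1$ lies entirely to the left of the chain containing $j_2$, so in particular $e_{j_1}(\sigma') < j_2 \le e_{j_2}(\sigma')$. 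Either way $e_{j_1}(\sigma') \le e_{j_2}(\sigma')$, which combined with $e_{j_1}(\sigma) = e_{j_2}(\sigma)$ gives $e_{j_1}(\sigma) - e_{j_1}(\sigma') \ge e_{j_2}(\sigma) - e_{j_2}(\sigma')$, and translating back through $\kappa_j = e_j - j + 1$ yields the claim.

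The only mild subtlety — and the one place I would be careful — is justifying that affection chains of a single schedule partition $\{1,\dots,n\}$ into consecutive blocks, so that "$j_1 < j_2$ in different chains" forces the whole chain of $j_1$ to precede that of $j_2$. This is exactly the structural remark made in the paragraph introducing the affection chain definition (no preemption $\Rightarrow$ affection sets have the form $\{j, j+1, \dots, j+\ell\}$), so I would simply cite that, together with the transitivity clause in the definition of affection, to conclude that chains are maximal consecutive intervals and hence pairwise disjoint-or-equal. With that in hand the argument above is a one-line index comparison. I expect no real obstacle; the proof is essentially bookkeeping on top of Observation~\ref{obs:affection_from_chain}.
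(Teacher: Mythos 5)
Your proof is correct and follows essentially the same path as the paper's: both rely on Observation~\ref{obs:affection_from_chain} to translate affection into endpoints of chains and then case on whether $j_1$ and $j_2$ lie in the same chain of $\sigma'$ (the paper's two cases $j' \ge j_2$ and $j' < j_2$ correspond exactly to your equal/disjoint dichotomy). Your reformulation via $e_j(\tau)$, reducing everything to the single clean inequality $e_{j_1}(\sigma') \le e_{j_2}(\sigma')$, is a slightly tidier presentation of the same idea, but not a different argument.
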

\begin{proof}
    Let~\(j^*\) be the last job in~\(\K_{j_1}(\sigma)=\K_{j_2}(\sigma)\). Then by Observation \ref{obs:affection_from_chain}, we have that~\(\kappa_{j_1}(\sigma)=j^*-j_1+1\) and~\(\kappa_{j_2}(\sigma)=j^*-j_2+1\). Let~\(j'\) be the last job in~\(\K_{j_1}(\sigma')\). If~\(j'\geq j_2\), then~\(\K_{j_1}(\sigma')= \K_{j_2}(\sigma')\) and the result follows from Observation \ref{obs:overlap_chain}. If~\(j'<j_2\), then~\(\kappa_{j_1}(\sigma')< j_2 - j_1+1\). Since~\(\kappa_{j_2}(\sigma')\geq 1\), we obtain that
    \[
    \kappa_{j_1}(\sigma)- \kappa_{j_1}(\sigma') > j^*-j_2 \geq \kappa_{j_2}(\sigma)- \kappa_{j_2}(\sigma')\,.\qedhere
    \]
\end{proof}

Suppose we extend the processing time of one job and shrink another by the same amount. In an optimal schedule, such an exchange cannot improve the objective value. Hence, if we can find an exchange that leads to an improvement, then we can prove a schedule is not optimal. To better describe this, we introduce the following notation.

Consider some schedule~\(\sigma\) and jobs~\(j_1,j_2\). For any~\(\varepsilon>0\), let~\(\sigma^{+\varepsilon}_{j_1}\) be a schedule with~\(x_j(\sigma^{+\varepsilon}_{j_1})= x_j(\sigma)\) for all~\(j\neq j_1\) and~\(x_{j_1}(\sigma^{+\varepsilon}_{j_1})= x_{j_1}(\sigma)+\varepsilon\). Let~\(\sigma^{-\varepsilon}_{j_2}\) be a schedule with~\(x_j(\sigma^{-\varepsilon}_{j_2})= x_j(\sigma)\) for all~\(j\neq j_2\) and~\(x_{j_2}(\sigma^{-\varepsilon}_{j_2})= x_{j_2}(\sigma)-\varepsilon\). Finally, let~\(\sigma_{j_1 j_2}^\varepsilon\) be the schedule with 
\begin{itemize}
    \item~\(x_j(\sigma_{j_1 j_2}^\varepsilon)= x_j(\sigma)\) for all~\(j\) with~\(j\neq j_1\) and~\(j\neq j_2\),
    \item~\(x_{j_1}(\sigma_{j_1 j_2}^\varepsilon)= x_{j_1}(\sigma)+\varepsilon\)
    \item~\(x_{j_2}(\sigma_{j_1 j_2}^\varepsilon)= x_{j_2}(\sigma)-\varepsilon\).
\end{itemize}

\begin{observation}\label{obs:upper_becomes_lower}
    For any schedule~\(\sigma\) and job~\(j\), there is an~\(\varepsilon>0\) such that~\(\kappa_j(\sigma_j^{+\varepsilon})= \kappa_j^+(\sigma)\).
\end{observation}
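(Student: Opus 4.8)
The plan is to prove that, for all sufficiently small $\varepsilon>0$, the affection chain $\K_j(\sigma_j^{+\varepsilon})$ equals the lower affection chain $\K^+_j(\sigma)$ as sets of jobs; the claimed identity $\kappa_j(\sigma_j^{+\varepsilon})=\kappa^+_j(\sigma)$ then follows immediately from Observation~\ref{obs:affection_from_chain}. Throughout I use the structure of this setting: the jobs are indexed by non-decreasing release time, are processed in FIFO order without preemption, and $\sigma_j^{+\varepsilon}$ is the FIFO schedule with the prescribed processing times, so that completion times obey the recursion $C_1(\tau)=r_1+x_1(\tau)$ and $C_i(\tau)=\max\{C_{i-1}(\tau),r_i\}+x_i(\tau)$ for $i\ge 2$. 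I also assume $s_j(\sigma)>s_1$, i.e.\ $x_j(\sigma)<x_j^1$, so that $\sigma_j^{+\varepsilon}$ is a well-defined feasible schedule for $\varepsilon\le x_j^1-x_j(\sigma)$ (this is the only setting in which the exchange arguments invoke the observation).

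The core step is a computation showing how completion times shift along the chain. Write $\K^+_j(\sigma)=(j,j+1,\dots,j+\ell)$. Since increasing $x_j$ leaves jobs $1,\dots,j-1$ untouched, the recursion gives $C_j(\sigma_j^{+\varepsilon})=C_j(\sigma)+\varepsilon$. I then show by induction on $i\in\{0,\dots,\ell\}$ that $C_{j+i}(\sigma_j^{+\varepsilon})=C_{j+i}(\sigma)+\varepsilon$. For the inductive step, consecutiveness in the lower affection chain means $j+i$ lower-affects $j+i+1$ in $\sigma$, hence $C_{j+i}(\sigma)\ge r_{j+i+1}$; by the inductive hypothesis $C_{j+i}(\sigma_j^{+\varepsilon})=C_{j+i}(\sigma)+\varepsilon> r_{j+i+1}$, so in the recursion for $C_{j+i+1}$ the maximum is attained by the previous completion time both in $\sigma$ and in $\sigma_j^{+\varepsilon}$, and the $+\varepsilon$ shift propagates verbatim. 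Consequently, for every $i\in\{0,\dots,\ell-1\}$ we get $C_{j+i}(\sigma_j^{+\varepsilon})=C_{j+i}(\sigma)+\varepsilon\ge r_{j+i+1}+\varepsilon> r_{j+i+1}$, so $j+i$ \emph{strictly} affects $j+i+1$ in $\sigma_j^{+\varepsilon}$; by transitivity $\K^+_j(\sigma)\subseteq \K_j(\sigma_j^{+\varepsilon})$, giving $\kappa_j(\sigma_j^{+\varepsilon})\ge\kappa^+_j(\sigma)$.

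For the reverse inclusion I pick $\varepsilon$ a bit smaller. If $j+\ell=n$ there is nothing to prove, since $\kappa_j(\sigma_j^{+\varepsilon})\le n-j+1=\kappa^+_j(\sigma)$ trivially. Otherwise, maximality of $\K^+_j(\sigma)$ says $j+\ell$ does not lower-affect $j+\ell+1$ in $\sigma$, i.e.\ $C_{j+\ell}(\sigma)<r_{j+\ell+1}$; choosing $\varepsilon<r_{j+\ell+1}-C_{j+\ell}(\sigma)$ (in addition to $\varepsilon\le x_j^1-x_j(\sigma)$) and using the shift computed above yields $C_{j+\ell}(\sigma_j^{+\varepsilon})=C_{j+\ell}(\sigma)+\varepsilon<r_{j+\ell+1}$, so $j+\ell$ does not affect $j+\ell+1$ in $\sigma_j^{+\varepsilon}$. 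Hence $\K_j(\sigma_j^{+\varepsilon})=(j,\dots,j+\ell)=\K^+_j(\sigma)$ and $\kappa_j(\sigma_j^{+\varepsilon})=\ell+1=\kappa^+_j(\sigma)$.

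The main obstacle — and the reason the statement is not entirely immediate — is the boundary situation $C_{j+i}(\sigma)=r_{j+i+1}$, where $j+i$ lower-affects but does not affect $j+i+1$ in $\sigma$; the perturbation must turn each such edge into a strict-affection edge. The inductive claim that every completion time along the chain increases by \emph{exactly} $\varepsilon$ (rather than by some smaller amount absorbed into slack) is precisely what guarantees the strict inequality $C_{j+i}(\sigma_j^{+\varepsilon})>r_{j+i+1}$ propagates past these boundary edges. The only other thing to check is that no edge outside the chain is created or destroyed: destruction is impossible because all relevant completion times weakly increase, and the single potential new edge, at the end of the chain, is ruled out by the choice of $\varepsilon$.
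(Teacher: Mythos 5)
Your proof is correct and takes essentially the same approach as the paper's: pick $\varepsilon$ no larger than $r_{j+\ell+1}-C_{j+\ell}(\sigma)$ (where $j+\ell$ is the last job in $\mathcal{K}^+_j(\sigma)$), then show the completion times along the lower-affection chain each shift by exactly $\varepsilon$, so every lower-affection edge becomes a strict affection edge while no new edge is created at the end. The only difference is that you spell out, via an explicit induction on the FIFO recursion, the fact that $C_{j+i}(\sigma_j^{+\varepsilon})=C_{j+i}(\sigma)+\varepsilon$ along the chain, a step the paper states without proof; your tightening of the bound to a strict inequality $\varepsilon<r_{j+\ell+1}-C_{j+\ell}(\sigma)$ is unnecessary (equality still prevents a new affection edge, since affection requires $C_{j+\ell}>r_{j+\ell+1}$) but harmless.
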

\begin{proof}
    Let~\(j+\ell\) be the last job in~\(\K^+_j(\sigma)\), and thus $\kappa_j^+(\sigma)=\ell+1$. 
    If~\(j+\ell=n\), then the statement holds for any~\(\varepsilon>0\). 
    Otherwise, by the definition of lower affection, we have~\(C_{j+i}(\sigma)\geq r_{j+i+1}\) for all~\(i\in\{0,\dots,\ell-1\}\) and~\(C_{j+\ell}(\sigma)<r_{j+\ell+1}\). 
    Let
    \[
    \varepsilon\leq r_{j+\ell+1}-C_{j+\ell}(\sigma)\,,
    \] 
    then~\(C_{j+i}(\sigma_j^{+\varepsilon})=r_{j+\ell+1} \geq r_{j+i+1}+\varepsilon>r_{j+i+1}\) for all~\(i\in\{0,\dots,\ell-1\}\), and~\(C_{j+\ell}(\sigma_j^{+\varepsilon})\leq r_{j+\ell+1}\). Thus,~\(\kappa_j(\sigma_j^{+\varepsilon})=\ell+1= \kappa_j^+(\sigma)\). 
\end{proof}

The affection of job~\(j_2\) in~\(\sigma^{+\varepsilon}_{j_1}\) may differ compared to~\(\sigma\). 
Hence, describing how the objective changes from~\(\sigma\) to~\(\sigma_{j_1 j_2}^\varepsilon\) is 
not always straightforward. The following lemma helps us identify when~\(\sigma_{j_1 j_2}^\varepsilon\) 
gives an improvement. For a schedule~\(\sigma\), let~\(\Obj(\sigma)\) denote the objective value of~\(\sigma\). 

\begin{lemma}\label{lemma:exchange_cases}
    Consider some schedule~\(\sigma\) and jobs~\(j_1,j_2\). If at least one of the following conditions holds:
    \begin{itemize}
        \item \(\K_{j_1}(\sigma)=\K_{j_2}(\sigma)\), or
        \item \(j_1<j_2\) and~\(\K^+_{j_1}(\sigma)=\K^+_{j_2}(\sigma)\),
    \end{itemize}
    then there is an~\(\varepsilon>0\) such that
    \[
    \Obj(\sigma^\varepsilon_{j_1j_2}) = \Obj(\sigma) - \varepsilon\left(\left(\kappa^+_{j_2}(\sigma)-\Delta_{j_2}(\sigma)\right) - \left(\kappa^+_{j_1}(\sigma)-\Delta^+_{j_1}(\sigma)\right)\right)\,.
    \]
\end{lemma}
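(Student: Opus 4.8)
The plan is to combine Observation~\ref{obs:affection} and Lemma~\ref{lemma:delta} to track how the flow and energy change under the three single-job perturbations~$\sigma^{+\varepsilon}_{j_1}$,~$\sigma^{-\varepsilon}_{j_2}$, and finally~$\sigma^{\varepsilon}_{j_1j_2}$, and then to argue that the affection of~$j_2$ in~$\sigma^{+\varepsilon}_{j_1}$ is no larger than its lower affection~$\kappa^+_{j_2}(\sigma)$ under the stated hypotheses. First I would pick~$\varepsilon>0$ small enough that all of the following hold simultaneously: (i) the completion-time ordering is preserved in each of the three modified schedules; (ii)~$\varepsilon$ is within the range required by Lemma~\ref{lemma:delta} applied to~$j_1$ being expanded and~$j_2$ being shrunk (so that no given speed threshold is crossed by either job); (iii)~$\varepsilon$ is small enough that the relevant (lower) affection sets/chains stabilize, in the sense of Observation~\ref{obs:upper_becomes_lower} for~$j_1$. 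Such an~$\varepsilon$ exists since each condition only excludes an interval of the form~$(0,\varepsilon_i]$'s complement and we take the minimum.

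Next I would compute the energy change. By Lemma~\ref{lemma:delta}, expanding~$j_1$ by~$\varepsilon$ changes energy by~$-\varepsilon\Delta^+_{j_1}(\sigma)$ and shrinking~$j_2$ by~$\varepsilon$ changes it by~$+\varepsilon\Delta_{j_2}(\sigma)$; since these perturbations affect disjoint jobs, the total energy change from~$\sigma$ to~$\sigma^{\varepsilon}_{j_1j_2}$ is~$\varepsilon(\Delta_{j_2}(\sigma)-\Delta^+_{j_1}(\sigma))$. For the flow, I would be careful about the order in which the perturbations are applied. Apply the expansion of~$j_1$ first: by Observation~\ref{obs:affection}, the flow increases by~$\varepsilon\kappa^+_{j_1}(\sigma)$, and by the choice of~$\varepsilon$ (Observation~\ref{obs:upper_becomes_lower}) the affection chains are such that after this step the affection of~$j_2$ is exactly~$\kappa_{j_2}(\sigma^{+\varepsilon}_{j_1})$. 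Then apply the shrink of~$j_2$: by Observation~\ref{obs:affection} the flow decreases by~$\varepsilon\kappa_{j_2}(\sigma^{+\varepsilon}_{j_1})$. The crux is then the inequality~$\kappa_{j_2}(\sigma^{+\varepsilon}_{j_1})\le\kappa^+_{j_2}(\sigma)$, which I would justify as follows. In the case~$\K_{j_1}(\sigma)=\K_{j_2}(\sigma)$, expanding~$j_1$ can only extend completion times of jobs at or after~$j_1$ in the chain, hence can only enlarge~$j_2$'s affection up to that of the chain, which is bounded by~$\kappa^+_{j_2}(\sigma)$; more carefully, since~$j_1,j_2$ lie in a common affection chain in~$\sigma$, expanding~$j_1$ keeps them in a common lower-affection chain, so~$\kappa_{j_2}(\sigma^{+\varepsilon}_{j_1})\le\kappa^+_{j_2}(\sigma^{+\varepsilon}_{j_1})=\kappa^+_{j_2}(\sigma)$, the last equality by smallness of~$\varepsilon$. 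In the case~$j_1<j_2$ with~$\K^+_{j_1}(\sigma)=\K^+_{j_2}(\sigma)$, the lower-affection chain through~$j_1$ already contains~$j_2$ and its tail, and expanding~$j_1$ does not move that tail past its next release time (again by the choice of~$\varepsilon$), so~$j_2$'s affection in~$\sigma^{+\varepsilon}_{j_1}$ is still bounded by the tail of~$\K^+_{j_2}(\sigma)$, i.e.~by~$\kappa^+_{j_2}(\sigma)$.

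Putting the pieces together, the total change in objective from~$\sigma$ to~$\sigma^{\varepsilon}_{j_1j_2}$ is at most
\[
\varepsilon\kappa^+_{j_1}(\sigma) - \varepsilon\kappa^+_{j_2}(\sigma) + \varepsilon\bigl(\Delta_{j_2}(\sigma)-\Delta^+_{j_1}(\sigma)\bigr)
= -\varepsilon\Bigl(\bigl(\kappa^+_{j_2}(\sigma)-\Delta_{j_2}(\sigma)\bigr) - \bigl(\kappa^+_{j_1}(\sigma)-\Delta^+_{j_1}(\sigma)\bigr)\Bigr),
\]
which is exactly the claimed bound on~$\Obj(\sigma^{\varepsilon}_{j_1j_2})$. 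I expect the main obstacle to be the bookkeeping in the flow step: one must apply the expansion and the shrink in the right order and verify that the intermediate schedule~$\sigma^{+\varepsilon}_{j_1}$ has the affection structure needed so that Observation~\ref{obs:affection} yields a decrease of~$\varepsilon\kappa_{j_2}(\sigma^{+\varepsilon}_{j_1})$ with~$\kappa_{j_2}(\sigma^{+\varepsilon}_{j_1})\le\kappa^+_{j_2}(\sigma)$ — this is where the two case hypotheses on the affection chains are used, and where the earlier observations (\ref{obs:affection_from_chain}, \ref{obs:overlap_chain}, \ref{obs:upper_becomes_lower}) do the work.
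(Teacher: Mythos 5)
The overall decomposition you use --- applying Observation~\ref{obs:affection} and Lemma~\ref{lemma:delta} to the two single-job perturbations, tracking the intermediate schedule~$\sigma^{+\varepsilon}_{j_1}$, and relating~$\kappa_{j_2}(\sigma^{+\varepsilon}_{j_1})$ to~$\kappa^+_{j_2}(\sigma)$ via the chain structure --- matches the paper's. But the key inequality you derive points in the wrong direction. You claim and argue~$\kappa_{j_2}(\sigma^{+\varepsilon}_{j_1}) \le \kappa^+_{j_2}(\sigma)$, whereas the computation you then carry out requires~$\kappa_{j_2}(\sigma^{+\varepsilon}_{j_1}) \ge \kappa^+_{j_2}(\sigma)$: the change in flow from~$\sigma$ to~$\sigma^{\varepsilon}_{j_1 j_2}$ is~$\varepsilon\kappa^+_{j_1}(\sigma) - \varepsilon\kappa_{j_2}(\sigma^{+\varepsilon}_{j_1})$, and to upper-bound this by~$\varepsilon\kappa^+_{j_1}(\sigma) - \varepsilon\kappa^+_{j_2}(\sigma)$ one must \emph{lower}-bound~$\kappa_{j_2}(\sigma^{+\varepsilon}_{j_1})$, not upper-bound it. With only~$\le$, the resulting inequality on~$\Obj(\sigma^{\varepsilon}_{j_1j_2})$ faces the wrong way, so the write-up as stated does not establish the claim.

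The fact you actually need --- indeed an equality,~$\kappa_{j_2}(\sigma^{+\varepsilon}_{j_1}) = \kappa^+_{j_2}(\sigma)$ --- does hold, and it is what the paper proves. The mechanism is exactly what you gesture at informally: a small expansion of~$j_1$ turns~$j_1$'s lower-affection chain into its affection chain (Observation~\ref{obs:upper_becomes_lower}), and since~$j_2$ lies in the same lower-affection chain as~$j_1$ in~$\sigma$ (in both cases of the hypothesis), the common affection chain of~$j_1$ and~$j_2$ in~$\sigma^{+\varepsilon}_{j_1}$ has the same tail as~$\K^+_{j_2}(\sigma)$. The paper pins this down via Observation~\ref{obs:overlap_chain} (affection differences within a common chain are just index differences,~$j_2-j_1$), combined with~$\kappa_{j_1}(\sigma^{+\varepsilon'}_{j_1}) = \kappa^+_{j_1}(\sigma)$, to conclude~$\kappa_{j_2}(\sigma^{+\varepsilon'}_{j_1}) = \kappa^+_{j_2}(\sigma)$ directly. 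Replacing your~$\le$ with this equality (or with~$\ge$) and justifying it along these lines repairs the argument; as written it does not go through.
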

\begin{proof}
    First, note that if~\(\K_{j_1}(\sigma)=\K_{j_2}(\sigma)\), then~\(j_1\) and~\(j_2\) remain in the same affection chain in~\(\sigma_{j_1}^{+\varepsilon}\) for any~\(\varepsilon>0\). Next, assume that~\(j_1<j_2\) and~\(\K^+_{j_1}(\sigma)=\K^+_{j_2}(\sigma)\). By Observation \ref{obs:upper_becomes_lower} we can chose an~\(\varepsilon'>0\) small enough such that~\(
    \kappa_{j_1}(\sigma_{j_1}^{+\varepsilon'}) = \kappa_{j_1}^+(\sigma)\). Since~\(j_1<j_2\), it follows that~\(j_2\in\K_{j_1}(\sigma_{j_1}^{+\varepsilon'})\).
     
     Thus, if one of the two conditions holds, we have that~\(\K_{j_1}(\sigma_{j_1}^{+\varepsilon'}) = \K_{j_2}(\sigma_{j_1}^{+\varepsilon'})\). By Observation \ref{obs:overlap_chain}, and using that~\(\kappa_{j_1}(\sigma_{j_1}^{+\varepsilon'}) = \kappa_{j_1}^+(\sigma)\), we obtain that 
    \[
    \kappa^+_{j_1}(\sigma) - \kappa^+_{j_2}(\sigma) = \kappa^+_{j_1}(\sigma) - \kappa_{j_2}(\sigma_{j_1}^{+\varepsilon'})\,.
    \]
    From Observation \ref{obs:affection} and Lemma \ref{lemma:delta}, we obtain that there must be an~\(0<\varepsilon\leq \varepsilon'\) such that
    \begin{align*}
        \Obj(\sigma^{\varepsilon}_{j_1j_2}) &= \Obj(\sigma) + \varepsilon\left(\kappa_{j_1}^+(\sigma)- \Delta_{j_1}^+(\sigma)\right) - \varepsilon\left(\kappa_{j_2}(\sigma_{j_1}^{+\varepsilon'})- \Delta_{j_2}(\sigma_{j_1}^{+\varepsilon'})\right) \\
        &= \Obj(\sigma) - \varepsilon\left(\left(\kappa^+_{j_2}(\sigma)-\Delta_{j_2}(\sigma)\right) - \left(\kappa^+_{j_1}(\sigma)-\Delta^+_{j_1}(\sigma)\right)\right)\,. \qedhere
    \end{align*}
\end{proof}

\subsubsection*{Schedule construction}
Note that any schedule~\(\sigma\) can be constructed by initializing every job at speed~\(s_1\), and then speeding up the jobs until every job~\(j\) runs at speed~\(s_j(\sigma)\). We could do these speed-ups in many different orders, but to better compare a schedule to the schedule obtained by Algorithm \ref{alg:kappa_delta}, we define a way to construct a schedule that speeds up the jobs in an order similar to the algorithm. 

A \emph{construction} of a schedule~\(\sigma\) consists of several steps~\(1,\dots,\tau\) with corresponding intermediate schedules~\((\sigma^0,\dots,\sigma^T)\) that satisfy the following
\begin{enumerate}
    \item For all~\(j\in\{1,\dots,n\}\), we have that~\(s_j(\sigma^0)=s_1\).
    \item~\(\sigma^T=\sigma\)
    \item For all~\(\tau\in\{1,\dots,T\}\), there is exactly one~\(j^\tau\in\{1,\dots,n\}\) with~\(s_{j^\tau}(\sigma^{\tau-1})<s_{j^\tau}(\sigma^\tau)\). For all~\(j\neq j^\tau\), we have that~\(s_{j}(\sigma^{\tau-1})=s_{j}(\sigma^\tau)\). We say that we speed up~\(j^\tau\) in step~\(\tau\) of the construction. \label{def:constr_one_job}
    \item For all~\(\tau\in\{1,\dots,T\}\), we have that~\(\kappa_{j^\tau}(\sigma^{\tau-1})=\kappa^+_{j^\tau}(\sigma^{\tau})\) and~\(\Delta_{j^\tau}(\sigma^{\tau-1})=\Delta^+_{j^\tau}(\sigma^{\tau})\)\label{def:constr_kd_change}
    \item For all~\(\tau\in\{1,\dots,T\}\), for any~\(j\in\{1,\dots,n\}\) with~\(s_j(\sigma^{\tau-1})<s_j(\sigma)\), we must have that
    \[\kappa_{j^\tau}(\sigma^{\tau-1})- \Delta_{j^\tau}(\sigma^{\tau-1})\geq \kappa_{j}(\sigma^{\tau-1})- \Delta_{j^\tau}(\sigma^{\tau-1})\,.\] \label{def:constr_kd_order}
\end{enumerate}
By Condition \ref{def:constr_one_job}, we only speed up one job during each step. By Condition \ref{def:constr_kd_change}, we can only speed up this job until its affection or its shrinking energy changes. By Condition \ref{def:constr_kd_order}, in step~\(\tau\), the job~\(j^\tau\) we speed up must have maximum~\(\kappa_{j^\tau}(\sigma^{\tau-1})- \Delta_{j^\tau}(\sigma^{\tau-1})\) among all jobs that do not yet run at their speed in~\(\sigma\). Note that these conditions only restrict the order in which we speed up jobs and what is considered a step in the construction, and not the final speeds. Thus, such a construction must exist for any schedule~\(\sigma\).

\begin{observation}\label{obs:KD_decreases}
    Consider a schedule~\(\sigma\) with construction~\((\sigma^0,\dots,\sigma^T)\).
    Let~\(\tau>\tau'\), then for all~\(j\in\{1,\dots,n\}\), we have~\(\kappa_j(\sigma^\tau)\leq\kappa_j(\sigma^{\tau'})\),~\(\kappa^+_j(\sigma^\tau)\leq\kappa^+_j(\sigma^{\tau'})\),~\(\Delta_j(\sigma^\tau)\geq\Delta_j(\sigma^{\tau'})\), and~\(\Delta^+_j(\sigma^\tau)\geq\Delta^+_j(\sigma^{\tau'})\).
\end{observation}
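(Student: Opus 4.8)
The plan is to reduce to a single construction step: it suffices to show $\kappa_j(\sigma^{\tau})\le\kappa_j(\sigma^{\tau-1})$, $\kappa^+_j(\sigma^{\tau})\le\kappa^+_j(\sigma^{\tau-1})$, $\Delta_j(\sigma^{\tau})\ge\Delta_j(\sigma^{\tau-1})$ and $\Delta^+_j(\sigma^{\tau})\ge\Delta^+_j(\sigma^{\tau-1})$ for every job $j$ and every step $\tau$, after which the general case follows by composing these inequalities along $\tau'+1,\dots,\tau$. Throughout I would use the structural facts already recorded earlier in this appendix for FE-IDUU: jobs are processed in FIFO order $1,\dots,n$ (with $r_1\le\dots\le r_n$) and never preempted, so in every schedule of the construction we may write $C_i=\max\{r_i,C_{i-1}\}+x_i$ with $C_0:=-\infty$. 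In particular each $C_i$ is non-decreasing in $(x_1,\dots,x_i)$ and $C_1(\sigma)<\dots<C_n(\sigma)$, and (using that the release times are non-decreasing in the index) the affection set of $j$ is exactly the contiguous block $\{j,\dots,b\}$, where $b$ is the largest index with $C_i(\sigma)>r_{i+1}$ for all $j\le i<b$; the lower-affection set is described the same way with ``$\ge$'' replacing ``$>$''. By Observation~\ref{obs:affection_from_chain} this gives $\kappa_j(\sigma)=b-j+1$, and analogously for $\kappa^+_j$. Call $i$ an \emph{active edge} of $\sigma$ if $C_i(\sigma)>r_{i+1}$, and a \emph{lower-active edge} if $C_i(\sigma)\ge r_{i+1}$.

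For the two affection bounds, I would argue as follows. In step $\tau$ only job $j^\tau$ changes speed and its speed weakly increases, so $x_{j^\tau}(\sigma^{\tau})\le x_{j^\tau}(\sigma^{\tau-1})$ while $x_j$ is unchanged for $j\ne j^\tau$. Hence $C_i(\sigma^{\tau})=C_i(\sigma^{\tau-1})$ for $i<j^\tau$ and $C_i(\sigma^{\tau})\le C_i(\sigma^{\tau-1})$ for $i\ge j^\tau$. Consequently every active edge of $\sigma^{\tau}$ is an active edge of $\sigma^{\tau-1}$, and likewise for lower-active edges: the construction step can only destroy (lower-)active edges, never create one. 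Since the affection blocks are exactly the connected components of $\{1,\dots,n\}$ under the active-edge relation, removing edges can only split a block into smaller pieces, so the block containing $j$ in $\sigma^{\tau}$ is a sub-interval of the block containing $j$ in $\sigma^{\tau-1}$; in particular its last index weakly decreases, giving $\kappa_j(\sigma^{\tau})\le\kappa_j(\sigma^{\tau-1})$. The identical argument with lower-active edges yields $\kappa^+_j(\sigma^{\tau})\le\kappa^+_j(\sigma^{\tau-1})$.

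For the shrinking/expanding-energy bounds, note that for every $j\ne j^\tau$ the average speed $s_j(\cdot)$ is unchanged between $\sigma^{\tau-1}$ and $\sigma^{\tau}$, hence so are $\Delta_j$ and $\Delta^+_j$. For $j=j^\tau$ the average speed weakly increases, so it suffices to observe that $\Delta_j(\sigma)$ and $\Delta^+_j(\sigma)$ are non-decreasing functions of $s_j(\sigma)$; by their definition this reduces to $\Delta_0=-\infty<\Delta_1<\dots<\Delta_{k-1}<\Delta_k=\infty$. This strict monotonicity of the $\Delta_i$ is a consequence of Observation~\ref{obs:no_conv_comb}: applying it with $q=i-1$ and $r=i+1$ shows that the points $(x^{i-1}_j,E^{i-1}_j),(x^i_j,E^i_j),(x^{i+1}_j,E^{i+1}_j)$ are in strictly convex position, i.e.\ the two successive marginal energies -- which, as in the proof of Lemma~\ref{lemma:delta}, equal $\Delta_{i-1}$ and $\Delta_i$ respectively -- satisfy $\Delta_{i-1}<\Delta_i$. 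Combining the per-step inequalities completes the argument.

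The main obstacle, and the point that needs the most care, is that $\kappa_j$ and $\kappa^+_j$ can change even for jobs $j\ne j^\tau$: accelerating $j^\tau$ pulls down the completion times of all later jobs and can therefore break affection chains far from $j^\tau$. The clean way to handle this uniformly is the observation that accelerating a single job can only delete active edges, which I would state once and then apply to both $\kappa$ and $\kappa^+$; the remaining ingredients (the contiguous-block description of affection chains in the FIFO/no-preemption regime, and the monotonicity of the $\Delta_i$ in $i$) are already supplied by earlier results in the paper.
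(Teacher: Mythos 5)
Your proposal is correct and follows essentially the same route as the paper's (much terser) proof: speeding up one job only decreases completion times, so (lower) affection can only drop, while the shrinking/expanding energy of the sped-up job can only rise and all others are unchanged. Your write-up merely fills in details the paper leaves implicit — the contiguous-block/active-edge description of affection chains and the monotonicity $\Delta_{i-1}<\Delta_i$ via Observation~\ref{obs:no_conv_comb} — both of which check out.
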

\begin{proof}
    When we speed up a job, the completion times of all jobs can only decrease. Hence, their (lower) affection can only decrease. Furthermore, the shrinking (or expanding) energy of the sped-up job can only increase, while the shrinking energy of all other jobs remains unchanged.
\end{proof}

\begin{observation}\label{obs:no_affect_other_chain}
    If~\(j\notin \K_{j^\tau}(\sigma^{\tau-1})\), then~\(\K_j(\sigma^\tau)=\K_j(\sigma^{\tau-1})\). 
\end{observation}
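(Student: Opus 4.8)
The plan is to exploit the very rigid structure of affection chains in the FE-IDUU setting. Since jobs are indexed by non-decreasing release time, are never preempted, and the processor works conservingly, every affection chain is a maximal block of consecutive indices $[p,q]$, and for consecutive jobs, job $i$ affects job $i+1$ precisely when $C_i(\sigma)>r_{i+1}$ (and in that case they are processed back to back). With this picture, proving the observation amounts to checking exactly which completion times move when we pass from $\sigma^{\tau-1}$ to $\sigma^\tau$, where only the (decreased) processing time of the single job $j^\tau$ changes.

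First I would record the effect of that single speed-up. Write $\K_{j^\tau}(\sigma^{\tau-1}) = [p,q]$ with $p \le j^\tau \le q$. I claim that (i) $C_i(\sigma^\tau) = C_i(\sigma^{\tau-1})$ for every $i < j^\tau$; (ii) $C_i(\sigma^\tau) \le C_i(\sigma^{\tau-1})$ for every $i$; and (iii) $C_i(\sigma^\tau) = C_i(\sigma^{\tau-1})$ for every $i > q$. Claim (i) is immediate because all jobs preceding $j^\tau$ in FIFO order are scheduled identically in the two schedules; (ii) is the elementary fact, noted in the proof of Observation~\ref{obs:KD_decreases}, that speeding up a job never delays any completion time. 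For (iii), maximality of the chain $[p,q]$ gives $C_q(\sigma^{\tau-1}) \le r_{q+1}$, hence by (ii) also $C_q(\sigma^\tau) \le r_{q+1}$, so job $q+1$ starts at $r_{q+1}$ in both schedules; since no release time changes, $C_{q+1}$ is unchanged, and induction up the FIFO order extends this to all $i \ge q+1$.

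Then I would finish with a two-case argument. Let $j \notin \K_{j^\tau}(\sigma^{\tau-1}) = [p,q]$, and let $\K_j(\sigma^{\tau-1}) = [p',q']$; since affection chains partition $\{1,\dots,n\}$ into intervals and $[p',q']$ is disjoint from $[p,q]$, either $q' < p$ or $p' > q$. If $q' < p$, then every index $i$ with $p'-1 \le i \le q'$ satisfies $i < j^\tau$, so by (i) the completion times $C_{p'-1},\dots,C_{q'}$ are all unchanged; together with the fixed release times $r_{p'},\dots,r_{q'+1}$ these are exactly the data that decide whether $[p',q']$ is a maximal chain, so $\K_j(\sigma^\tau) = [p',q'] = \K_j(\sigma^{\tau-1})$. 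If $p' > q$, then every index $i$ with $p'-1 \le i \le q'$ satisfies $i \ge q$, so by (iii) — and, in the degenerate subcase $p'-1 = q$, the bound $C_q(\sigma^\tau) \le r_{q+1}$ established in the proof of (iii) — all relevant completion times are again unchanged, giving the same conclusion.

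I expect the only real work to lie in the boundary bookkeeping in the last paragraph: ensuring that the chain immediately to the left of $[p,q]$ does not merge into it, and symmetrically on the right — in particular the degenerate case $C_q(\sigma^{\tau-1}) = r_{q+1}$, where the defining inequality stays non-strict after the speed-up so that the chain boundary at $q$ survives. This is routine once (i)–(iii) are in place; note also that conditions~\ref{def:constr_kd_change}–\ref{def:constr_kd_order} of the construction are not needed here, as the statement follows purely from the fact that a single job is sped up.
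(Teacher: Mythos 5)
Your proof is correct and takes essentially the same approach as the paper: both arguments boil down to the fact that decreasing the processing time of $j^\tau$ changes completion times only for jobs in $K_{j^\tau}(\sigma^{\tau-1})$, so the chain containing $j$ is untouched. The paper states this key fact in one sentence without justification, whereas you supply the missing justification via claims (i)--(iii) and then spell out the boundary casework (including the $p'-1=q$ subcase where $C_q$ may decrease but the non-affection inequality $C_q(\sigma^\tau)\le r_{q+1}$ is preserved) that the paper leaves implicit behind ``it follows.''
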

\begin{proof}
    First, suppose~\(j\notin \K_{j^\tau}(\sigma^{\tau-1})\). Then~\(\K_{j^\tau}(\sigma^{\tau-1})\)and~\(\K_j(\sigma^{\tau-1})\) are completely disjoint. By the definition of affection, decreasing the processing time of~\(j^\tau\) can only affect the completion time of jobs in~\(K_{j^\tau}(\sigma^{\tau-1})\). Hence, for any~\(j'\in\K_j(\sigma^{\tau-1})\), we have that~\(C_{j'}(\sigma^{\tau-1})= C_{j'}(\sigma^{\tau})\). If follows that~\(\K_j(\sigma^\tau)=\K_j(\sigma^{\tau-1})\). 
\end{proof}

Since Algorithm \ref{alg:kappa_delta} always speeds up a job with maximum~\(\kappa-\Delta\), its steps also constitute a construction. Generally, in a construction, we are allowed to speed up a job which does not have maximum~\(\kappa-\Delta\), but only if all jobs with maximum~\(\kappa-\Delta\) already run at their final speed. If we do speed up such a job, then the construction can be the same as the construction from the Algorithm. Hence, we are interested in comparing constructions that do speed up a job with non-maximum~\(\kappa-\Delta\), to constructions that never speed up such a job. For this purpose, we introduce the following:

\begin{definition}
    Let $\sigma$ be a schedule with construction~\((\sigma^0,\dots,\sigma^T)\). A step~\(\tau\) follows the~\emph{\((\kappa-\Delta)\)-rule} if for all~\(j\in\{1,\dots,n\}\) we have
    \[
    \kappa_{j^\tau}(\sigma^{\tau-1})-\Delta_{j^\tau}(\sigma^{\tau-1})\geq \kappa_{j}(\sigma^{\tau-1})-\Delta_{j}(\sigma^{\tau-1})\,.
    \]
    If all steps follow the~\((\kappa-\Delta)\)-rule, then we say the construction follows the~\((\kappa-\Delta)\)-rule.
\end{definition}

By Corollary \ref{cor:kappa_delta_balance}, a construction of an optimal schedule cannot end after step~\(\tau\) if there is a job~\(j\) with~\(\kappa_j(\sigma^\tau)-\Delta_j(\sigma^\tau)>0\). However, it is not clear whether we should speed up jobs with~\(\kappa_j-\Delta_j=0\). To better compare different constructions, we may assume the following:

\begin{observation}\label{obs:KD_balance_contruction}
    Let~\(\sigma\) be an optimal schedule. Then w.l.o.g.\ we have that~\(\kappa_j(\sigma)-\Delta_j(\sigma)\leq 0\) and~\(\kappa^+_j(\sigma)-\Delta^+_j(\sigma)> 0\) for all~\(j\in\{1,\dots,n\}\).
\end{observation}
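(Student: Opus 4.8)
The plan is to prove the stronger statement that there \emph{exists} an optimal schedule satisfying both inequalities, which is all the ``w.l.o.g.'' requires. The first inequality $\kappa_j(\sigma)-\Delta_j(\sigma)\le 0$ comes for free: by Corollary~\ref{cor:kappa_delta_balance} it holds for \emph{every} optimal schedule, hence for whichever optimal schedule we end up selecting. Likewise Corollary~\ref{cor:kappa_delta_balance} already gives $\kappa_j^+(\sigma)-\Delta_j^+(\sigma)\ge 0$ for every optimal schedule, so the only real work is to single out one optimal schedule for which equality never occurs in the second inequality.

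The schedule I would pick is an optimal schedule that, among all optimal schedules, maximizes the total processing time $\chi(\sigma)=\sum_j x_j(\sigma)$. The first step is to justify that such a schedule exists. For FE-IDUU, ordering the jobs by release time (FIFO) is an optimal completion order, and along this order the integral flow is a continuous function of the processing-time vector $(x_1,\dots,x_n)$, while by Observation~\ref{obs:convex_combination_speeds} the minimum energy needed to realize a given $x_j$ is also continuous in $x_j$. Since an optimal schedule always uses this minimal per-job energy, the optimal objective value equals the minimum of a continuous function over the compact box $\prod_j[\,v_j/s_k,\; v_j/s_1\,]$; hence the set of optimizers is compact and $\chi$ attains a maximum on it. Let $\sigma$ realize this maximum.

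Finally, I would argue by contradiction that this $\sigma$ satisfies $\kappa_j^+(\sigma)-\Delta_j^+(\sigma)>0$ for all $j$. If some $j$ had $\kappa_j^+(\sigma)-\Delta_j^+(\sigma)\le 0$, then by Corollary~\ref{cor:kappa_delta_balance} in fact $\kappa_j^+(\sigma)=\Delta_j^+(\sigma)$; since $\kappa_j^+(\sigma)\le n<\infty$ this forces $\Delta_j^+(\sigma)\ne-\infty$, i.e.\ $s_j(\sigma)>s_1$, so $j$ can be slowed down. Let $\sigma^{+\varepsilon}$ be the schedule that expands the processing time of $j$ by $\varepsilon$ and leaves the other jobs unchanged. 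By Observation~\ref{obs:affection} and Lemma~\ref{lemma:delta}, for all sufficiently small $\varepsilon>0$ we have $F(\sigma^{+\varepsilon})=F(\sigma)+\varepsilon\kappa_j^+(\sigma)$ and $E(\sigma^{+\varepsilon})=E(\sigma)-\varepsilon\Delta_j^+(\sigma)$, so the objective changes by $\varepsilon\bigl(\kappa_j^+(\sigma)-\Delta_j^+(\sigma)\bigr)=0$; thus $\sigma^{+\varepsilon}$ is again optimal, yet $\chi(\sigma^{+\varepsilon})=\chi(\sigma)+\varepsilon>\chi(\sigma)$, contradicting the choice of $\sigma$. Combined with the first bullet (Corollary~\ref{cor:kappa_delta_balance}), this gives the claim. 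The main (indeed the only) delicate point in this plan is the existence step: one has to be a little careful to argue that the optimal objective is the minimum of a continuous function on a compact set, using that FIFO is an optimal completion order for FE-IDUU and Observation~\ref{obs:convex_combination_speeds} to reduce per-job energy to a function of the processing times; the perturbation itself is a routine one-job local exchange.
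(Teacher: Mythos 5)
Your proof is correct, but the route you take differs from the paper's. The paper argues directly: if $\kappa_j^+(\sigma)=\Delta_j^+(\sigma)$ for some job $j$, expand that job's processing time to the first threshold at which $\kappa_j^+$ or $\Delta_j^+$ changes; by Observation~\ref{obs:affection} and Lemma~\ref{lemma:delta} the objective is unchanged along the way, so the resulting schedule is still optimal and now satisfies $\kappa_j^+ - \Delta_j^+ > 0$ (implicitly iterating over all such jobs). You instead wrap essentially the same one-job local exchange inside an extremal choice: you select, among all optimal schedules, one maximizing $\chi(\sigma)=\sum_j x_j(\sigma)$, and then derive a contradiction from any job with $\kappa_j^+-\Delta_j^+\le 0$, since expanding it by a small $\varepsilon$ keeps the objective while increasing $\chi$. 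The trade-off is clear: the paper's version is shorter and more algorithmic (it says exactly how far to push), while yours cleanly handles all jobs simultaneously without having to justify that the iteration is well-behaved; the price you pay is the extra compactness argument to ensure the $\chi$-maximizer exists, which you carry out correctly by reducing, via the FIFO completion order for FE-IDUU and Observation~\ref{obs:convex_combination_speeds}, to minimizing a continuous function over the compact box $\prod_j[v_j/s_k,\,v_j/s_1]$. Both are valid; yours is a slightly more robust phrasing of the same underlying perturbation idea.
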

\begin{proof}
    By Corollary \ref{cor:kappa_delta_balance}, we know that for an optimal schedule~\(\sigma\), we have~\(\kappa_j(\sigma)-\Delta_j(\sigma)\leq 0\) and~\(\kappa^+_j(\sigma)-\Delta^+_j(\sigma)\geq 0\) for all~\(j\in\{1,\dots,n\}\). Suppose~\(\kappa^+_j(\sigma)-\Delta^+_j(\sigma)= 0\) for some job~\(j\). Then by Observation \ref{obs:affection} and Lemma \ref{lemma:delta}, there is an~\(\varepsilon>0\) such that~\(\Obj(\sigma)= \Obj(\sigma_j^{+\varepsilon})\). Furthermore, if we choose~\(\varepsilon\) large enough, then~\(\kappa^+_j(\sigma_j^{+\varepsilon})-\Delta^+_j(\sigma_j^{+\varepsilon})> 0\). Hence,~\(\sigma^{+\varepsilon}_j\) yields a new optimal schedule, where~\(\kappa^+_j(\sigma_j^{+\varepsilon})-\Delta^+_j(\sigma_j^{+\varepsilon})> 0\).
\end{proof}

To prove the correctness of Algorithm \ref{alg:kappa_delta}, we first prove the following: If we have a schedule with a construction that does not follow the~\((\kappa-\Delta)\)-rule, then we can find two jobs where slowing down one job and speeding up the other leads to an improvement in the objective value. Thus, such a schedule can never be optimal. To be more specific: the job we speed up will be one that would have adhered to the~\((\kappa-\Delta)\)-rule in a construction step where a violation occurred. The main challenge lies in identifying the job that should be slowed down in this exchange. Hence, we introduce the following lemma.

\begin{lemma}\label{lemma:exchange_last_vio}
    Let~\(\sigma\) be a schedule with construction~\((\sigma^0,\dots,\sigma^T)\), and~\(\kappa_j(\sigma)-\Delta_j(\sigma)\leq 0\) and~\(\kappa^+_j(\sigma)-\Delta^+_j(\sigma)>0\) for all~\(j\in\{1,\dots,n\}\). Suppose for some job~\(j\) and step~\(\tau\), we have that \[
    \kappa_j(\sigma^{\tau-1})- \Delta_j(\sigma^{\tau-1}) \geq \kappa_{j'}(\sigma^{\tau-1})- \Delta_{j'}(\sigma^{\tau-1})\,,
    \]
    for all~\(j'\in \K_j(\sigma^{\tau-1})\).
    Furthermore, suppose
   \(\kappa_j(\sigma^{\tau-1})-\Delta_j(\sigma^{\tau-1})> 0\),~\(\kappa_j(\sigma^{\tau})-\Delta_j(\sigma^{\tau})\leq 0\), and~\(j^\tau\neq j\). If all steps after~\(\tau\) follow the~\((\kappa-\Delta)\)-rule, then there is an~\(\varepsilon>0\) such that
    \[
    \Obj(\sigma^\varepsilon_{j^\tau j}) \leq \Obj(\sigma) - \varepsilon\left(\left(\kappa_{j}(\sigma^{\tau-1})-\Delta_{j}(\sigma^{\tau-1})\right) - \left(\kappa_{j^\tau}(\sigma)-\Delta_{j^\tau}(\sigma^{\tau-1})\right)\right)\,.
    \]
\end{lemma}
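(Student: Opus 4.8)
The plan is to run the exchange argument of Lemma~\ref{lemma:exchange_cases} on the pair $(j_1,j_2)=(j^\tau,j)$, that is, to slow $j^\tau$ down by $\varepsilon$ while speeding $j$ up by $\varepsilon$. Two things must be done: (a) show that in the final schedule $\sigma=\sigma^T$ the jobs $j^\tau$ and $j$ lie in a common (lower) affection chain, so that Lemma~\ref{lemma:exchange_cases} applies; and (b) show that the objective decrease it guarantees, namely $\varepsilon\bigl((\kappa^+_{j}(\sigma)-\Delta_{j}(\sigma))-(\kappa^+_{j^\tau}(\sigma)-\Delta^+_{j^\tau}(\sigma))\bigr)$, is at least the claimed amount $\varepsilon\bigl((\kappa_{j}(\sigma^{\tau-1})-\Delta_{j}(\sigma^{\tau-1}))-(\kappa_{j^\tau}(\sigma)-\Delta_{j^\tau}(\sigma^{\tau-1}))\bigr)$.

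For part (a) I would proceed in three steps. First, since $j^\tau\neq j$, step $\tau$ does not change the speed of $j$, so $\Delta_j(\sigma^\tau)=\Delta_j(\sigma^{\tau-1})$; together with $\kappa_j(\sigma^{\tau-1})-\Delta_j(\sigma^{\tau-1})>0\ge\kappa_j(\sigma^{\tau})-\Delta_j(\sigma^{\tau})$ this forces $\kappa_j(\sigma^\tau)<\kappa_j(\sigma^{\tau-1})$, so by Observation~\ref{obs:no_affect_other_chain} (contrapositive, together with Observation~\ref{obs:affection_from_chain}) we must have $j\in\K_{j^\tau}(\sigma^{\tau-1})$; as affection chains run forward in the index order, this also gives $j^\tau<j$. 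Second, Condition~\ref{def:constr_kd_change} of the construction gives $\kappa^+_{j^\tau}(\sigma^\tau)=\kappa_{j^\tau}(\sigma^{\tau-1})$, so by Observation~\ref{obs:affection_from_chain} the lower affection chain $\K^+_{j^\tau}(\sigma^\tau)$ ends at the same job $d$ as $\K_{j^\tau}(\sigma^{\tau-1})$, hence still contains $j$ (and, by the same reasoning at $j$, $\kappa^+_j(\sigma^\tau)=\kappa_j(\sigma^{\tau-1})$). Third, I would propagate $j\in\K^+_{j^\tau}(\cdot)$ through the remaining steps $\tau+1,\dots,T$: a later speed-up of a job $j^t$ lying strictly between $j^\tau$ and $j$ in the current chain would, by Condition~\ref{def:constr_kd_change}, have its new lower affection chain end at the same job as its old strict affection chain, and since step $t$ follows the $(\kappa-\Delta)$-rule, Observation~\ref{obs:earlier_loses_more} (earlier jobs in a chain lose more affection) keeps that end job at least $j$, so the lower-affection link between $j^\tau$ and $j$ is never severed. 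Hence $j\in\K^+_{j^\tau}(\sigma)$, and by uniqueness of (lower) affection chains $\K^+_{j^\tau}(\sigma)=\K^+_j(\sigma)$, so the second case of Lemma~\ref{lemma:exchange_cases} applies and yields, for a suitable $\varepsilon>0$,
\[
\Obj(\sigma^\varepsilon_{j^\tau j})\le\Obj(\sigma)-\varepsilon\bigl((\kappa^+_j(\sigma)-\Delta_j(\sigma))-(\kappa^+_{j^\tau}(\sigma)-\Delta^+_{j^\tau}(\sigma))\bigr).
\]

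For part (b) it remains to check $(\kappa^+_j(\sigma)-\Delta_j(\sigma))-(\kappa^+_{j^\tau}(\sigma)-\Delta^+_{j^\tau}(\sigma))\ge(\kappa_j(\sigma^{\tau-1})-\Delta_j(\sigma^{\tau-1}))-(\kappa_{j^\tau}(\sigma)-\Delta_{j^\tau}(\sigma^{\tau-1}))$. The ingredients I would assemble are: the monotonicity of affections and shrinking energies along a construction (Observation~\ref{obs:KD_decreases}); the inequality $\Delta^+_{j^\tau}(\sigma)\ge\Delta^+_{j^\tau}(\sigma^\tau)=\Delta_{j^\tau}(\sigma^{\tau-1})$ coming from Condition~\ref{def:constr_kd_change} and monotonicity; the two equalities $\kappa^+_{j}(\sigma^\tau)=\kappa_{j}(\sigma^{\tau-1})$ and $\kappa^+_{j^\tau}(\sigma^\tau)=\kappa_{j^\tau}(\sigma^{\tau-1})$ derived above; and — when $j$ is not yet at its final speed at $\sigma^{\tau-1}$ — the equality $\kappa_{j^\tau}(\sigma^{\tau-1})-\Delta_{j^\tau}(\sigma^{\tau-1})=\kappa_{j}(\sigma^{\tau-1})-\Delta_{j}(\sigma^{\tau-1})$, obtained by combining Condition~\ref{def:constr_kd_order} ($j^\tau$ is maximal among jobs not at their final speed) with the hypothesis that $j$ is maximal within $\K_j(\sigma^{\tau-1})$, a chain that contains $j^\tau$; the complementary case, where $j$ already runs at its final speed at $\sigma^{\tau-1}$ so that $\Delta_j(\sigma)=\Delta_j(\sigma^{\tau-1})$, is handled analogously. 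Expressing all affections through the last jobs of the respective chains, and using that the final, optimal-like schedule $\sigma$ satisfies $\kappa_{j^\tau}(\sigma)-\Delta_{j^\tau}(\sigma)\le0$ (Observation~\ref{obs:KD_balance_contruction}), the desired inequality should fall out after a careful accounting of the chain lengths, and the lemma follows.

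The main obstacle is the chain-propagation step: showing that the $(\kappa-\Delta)$-rule being respected by all steps after $\tau$ really does prevent every subsequent speed-up from disconnecting $j$ from the lower affection chain of $j^\tau$. This forces one to track, step by step, how the strict and lower affection chains of each sped-up job relate to its position between $j^\tau$ and $j$, and it is there that Observations~\ref{obs:earlier_loses_more}, \ref{obs:no_affect_other_chain} and~\ref{obs:affection_from_chain} have to be combined carefully. A secondary difficulty is the bookkeeping in the final inequality across the three reference schedules $\sigma^{\tau-1}$, $\sigma^\tau$ and $\sigma$, where one must never confuse $\kappa^+$ with $\kappa$ or $\Delta^+$ with $\Delta$.
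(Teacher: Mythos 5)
There is a concrete gap in your part (a). From the strict drop $\kappa_j(\sigma^{\tau-1})>\kappa_j(\sigma^{\tau})$ and the contrapositive of Observation~\ref{obs:no_affect_other_chain} you correctly conclude $j\in\K_{j^\tau}(\sigma^{\tau-1})$, that is, $j$ and $j^\tau$ lie in the same \emph{affection chain}. But you then infer ``$j^\tau<j$'' with the justification ``affection chains run forward in the index order,'' and this inference is wrong: $\K_{j^\tau}(\sigma^{\tau-1})$ is the maximal block of consecutive indices containing $j^\tau$, and it extends both forward \emph{and} backward from $j^\tau$. The set that only runs forward from $j^\tau$ is the affection \emph{set} $K_{j^\tau}$, not the chain $\K_{j^\tau}$. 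Concretely, $j<j^\tau$ is perfectly possible: if the common chain is $(a,\dots,b)$ with $a\le j<j^\tau\le b$, shrinking $x_{j^\tau}$ decreases $C_{j^\tau},\dots,C_b$ and can break the link $j^\tau\to j^\tau+1$, shrinking $\K_j$ and hence $\kappa_j$ even though $j$ precedes $j^\tau$ in the chain. You therefore only handle the case $j>j^\tau$, which invokes the second disjunct of Lemma~\ref{lemma:exchange_cases}; the case $j<j^\tau$ needs a separate argument that goes through the first disjunct ($\K_{j^\tau}(\sigma)=\K_j(\sigma)$) rather than lower affection chains, and the two cases require different propagation arguments through the steps after $\tau$. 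The paper does exactly this two-case split, and your proposal is missing half of it.

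A secondary remark: your part~(b) is left as a plan (``the desired inequality should fall out after a careful accounting''), which is acceptable in a sketch, but note that the two cases from part (a) feed into part~(b) via \emph{different} equalities between chain endpoints, so the accounting cannot be done uniformly; it also relies on observing that $j$ itself and the relevant jobs in $\K_j(\sigma^\tau)$ have $\kappa-\Delta\le 0$ after step $\tau$ and hence are never sped up later, which is the key step your propagation argument gestures at but does not nail down.
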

\begin{proof}
    By Observation \ref{obs:no_affect_other_chain}, we must have~\(\K_j(\sigma^{\tau-1})=\K_{j^\tau}(\sigma^{\tau-1})\). 
    We distinguish between two cases
    \proofsubparagraph*{Case 1:~\(j<j^\tau\)}
    In this case, we have that~\(\K_j(\sigma^{\tau})=\K_{j^\tau}(\sigma^{\tau})\). Since 
    \[\kappa_j(\sigma^{\tau-1})- \Delta_j(\sigma^{\tau-1}) \geq \kappa_{j'}(\sigma^{\tau-1})- \Delta_{j'}(\sigma^{\tau-1})\] 
    for all~\(j'\in\K_j(\sigma^{\tau-1})\) and by Observation \ref{obs:overlap_chain}, we have that~\(\kappa_{j'}(\sigma^{\tau-1})- \Delta_{j'}(\sigma^{\tau-1})\leq0\) for all~\(j'\in \K_j(\sigma^\tau)\). Since~\(\kappa^+_j(\sigma)-\Delta^+_j(\sigma)> 0\) for all~\(j\in\{1,\dots,n\}\), it follows that none of the jobs in~\(\K_j(\sigma^\tau)\) will be sped up in any of the steps after~\(\tau\). Thus,~\(\K_j(\sigma)= \K_{j^\tau}(\sigma)\) and by Observation \ref{obs:affection_from_chain}
    \[
    \kappa^+_j(\sigma)-\kappa^+_{j^\tau}(\sigma) = \kappa^+_j(\sigma^{\tau-1})-\kappa^+_{j^\tau}(\sigma^{\tau-1})\,.
    \]
    By the definition of a construction,~\(\Delta^+_{j^\tau}(\sigma) = \Delta^+_{j^\tau}(\sigma^\tau) = \Delta_{j^\tau}(\sigma^{\tau-1}) \). The result follows from Lemma \ref{lemma:exchange_cases}.

    \proofsubparagraph*{Case 2:~\(j>j^\tau\)}
    By the definition of a construction, we have that~\(\K^+_j(\sigma^\tau)= \K^+_{j^\tau}(\sigma^\tau)\). Furthermore, by Observation \ref{obs:earlier_loses_more}, we have that for all~\(j'\in \K_j(\sigma^{\tau-1})\) with~\(j'<j\)
    \[
    \kappa_{j'}(\sigma^{\tau-1})-  \kappa_{j'}(\sigma^\tau) \geq \kappa_{j'}(\sigma^{\tau-1})-  \kappa_{j'}(\sigma^\tau)\,.
    \]
    Since~\(\kappa_j(\sigma^{\tau-1})-\Delta_j(\sigma^{\tau-1})\geq \kappa_{j'}(\sigma^{\tau-1})-\Delta_{j'}(\sigma^{\tau-1})\), it follows that 
    \[
    \kappa_{j'}(\sigma^{\tau})-\Delta_{j'}(\sigma^{\tau})\leq \kappa_{j}(\sigma^{\tau})-\Delta_{j}(\sigma^{\tau})\leq0\,.
    \]
    Hence, none of these jobs~\(j'\) will be sped up in any of the steps after~\(\tau\), and thus we have~\(\K^+_j(\sigma)=\K^+_{j^\tau}(\sigma)\). By Observation \ref{obs:overlap_chain}, it follows that
    \[
    \kappa^+_j(\sigma)-\kappa^+_{j^\tau}(\sigma) = \kappa^+_j(\sigma^{\tau-1})-\kappa^+_{j^\tau}(\sigma^{\tau-1})\,.
    \]
    By the definition of a construction,~\(\Delta^+_{j^\tau}(\sigma) = \Delta^+_{j^\tau}(\sigma^\tau) = \Delta_{j^\tau}(\sigma^{\tau-1}) \). The result follows from Lemma \ref{lemma:exchange_cases}.
\end{proof}

\begin{corollary}\label{cor:must_follow_KD}
    Any construction of an optimal schedule follows the~\((\kappa-\Delta)\)-rule.
\end{corollary}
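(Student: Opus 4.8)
The plan is a proof by contradiction, with Lemma~\ref{lemma:exchange_last_vio} doing the heavy lifting. Suppose $\sigma$ is optimal and admits a construction $(\sigma^0,\dots,\sigma^T)$ that does \emph{not} follow the $(\kappa-\Delta)$-rule; by Observation~\ref{obs:KD_balance_contruction} we may assume $\kappa_j(\sigma)-\Delta_j(\sigma)\le 0$ and $\kappa^+_j(\sigma)-\Delta^+_j(\sigma)>0$ for all $j$. Let $\tau$ be the \emph{last} step that violates the rule, so every later step obeys it and there is a job $\bar j\neq j^\tau$ with $\kappa_{\bar j}(\sigma^{\tau-1})-\Delta_{\bar j}(\sigma^{\tau-1})>\kappa_{j^\tau}(\sigma^{\tau-1})-\Delta_{j^\tau}(\sigma^{\tau-1})$. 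The goal is to build a valid input $(\tau^\star,j)$ for Lemma~\ref{lemma:exchange_last_vio} whose conclusion contradicts optimality.

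First I would record three preparatory facts. (i) \emph{The job sped up at step $\tau$ has strictly positive surplus}: by condition~\ref{def:constr_kd_change} of a construction, $\kappa_{j^\tau}(\sigma^{\tau-1})-\Delta_{j^\tau}(\sigma^{\tau-1})=\kappa^+_{j^\tau}(\sigma^\tau)-\Delta^+_{j^\tau}(\sigma^\tau)$, and by Observation~\ref{obs:KD_decreases} this is at least $\kappa^+_{j^\tau}(\sigma)-\Delta^+_{j^\tau}(\sigma)>0$; hence $\bar j$ has positive surplus too. (ii) Since $j^\tau$ maximises $\kappa-\Delta$ among all jobs not yet at their final speed (condition~\ref{def:constr_kd_order}), every job that strictly beats it, such as $\bar j$, already runs at its final speed, so its shrinking energy stays constant from step $\tau-1$ on. (iii) \emph{We may take $\bar j\in\K_{j^\tau}(\sigma^{\tau-1})$}: otherwise $\bar j$'s affection chain is untouched at step $\tau$ (Observation~\ref{obs:no_affect_other_chain}), so $\bar j$'s surplus is still positive at $\sigma^\tau$, and since it must vanish by $\sigma^T$ its chain breaks at some later step $\tau''$ when a job $j^{\tau''}\in\K_{\bar j}(\sigma^{\tau''-1})$ is sped up; but then $j^{\tau''}$ is not at its final speed at $\sigma^{\tau-1}$ either, and monotonicity (Observation~\ref{obs:KD_decreases}) together with the rule at step $\tau''$ gives $\kappa_{j^{\tau''}}(\sigma^{\tau-1})-\Delta_{j^{\tau''}}(\sigma^{\tau-1})\ge\kappa_{\bar j}(\sigma^{\tau-1})-\Delta_{\bar j}(\sigma^{\tau-1})>\kappa_{j^\tau}(\sigma^{\tau-1})-\Delta_{j^\tau}(\sigma^{\tau-1})$, contradicting condition~\ref{def:constr_kd_order} at step $\tau$.

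Now choose $j$ to maximise $\kappa-\Delta$ over $\K_{j^\tau}(\sigma^{\tau-1})$: then $j$ dominates $\bar j$, so $\kappa_j(\sigma^{\tau-1})-\Delta_j(\sigma^{\tau-1})>\kappa_{j^\tau}(\sigma^{\tau-1})-\Delta_{j^\tau}(\sigma^{\tau-1})>0$, $j\neq j^\tau$, $j$ runs at its final speed, and $j$ is a chain-maximiser. If $\kappa_j(\sigma^\tau)-\Delta_j(\sigma^\tau)\le 0$, set $\tau^\star=\tau$ and feed $(\tau,j)$ to Lemma~\ref{lemma:exchange_last_vio}. Otherwise, since $\kappa_j-\Delta_j$ is positive at $\sigma^{\tau-1}$ and at $\sigma^\tau$, non-positive at $\sigma^T$, and $\kappa_j$ is non-increasing while $\Delta_j$ is constant (Observation~\ref{obs:KD_decreases}), there is a unique later step $\tau^\star>\tau$ at which it crosses zero; there $j$ is still at its final speed, so $j^{\tau^\star}\neq j$, and $j$'s affection strictly dropped, so by Observation~\ref{obs:no_affect_other_chain}, $j\in\K_{j^{\tau^\star}}(\sigma^{\tau^\star-1})$. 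The technical heart is then to verify that $(\tau^\star,j)$ really satisfies the hypotheses of Lemma~\ref{lemma:exchange_last_vio} — in particular that $j$ is still a chain-maximiser at $\sigma^{\tau^\star-1}$ (or else re-anchor on $j^{\tau^\star}$ as in (iii) to contradict condition~\ref{def:constr_kd_order} at step $\tau$) — using that every step in $(\tau,\tau^\star]$ obeys the rule, monotonicity, and the chain bookkeeping of Observations~\ref{obs:earlier_loses_more} and~\ref{obs:overlap_chain}, split into the cases $j<j^{\tau^\star}$ and $j>j^{\tau^\star}$ exactly as in the proof of Lemma~\ref{lemma:exchange_last_vio}.

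In either case Lemma~\ref{lemma:exchange_last_vio} yields $\varepsilon>0$ with
\[
\Obj\bigl(\sigma^{\varepsilon}_{j^{\tau^\star}j}\bigr)\ \le\ \Obj(\sigma)-\varepsilon\Bigl(\bigl(\kappa_j(\sigma^{\tau^\star-1})-\Delta_j(\sigma^{\tau^\star-1})\bigr)-\bigl(\kappa_{j^{\tau^\star}}(\sigma)-\Delta_{j^{\tau^\star}}(\sigma^{\tau^\star-1})\bigr)\Bigr).
\]
Since $j^{\tau^\star}\in\K_j(\sigma^{\tau^\star-1})$ and $j$ is a chain-maximiser, $\kappa_j(\sigma^{\tau^\star-1})-\Delta_j(\sigma^{\tau^\star-1})\ge\kappa_{j^{\tau^\star}}(\sigma^{\tau^\star-1})-\Delta_{j^{\tau^\star}}(\sigma^{\tau^\star-1})$, while $\kappa_{j^{\tau^\star}}(\sigma)\le\kappa_{j^{\tau^\star}}(\sigma^{\tau^\star-1})$ by Observation~\ref{obs:KD_decreases}, so the bracketed coefficient is nonnegative; showing it is strictly positive (clear when $\tau^\star=\tau$ because $j$ strictly beats $j^\tau$, and part of the same chain analysis when $\tau^\star>\tau$) then gives $\Obj(\sigma^{\varepsilon}_{j^{\tau^\star}j})<\Obj(\sigma)$, contradicting optimality of $\sigma$. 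Hence every construction of an optimal schedule follows the $(\kappa-\Delta)$-rule. The main obstacle is exactly the middle step: ensuring that the job handed to Lemma~\ref{lemma:exchange_last_vio} is simultaneously the chain-maximiser and the job whose surplus drops below zero at that step, since all the delicate reasoning about how a single speed-up fractures affection chains is concentrated there.
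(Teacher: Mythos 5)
Your proposal is correct and follows essentially the same strategy as the paper's own proof: pick the last violating step, track a job with maximal $\kappa-\Delta$ forward to the step where its surplus first crosses zero, and apply Lemma~\ref{lemma:exchange_last_vio} to produce an exchange that strictly improves the objective, contradicting optimality. You are somewhat more explicit than the paper in two places — you justify via Observation~\ref{obs:no_affect_other_chain} and the $(\kappa-\Delta)$-rule at later steps that a violating witness can be taken inside $\K_{j^\tau}(\sigma^{\tau-1})$, and you explicitly invoke Observation~\ref{obs:KD_balance_contruction} to get the strict form of the stationarity conditions needed by Lemma~\ref{lemma:exchange_last_vio} — while the paper compresses both points; the ``technical heart'' you flag (that $j$ remains a chain-maximiser at $\sigma^{\tau^\star-1}$) is exactly the step the paper also handles only briefly, via the identification $\K_j(\sigma^{\tau-1})=\K_{j^\tau}(\sigma^{\tau-1})$ and Observation~\ref{obs:overlap_chain}.
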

\begin{proof}
    Suppose there is an optimal schedule~\(\sigma\) with a construction~\((\sigma^0,\dots,\sigma^T)\) that does not follow the~\((\kappa-\Delta)\)-rule. Let~\(\tau'\) be the last violation of the rule, where for some job~\(j\), we have 
    \[
    \kappa_j(\sigma^{\tau'-1})-\Delta_j(\sigma^{\tau'-1})\geq \kappa_{j'}(\sigma^{\tau'-1})-\Delta_{j'}(\sigma^{\tau'-1})\,,
    \]
    for all~\(j'\in\{1,\dots,n\}\) and
    \[
    \kappa_j(\sigma^{\tau'-1})-\Delta_j(\sigma^{\tau'-1})>\kappa_{j^{\tau'}}(\sigma^{\tau'-1})-\Delta_{j^{\tau'}}(\sigma^{\tau'-1})> 0\,,
    \]
    where the last inequality is a necessary condition to speed up~\(j^{\tau'}\). Now let~\(\tau\geq\tau'\) be such that~\(\kappa_j(\sigma^{\tau-1})-\Delta_j(\sigma^{\tau-1})\geq 0\) and~\(\kappa_j(\sigma^{\tau})-\Delta_j(\sigma^{\tau})< 0\). By the definition of a construction, we must have 
    \[
    \kappa_j(\sigma^{\tau'-1})-\Delta_j(\sigma^{\tau'-1})>\kappa_{j^\tau}(\sigma^{\tau-1})-\Delta_{j^\tau}(\sigma^{\tau'-1})
    \]
    otherwise, we would have to speed up job~\(j^\tau\) in step~\(\tau'\) and prevent a violation. From Observation \ref{obs:no_affect_other_chain}, it follows that~\(\K_j(\sigma^{\tau-1})= \K_{j^\tau}(\sigma^{\tau-1})\), and thus by Observation
    \ref{obs:overlap_chain}, we have
    \[
    \kappa_j(\sigma^{\tau-1})-\Delta_j(\sigma^{\tau-1})>\kappa_{j^\tau}(\sigma^{\tau-1})-\Delta_{j^\tau}(\sigma^{\tau-1})\,.
    \]
    From Lemma \ref{lemma:exchange_last_vio}, we obtain that there is an~\(\varepsilon>0\) such that~\(\Obj(\sigma_{j^\tau j})<\Obj(\sigma)\), which contradicts that~\(\sigma\) is optimal.
\end{proof}

While Corollary \ref{cor:must_follow_KD} shows it is necessary for a construction of an optimal schedule to follow the~\((\kappa-\Delta)\)-rule, it remains to show that it is sufficient (if the schedule also satisfies the conditions from Observation \ref{obs:KD_balance_contruction})). Thus, we cannot yet conclude that the schedule constructed by Algorithm \ref{alg:kappa_delta} is indeed optimal. To complete our proof, we show that any schedule with a construction that follows the~\((\kappa-\Delta)\)-rule that also satisfies the conditions from Observation \ref{obs:KD_balance_contruction} is optimal. To do this, we use Lemma \ref{lemma:exchange_last_vio} to show that for two schedules that both satisfy these conditions, we can make one of them ``more similar" to the other, without changing its objective value. Hence, their objective value must be the same. To describe more concretely what ``more similar" means, we introduce the following notation.

For a schedule~\(\sigma\) with construction~\((\sigma^0,\dots,\sigma^T)\), we denote by~\((\sigma^0,\dots,\sigma^T)_Y\) the first part of the construction until we have shrunk the total processing time by~\(Y\). To be more precise: we have that~\((\sigma^0,\dots,\sigma^T)_Y= (\sigma^0,\dots,\sigma^{\tau-1}, \Tilde{\sigma}^\tau)\) where
\[
 \sum_{j=1}^n x_j(\sigma^{\tau-1})\geq \sum_{j=1}^n x_j(\sigma^{0})-Y > \sum_{j=1}^n x_j(\sigma^{\tau})
\]
and 
\[
x_{j^\tau}(\Tilde{\sigma}^\tau) =  \sum_{j=1}^n x_j(\sigma^{\tau-1})- \left(\sum_{j=1}^n x_j(\sigma^{0}) - Y \right)\,.
\]
When two schedules~\(\sigma\) and~\(\Tilde{\sigma}\) have constructions~\((\sigma^0,\dots,\sigma^T)\) and~\((\Tilde\sigma^0,\dots,\Tilde\sigma^{\Tilde{T}})\) with 
\[
(\Tilde\sigma^0,\dots,\Tilde\sigma^{\Tilde{T}})_Y= (\sigma^0,\dots,\sigma^T)\,
\] 
then we say the constructions agree for~\(Y\).

\begin{lemma}\label{lemma:KD_opt}
    Let~\(\sigma\) be a schedule with~\(\kappa_j(\sigma)-\Delta_j(\sigma)\leq 0\) and~\(\kappa^+_j(\sigma)-\Delta^+_j(\sigma)>0\), and a construction~\((\sigma^0,\dots,\sigma^T)\), that follows the~\((\kappa-\Delta)\)-rule. Then~\(\sigma\) is optimal.
\end{lemma}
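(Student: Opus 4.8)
\textbf{Proof plan for Lemma~\ref{lemma:KD_opt}.}
The plan is to fix a schedule $\sigma$ satisfying the hypotheses with a construction $(\sigma^0,\dots,\sigma^T)$ that follows the $(\kappa-\Delta)$-rule, and to fix an arbitrary optimal schedule $\sigma^*$; by Observation~\ref{obs:KD_balance_contruction} we may assume $\sigma^*$ also satisfies $\kappa_j(\sigma^*)-\Delta_j(\sigma^*)\le 0$ and $\kappa^+_j(\sigma^*)-\Delta^+_j(\sigma^*)>0$ for all $j$, and by Corollary~\ref{cor:must_follow_KD} any construction $(\tilde\sigma^0,\dots,\tilde\sigma^{\tilde T})$ of $\sigma^*$ follows the $(\kappa-\Delta)$-rule. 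First I would observe that the total shrinkage $Y^* := \sum_j x_j(\sigma^0) - \sum_j x_j(\sigma^*)$ of $\sigma^*$ and the total shrinkage $Y := \sum_j x_j(\sigma^0) - \sum_j x_j(\sigma)$ of $\sigma$ must coincide: if $Y < Y^*$ then some job $j$ has $s_j(\sigma)<s_j(\sigma^*)$, and tracking $j$ (and its affection chain) via Observation~\ref{obs:KD_decreases} one shows $\kappa^+_j(\sigma)-\Delta^+_j(\sigma)>0$ forces that job to still want speeding up, contradicting that the construction of $\sigma$ terminated; symmetrically for $Y>Y^*$ using $\kappa_j(\sigma^*)-\Delta_j(\sigma^*)\le 0$. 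So write $Y$ for the common total shrinkage.

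The core of the argument is an exchange/induction on the largest prefix of the two constructions that agree. Consider constructions of $\sigma$ and $\sigma^*$ and let $\tau$ be the first step at which they differ, i.e.\ the prefixes agree for some amount $Y' < Y$ but $j^\tau \ne \tilde j^\tau$ (after reaching the common intermediate schedule $\rho := \sigma^{\tau-1} = \tilde\sigma^{\tau-1}$). Both constructions pick a job maximizing $\kappa_j(\rho)-\Delta_j(\rho)$ among jobs not yet at their final speed, but their sets of "final speeds" differ, so the two maximizers can differ. The plan is to take the job $j := j^\tau$ that $\sigma$'s construction speeds up at step $\tau$, and to show that we can modify $\sigma^*$ — by speeding $j$ up a little and slowing some other job $j'$ down — without increasing the objective, thereby producing a new optimal schedule whose construction agrees with $\sigma$'s for a strictly larger amount $Y'+\eps$. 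Since $\sigma^*$ is optimal this modified schedule is optimal too, and iterating (finitely often, as each step consumes a discrete amount of the construction or changes one $\kappa$ or $\Delta$ value) yields an optimal schedule whose construction agrees with $\sigma$'s all the way to $Y$, hence equals $\sigma$; thus $\sigma$ is optimal. To find $j'$: since $\sigma$'s construction follows the $(\kappa-\Delta)$-rule, $j$ has the maximum $\kappa_j(\rho)-\Delta_j(\rho)$ over \emph{all} jobs not yet at their $\sigma$-final speed; since $\tilde j^\tau\ne j$ while $\sigma^*$'s construction also follows the rule, $j$ must already be at its $\sigma^*$-final speed, i.e.\ $s_j(\sigma^*) > s_j(\rho)$ is false — wait, rather $s_j(\sigma^*)\le s_j(\rho)$, so in $\sigma^*$ job $j$ runs no faster than in $\rho$. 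Then the $(\kappa-\Delta)$-rule for $\sigma^*$ at step $\tau$ (applied to $\tilde j^\tau$) together with $\kappa_j(\rho)-\Delta_j(\rho) \ge \kappa_{\tilde j^\tau}(\rho)-\Delta_{\tilde j^\tau}(\rho)$ and monotonicity (Observation~\ref{obs:KD_decreases}, going from $\rho$ forward along $\sigma^*$'s construction to $\sigma^*$) should give $\kappa_j(\sigma^*)-\Delta_j(\sigma^*) \ge 0$ while $\kappa^+$ of the slowed job can be controlled; invoking Lemma~\ref{lemma:exchange_last_vio} (with $\sigma^*$ in the role of the schedule, its construction, the job $j$, and the step where $j$ last crossed the $\kappa-\Delta=0$ threshold in $\sigma^*$'s construction) produces exactly an $\eps>0$ and a pair $(j^\tau_{\sigma^*}, j)$ whose exchange does not increase $\Obj(\sigma^*)$.

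The main obstacle I expect is the bookkeeping needed to guarantee that the exchange given by Lemma~\ref{lemma:exchange_last_vio} genuinely extends the agreement of the two constructions — in particular that after the exchange $j$'s speed increases past the value it had in $\rho$ in a way that matches $\sigma$'s step $\tau$, and that no earlier part of $\sigma^*$'s construction is disturbed. This requires checking that the job being slowed down ($\tilde j^\tau$ or a descendant in its affection chain) was sped up \emph{after} the common prefix, which follows because along the common prefix both constructions made identical moves; and it requires that the exchange be done in small enough increments that we do not overshoot the next "event" (a change of some $\kappa$ or $\Delta$), so that the construction of the new schedule still literally contains $\sigma$'s prefix as a sub-construction. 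A second, more subtle point is ruling out the degenerate case where Lemma~\ref{lemma:exchange_last_vio}'s inequality is an equality with $\eps$ not extendable; here the hypothesis $\kappa^+_j-\Delta^+_j>0$ (strict) for all jobs in both schedules is what prevents getting "stuck," and I would use it exactly as in the proof of Observation~\ref{obs:KD_balance_contruction} to always have room to push $j$'s speed strictly up. Once these are handled, finiteness of the number of exchanges (each fixes at least one more construction step, and there are $O(n^2k)$ of them as noted for Algorithm~\ref{alg:kappa_delta}) closes the induction.
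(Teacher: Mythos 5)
Your proposal follows the same strategy as the paper's proof: take an optimal schedule whose construction agrees with $\sigma$'s construction for the maximal amount of shrinkage, locate the first divergence, apply Lemma~\ref{lemma:exchange_last_vio} to perform an exchange that preserves optimality while extending the agreement, and derive a contradiction. The main differences are cosmetic or extraneous: your preliminary claim that the total shrinkages of $\sigma$ and $\sigma^*$ must coincide is not needed (the paper's exchange argument works entirely at the level of ``constructions agree for amount $Y$'' and never compares total shrinkages), and the justification you sketch for it --- tracking one slower job's affection chain via Observation~\ref{obs:KD_decreases} --- is not actually watertight; you would be better off dropping it. Also, your inline reasoning at the divergence step (``$j$ must already be at its $\sigma^*$-final speed \dots\ wait, rather $s_j(\sigma^*)\le s_j(\rho)$'') is not correct as stated: since both constructions follow the $(\kappa-\Delta)$-rule, $j$ and $\tilde j^\tau$ may simply be tied at the divergence point, and $j$ may well still need to be sped up later in $\sigma^*$'s construction. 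The paper sidesteps this by not reasoning about $j$'s status at step $\tau'$ directly; instead it jumps ahead to the step $\tau\ge\tau'$ at which $j$'s $\kappa-\Delta$ crosses zero in $\sigma$'s construction, uses that $j$ cannot have been sped up anywhere after $\tau'$ in $\tilde\sigma$'s construction (else agreement would extend), and establishes the \emph{equality} $\kappa_{j^\tau}(\tilde\sigma^{\tau-1})-\Delta_{j^\tau}(\tilde\sigma^{\tau-1})=\kappa_j(\tilde\sigma^{\tau-1})-\Delta_j(\tilde\sigma^{\tau-1})$ by combining the $(\kappa-\Delta)$-rule for $\tilde\sigma$ with the monotonicity of Observation~\ref{obs:KD_decreases} --- that equality is what makes Lemma~\ref{lemma:exchange_last_vio} yield an exchange of equal objective rather than merely one that does not increase it. You correctly anticipate the need for these details under ``main obstacle I expect,'' but if you want a complete proof you should replace the flawed local reasoning with this chain of equalities.
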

\begin{proof}
    Suppose there is a schedule~\(\sigma\) with~\(\kappa_j(\sigma)-\Delta_j(\sigma)\leq 0\) and~\(\kappa^+_j(\sigma)-\Delta^+_j(\sigma)>0\), and a construction~\((\sigma^0,\dots,\sigma^T)\), that follows the~\((\kappa-\Delta)\)-rule but is not optimal. Let~\(\Tilde\sigma\) be an optimal schedule with construction~\((\Tilde\sigma^0,\dots,\Tilde\sigma^{\Tilde{T}})\) agrees with the construction of~\(\sigma\) for~\(Y\). Furthermore, for the sake of contradiction, let there be no optimal schedule with a construction that agrees with the construction of~\(\sigma\) for~\(Y'\) with~\(Y'>Y\). 
    
    Note that if the constructions start deviating from each other during one of the steps, then we can ``split" this step into two and maintain a valid construction. Hence, we may assume that the two constructions agree for exactly~\(\tau'-1\) steps. Let~\(j\) the job sped up in~\(\sigma\) in step~\(\tau'\). Then we must have that~\(\kappa_j(\sigma^{\tau'-1})-\Delta_j(\sigma^{\tau'-1})\geq 0\). Hence, there must be some step~\(\tau\geq \tau'\) such that~\(\kappa_j(\sigma^{\tau-1})-\Delta_j(\sigma^{\tau-1})> 0\) and~\(\kappa_j(\sigma^{\tau})-\Delta_j(\sigma^{\tau})\leq 0\). Let~\(j^\tau\) be the job sped up in in step~\(\tau\) of the construction of~\(\Tilde{\sigma}\). Then by Observation \ref{obs:no_affect_other_chain}, we have~\(\K_j(\Tilde\sigma^{\tau-1})=\K_{j^\tau}(\Tilde\sigma^{\tau-1})\), and it follows from Observation \ref{obs:overlap_chain} that 
    \[
    \kappa_j(\Tilde{\sigma}^{\tau'-1})- \kappa_j(\Tilde{\sigma}^{\tau-1})  = \kappa_{j^\tau}(\Tilde{\sigma}^{\tau'-1})- \kappa_{j^\tau}(\Tilde{\sigma}^{\tau-1})\,.
    \]
    Note that~\(j\) cannot be sped up in any of the steps starting from~\(\tau'\), since otherwise we could have a construction that agrees with the construction of~\(\sigma\) for more than~\(Y\). Hence, we have~\(\Delta_j(\Tilde\sigma^{\tau-1})=\Delta_j(\Tilde\sigma^{\tau -1})\). Since~\(\Tilde{\sigma}\) is optimal, it follows from Corollary \ref{cor:must_follow_KD} that~\(\Tilde{\sigma}\) follows the~\((\kappa-\Delta)\)-rule. Thus, we have that
    \begin{align*}
        \kappa_{j^{\tau}}(\Tilde{\sigma}^{\tau-1})- \Delta_{j^{\tau}}(\Tilde{\sigma}^{\tau-1})&\geq \kappa_j(\Tilde{\sigma}^{\tau-1})- \Delta_j(\Tilde{\sigma}^{\tau-1}) = \kappa_j(\Tilde{\sigma}^{\tau-1})- \Delta_j(\Tilde{\sigma}^{\tau'-1})\\ 
        &= \kappa_j(\Tilde{\sigma}^{\tau-1})- \kappa_j(\Tilde{\sigma}^{\tau'-1}) + \kappa_j(\Tilde{\sigma}^{\tau'-1})  -  \Delta_j(\Tilde{\sigma}^{\tau'-1}) \\
        &= \kappa_{j^\tau}(\Tilde{\sigma}^{\tau-1})- \kappa_{j^\tau}(\Tilde{\sigma}^{\tau'-1}) + \kappa_j(\Tilde{\sigma}^{\tau'-1})  -  \Delta_j(\Tilde{\sigma}^{\tau'-1}) \\
        &\geq \kappa_{j^\tau}(\Tilde{\sigma}^{\tau-1})- \kappa_{j^\tau}(\Tilde{\sigma}^{\tau'-1}) + \kappa_{j^\tau}(\Tilde{\sigma}^{\tau'-1})  -  \Delta_{j^\tau}(\Tilde{\sigma}^{\tau'-1}) \\
        &= \kappa_{j^\tau}(\Tilde{\sigma}^{\tau-1})  -  \Delta_{j^\tau}(\Tilde{\sigma}^{\tau'-1})\,.
    \end{align*}
    By Observation \ref{obs:KD_decreases}, we have~\(\Delta_{j^\tau}(\Tilde{\sigma}^{\tau'-1})\leq \Delta_{j^{\tau}}(\Tilde{\sigma}^{\tau-1})\), and thus we have that 
    \[
    \kappa_{j^{\tau}}(\Tilde{\sigma}^{\tau-1})- \Delta_{j^{\tau}}(\Tilde{\sigma}^{\tau-1})= \kappa_j(\Tilde{\sigma}^{\tau-1})- \Delta_j(\Tilde{\sigma}^{\tau-1})\,.
    \] 
    By Lemma \ref{lemma:exchange_last_vio}, we have that~\(\Obj(\Tilde\sigma_{j^\tau j}^\varepsilon)=\Obj(\Tilde\sigma)\). Thus~\(\Tilde\sigma_{j^\tau j}^\varepsilon\) is also an optimal schedule and has a construction that agrees for~\(\varepsilon\) more shrinkage in processing time with step~\(\tau\) of the construction of~\(\sigma\). This gives us an optimal schedule with a construction that agrees with the construction of~\(\sigma\) for some~\(Y'\) with~\(Y'>Y\), leading to a contradiction with our choice of~\(\Tilde{\sigma}\).
\end{proof}

Since Algorithm \ref{alg:kappa_delta} always follows the~\((\kappa-\Delta)\)-rule and yields a schedule~\(\sigma\) that satisfies~\(\kappa_j(\sigma)-\Delta_j(\sigma)\leq 0\) and~\(\kappa^+_j(\sigma)-\Delta^+_j(\sigma)>0\) for all~\(j\in\{1,\dots,n\}\), the correctness of Algorithm \ref{alg:kappa_delta} follows directly from Lemma \ref{lemma:KD_opt}.

 Finally, let us consider the running time of Algorithm \ref{alg:kappa_delta}. Since the steps of the algorithm constitute a construction, it follows from Observation~\ref{obs:KD_decreases} for any job~\(j\) we have that~\(\kappa_{j}\) never increases and~\(\Delta_{j}\) never decreases. Thus, a step where we speed up job~\(j\) ends when either~\(\kappa_j\) or~\(\Delta_j\) decreases. Since the affection of any job~\(j\) is upper bounded by~\(n\), it follows that~\(\kappa_j\) cannot decrease more than~\(n\). Similarly,~\(\Delta_j\) cannot increase more than~\(k\). It follows that the number of steps is~\(O(n^3k)\). Since we can also compute in polynomial time how much we can increase the speed of a job until it either reaches the next fixed speed or loses affection, it follows that Algorithm \ref{alg:kappa_delta} runs in polynomial time, proving \textbf{Theorem \ref{thm:kappa_delta}}. 
\end{document}